\documentclass[11pt]{article}

\usepackage{booktabs} 
\usepackage{graphicx}
\usepackage{subcaption}
\usepackage{microtype}

\usepackage{hyperref}
\hypersetup{hypertexnames=false}

\let\MR\relax

\usepackage{mathtools}

\usepackage{adjustbox}

\usepackage{amsmath}
\usepackage{amssymb}
\usepackage{amsthm}
\usepackage{thmtools}
\usepackage{bm}
\usepackage{xcolor,colortbl}
\usepackage{dsfont}
\usepackage{authblk}
\usepackage{nicefrac}
\usepackage{complexity}

\usepackage{tikz}
\usepackage{forest}
\usetikzlibrary{shapes.misc,positioning,matrix,arrows.meta}
\usepackage{bbm}
\usepackage{complexity}
\usepackage{float}
\usepackage{mathrsfs}
\usepackage{nicematrix}

\usepackage{pgfplots}
\pgfplotsset{compat=newest}

\usepackage{mleftright}
\usepackage{thm-restate}
\usepackage{physics}
\usepackage[capitalize,noabbrev,nameinlink]{cleveref}
\usepackage[shortlabels]{enumitem}
\usepackage{xspace}

\usepackage{fullpage}
\usepackage{parskip}
\usepackage{MnSymbol}

\crefname{item}{Item}{Item}

\newcounter{qst}
\crefname{qst}{Question}{Questions}

\newcommand{\declarecolor}[2]{\definecolor{#1}{RGB}{#2}\expandafter\newcommand\csname #1\endcsname[1]{\textcolor{#1}{##1}}}
\declarecolor{White}{255, 255, 255}
\declarecolor{Black}{0, 0, 0}
\declarecolor{Maroon}{128, 0, 0}
\declarecolor{Coral}{255, 127, 80}
\declarecolor{Red}{182, 21, 21}
\declarecolor{LimeGreen}{50, 205, 50}
\declarecolor{DarkGreen}{0, 80, 0}
\declarecolor{Purple}{146, 42, 158}
\declarecolor{Navy}{0, 0, 128}
\declarecolor{LightBlue}{84, 101, 202}
\definecolor{mydarkblue}{rgb}{0,0.08,0.45}

\makeatletter
\patchcmd\algocf@Vline{\vrule}{\vrule \kern-0.4pt}{}{}
\patchcmd\algocf@Vsline{\vrule}{\vrule \kern-0.4pt}{}{}
\makeatother

\makeatletter
\let\cref@old@stepcounter\stepcounter
\def\stepcounter#1{%
  \cref@old@stepcounter{#1}%
  \cref@constructprefix{#1}{\cref@result}%
  \@ifundefined{cref@#1@alias}%
    {\def\@tempa{#1}}%
    {\def\@tempa{\csname cref@#1@alias\endcsname}}%
  \protected@edef\cref@currentlabel{%
    [\@tempa][\arabic{#1}][\cref@result]%
    \csname p@#1\endcsname\csname the#1\endcsname}}
\makeatother

\theoremstyle{plain}
\newtheorem{theorem}{Theorem}[section]
\newtheorem*{theorem*}{Theorem}
\newtheorem{lemma}[theorem]{Lemma}

\newtheorem{proposition}[theorem]{Proposition}

\newtheorem{property}[theorem]{Property}

\theoremstyle{definition}
\newtheorem{definition}[theorem]{Definition}

\theoremstyle{remark}
\newtheorem{remark}[theorem]{Remark}
\newtheorem{example}[theorem]{Example}

\let\E\relax

\newcommand{\tlow}[1]{\underline{t_{#1}}}
\newcommand{\thigh}[1]{\overline{t_{#1}}}

\newcommand{\period}[1]{[\tlow{#1}, \thigh{#1}]}
\newcommand{\perm}[1]{\mathfrak{S}_{#1}}

\newcommand{\vx}{\vec{x}}
\newcommand{\tvx}{\tilde{\vec{x}}}

\newcommand{\vy}{\vec{y}}
\renewcommand{\vu}{\vec{u}}

\newcommand{\proj}{\Pi}

\newcommand{\regcon}{R}
\newcommand{\etacon}{1}
\newcommand{\vecx}[2]{\vx_{#1}^{(#2)}\xspace}

\usepackage{natbib}
\newcommand{\xstar}{\vec{x}^{\star}\xspace}

\newcommand{\innerprod}[2]{
\left\langle  #1, #2 \right\rangle \xspace
}

\newcommand*{\N}{{\mathbb{N}}}

\let\R\relax
\newcommand*{\R}{{\mathbb{R}}}
\newcommand*{\E}{\operatornamewithlimits{\mathbb E}}

\newcommand*{\cX}{{\mathcal{X}}}

\newcommand*{\cR}{{\mathcal{R}}}

\newcommand*{\cA}{{\mathcal{A}}}

\newcommand{\defeq}{\coloneqq}

\newcommand{\reg}{\mathsf{Reg}}

\mathchardef\mathhyphen="2D

\NewDocumentCommand{\treeset}{o}{\mathbb{T}\IfNoValueF{#1}{_{#1}}}

\newcommand{\mwu}{\texttt{MWU}\xspace}

\newcommand{\md}{\texttt{MD}\xspace}
\newcommand{\ftrl}{\texttt{FTRL}\xspace}

\newcommand{\fictp}{\texttt{FP}}

\newcommand{\gap}[3]{\texttt{Gap}_{#1}^{(#2)}[#3]}

\DeclarePairedDelimiterX{\infdivx}[2]{(}{)}{%
  #1\;\delimsize\|\;#2%
}

\renewcommand{\vec}[1]{\bm{#1}}
\newcommand{\mat}[1]{\mathbf{#1}}

\DeclarePairedDelimiterX{\card}[1]{\lvert}{\rvert}{#1}
\DeclarePairedDelimiterX{\tuple}[1]{\lparen}{\rparen}{#1}
\DeclarePairedDelimiterX{\parens}[1]{\lparen}{\rparen}{#1}
\DeclarePairedDelimiterX{\brackets}[1]{\lbrack}{\rbrack}{#1}
\DeclarePairedDelimiterX{\set}[1]\{\}{#1}
\let\Pr\relax
\DeclarePairedDelimiterXPP{\Pr}[1]{\mathbb{P}}[]{}{#1}
\DeclarePairedDelimiterXPP{\PrX}[2]{\mathbb{P}_{#1}}[]{}{#2}
\DeclarePairedDelimiterXPP{\Ex}[1]{\mathbb{E}}[]{}{#1}
\DeclarePairedDelimiterXPP{\ExX}[2]{\mathbb{E}_{#1}}[]{}{#2}

\hypersetup{ %
    pdftitle={},
    pdfkeywords={},
    pdfborder=0 0 0,
    pdfpagemode=UseNone,
    colorlinks=true,
    linkcolor=Maroon,
    citecolor=Navy,
    filecolor=Purple,
    urlcolor=Purple,
}

\usetikzlibrary{fit,shapes.misc,arrows.meta}
\makeatletter
\tikzset{
  fitting node/.style={
    inner sep=0pt,
    fill=none,
    draw=none,
    reset transform,
    fit={(\pgf@pathminx,\pgf@pathminy) (\pgf@pathmaxx,\pgf@pathmaxy)}
  },
  reset transform/.code={\pgftransformreset}
}
\tikzset{cross/.style={path picture={
  \draw[black]
(path picture bounding box.south east) -- (path picture bounding box.north west) (path picture bounding box.south west) -- (path picture bounding box.north east);
}}}
\tikzstyle{ox}=[semithick,draw=black,circle,cross,inner sep=1.2mm]
\makeatother

\newcommand{\nc}{\newcommand}
\newcount\Comments
\nc\io[1]{\ifnum\Comments=1 {\textcolor{purple}{[ioannis: #1]}}\fi}
\nc\br[1]{\ifnum\Comments=1 {\textcolor{teal}{[brian: #1]}}\fi}
\nc\et[1]{\ifnum\Comments=1 {\textcolor{orange}{[ET: #1]}}\fi}

\nc{\Opthedge}{OMWU\xspace}

\nc{\DMO}{\DeclareMathOperator}
\nc\old[1]{\textcolor{brown}{[old: #1]}}
\nc{\BR}{\mathbb{R}}
\nc{\BC}{\mathbb{C}}
\DMO{\Bin}{Bin}
\nc{\BN}{\mathbb{N}}
\nc{\distrs}[1]{\Delta({#1})}
\nc{\BZ}{\mathbb{Z}}
\nc{\ep}{\epsilon}
\nc{\ra}{\rightarrow}
\nc{\st}{\star}
\nc{\Reg}[2]{\REG_{{#1},{#2}}}
\nc{\til}{\tilde}
\nc{\kld}[2]{\KL({#1};{#2})}
\nc{\chisq}[2]{\chi^2({#1};{#2})}
\DMO{\POLYLOG}{polylog}

\nc{\matx}[1]{\left(\begin{matrix}#1\end{matrix}\right)}
\DMO{\VAR}{Var}
\DMO{\COV}{Cov}
\nc{\Var}[2]{\VAR_{{#1}}\left({#2}\right)}
\nc{\Cov}[3]{\COV_{{#1}}\left({#2},{#3}\right)}
\DMO{\DD}{D}
\nc{\fd}[2]{\DD_{#1}{#2}}
\nc{\fds}[3]{\left(\fd{#1}{#2}\right)\^{#3}}
\nc{\fdc}[2]{\DD^\circ_{#1}{#2}}
\nc{\fdcs}[3]{\left(\fdc{#1}{#2}\right)\^{#3}}
\nc{\shf}[2]{\EEE_{#1}{#2}}
\nc{\shfs}[3]{\left(\shf{#1}{#2}\right)\^{#3}}
\nc{\normst}[2]{\left\| {#2} \right\|_{#1}^\st}
\renewcommand{\^}[1]{^{(#1)}}
\DeclareMathOperator*{\argmax}{arg\,max}

\nc{\lng}{\langle}
\nc{\rng}{\rangle}
\nc{\bbone}{\mathbf{1}}
\nc{\bbzero}{\mathbf{0}}
\nc{\MD}{\mathcal{D}}
\nc{\MM}{\mathcal{M}}
\nc{\MZ}{\mathcal{Z}}
\nc{\MU}{\mathcal{U}}
\nc{\MR}{\mathcal{R}}
\nc{\MC}{\mathcal{C}}
\nc{\MT}{\mathbb{T}^{n}}
\nc{\MS}{\mathcal{S}}
\nc{\MX}{\mathcal{X}}
\nc{\MY}{\mathcal{Y}}
\nc{\MB}{\mathcal{B}}
\nc{\MJ}{\mathcal{J}}
\nc{\MF}{\mathcal{F}}
\nc{\MG}{\mathcal{G}}
\nc{\ML}{\mathcal{L}}
\nc{\MQ}{\mathcal{Q}}

\nc{\ba}{\mathbf{A}}
\nc{\bx}{\mathbf{x}}
\nc{\by}{\mathbf{y}}
\nc{\bz}{\mathbf{z}}
\nc{\bs}{\mathbf{s}}
\nc{\bt}{\mathbf{t}}

\nc{\ME}{\mathcal{E}}
\DMO{\View}{View}
\DMO{\KL}{KL}
\nc{\MW}{\mathcal{W}}
\nc{\CS}{\mathscr{S}}
\nc{\CI}{\mathscr{I}}
\nc{\CQ}{\mathscr{Q}}
\nc{\CL}{\mathscr{L}}
\nc{\CM}{\mathscr{M}}
\nc{\CG}{\mathscr{G}}
\nc{\CR}{\mathscr{R}}

\nc{\wh}{\widehat}

\nc{\BM}{BM\xspace}
\nc{\ALG}{\texttt{ALG}}
\nc{\MCT}{{\rm MCT}}

\nc{\matrixLL}{\mat{L}}
\nc{\vectorecks}{\vec{x}}
\nc{\vectorLL}{\vec{\ell}}
\nc{\matrixKYU}{\mat{Q}}
\nc{\rowdot}{\cdot}

\nc{\X}{\mathcal{X}}
\nc{\Y}{\mathcal{Y}}

\newcommand{\delimit}[3]{\newcommand{#1}[1]{\left#2##1\right#3}}
\delimit \ceil \lceil \rceil
\delimit \floor \lfloor \rfloor

\definecolor{p0color}{RGB}{0,0,0}
\definecolor{p1color}{RGB}{31,119,180}
\definecolor{p2color}{RGB}{214,39,40}

\def\ve{{\bm{e}}}

\def\vu{{\bm{u}}}

\def\vx{{\bm{x}}}
\def\vy{{\bm{y}}}

\renewcommand\vec\bm

\def\mA{{\mathbf{A}}}
\def\mB{{\mathbf{B}}}

\newcommand{\Phirange}{\Phi_{\mathsf{range}}}

\newcommand{\idx}{p}

\newcommand{\calT}{\mathcal{T}}

\title{(Doubly) Exponential Lower Bounds for Follow the Regularized Leader in Potential Games}

\author[1]{Ioannis Anagnostides\thanks{These authors contributed equally.}}
\author[2]{Ioannis Panageas}
\author[2]{Nikolas Patris\protect\footnotemark[1]}
\author[1,3]{Tuomas Sandholm}

\affil[1]{Carnegie Mellon University}
\affil[2]{University of California, Irvine}
\affil[3]{\small Additional affiliations: Strategy Robot, Inc., Strategic Machine, Inc., Optimized Markets, Inc.}
\affil[ ]{\texttt{\{ianagnos,sandholm\}}\texttt{@cs.cmu.edu}, \texttt{\{ipanagea,npatris\}}\texttt{@uci.edu}}

\begin{document}

\maketitle

\begin{abstract}
    \emph{Follow the regularized leader $(\ftrl)$} is the premier algorithm for online optimization. However, despite decades of research on its convergence in constrained optimization---and potential games in particular---its behavior remained hitherto poorly understood. In this paper, we establish that $\ftrl$ can take exponential time to converge to a Nash equilibrium in two-player potential games for any (permutation-invariant) regularizer and potentially vanishing learning rate. By known equivalences, this translates to an exponential lower bound for certain mirror descent counterparts, most notably multiplicative weights update. On the positive side, we establish the potential property for $\ftrl$ and obtain an exponential upper bound $\exp(O_{\epsilon}(1/\epsilon^2))$ for any no-regret dynamics executed in a lazy, alternating fashion, matching our lower bound up to factors in the exponent. Finally, in multi-player potential games, we show that fictitious play---the extreme version of $\ftrl$---can take \emph{doubly} exponential time to reach a Nash equilibrium. This constitutes an exponentially stronger lower bound for the foundational learning algorithm in games.
\end{abstract}

\section{Introduction}
\label{sec:introduction}

\emph{Multiplicative weights update ($\mwu$)} is the quintessential online algorithm~\citep{Littlestone94:Weighted}, attaining the optimal \emph{regret} bound in many fundamental online learning problems~\citep{Fan25:Universal}. It has found applications in diverse areas ranging from approximation algorithms to boosting in machine learning~\citep{Arora12:Multiplicative}. $\mwu$ is an instantiation of the seminal online learning paradigm \emph{follow the regularized leader $(\ftrl)$}~\citep{Kalai05:Efficient}. Viewed differently, $\mwu$ can also be cast as an instance of online \emph{mirror descent $(\md)$}~\citep{Nemirovski83:Problem}.

A celebrated connection inextricably links the no-regret property, attained by algorithms such as $\ftrl$ and $\md$, to \emph{coarse correlated equilibria (CCEs)}, a seminal game-theoretic solution concept~\citep{Moulin78:Strategically,Aumann74:Subjectivity}. Specifically, if each player in a multi-player game employs a no-regret algorithm to update its strategy, the \emph{average} correlated distribution of play converges to the set of CCEs. This connection, however, provides no information about the \emph{last-iterate} convergence of the dynamics.

Characterizing the convergence of $\mwu$ and other online algorithms has received extensive attention. \citet{Kleinberg09:Multiplicative} first established that $\mwu$ converges to Nash equilibria for almost all \emph{potential} games. In this setting, players are effectively ascending a global potential function; thus, the dynamics can be viewed through the lens of constrained optimization, with Nash equilibria corresponding to first-order stationary points. Non-asymptotic rates soon emerged for some versions of $\md$, most notably gradient descent, but despite decades of research on this topic, a major question remained hitherto wide open:
\begin{quote}
    \centering
    \emph{How many iterations are needed for $\ftrl$ algorithms to converge in potential games?}
\end{quote}
Viewed differently, can $\mwu$ or $\ftrl$ efficiently find first-order stationary points in constrained optimization?

Although $\mwu$ has been shown to asymptotically converge in potential games---at least when the set of Nash equilibria comprises isolated points, a condition that holds for almost all potential games~\citep{Kleinberg09:Multiplicative,Palaiopanos17:Multiplicative}---this result has not been extended to the general class of $\ftrl$. $\ftrl$ presents distinct challenges that complicate the analysis compared to algorithms such as gradient descent. For example, it can exhibit \emph{spurious fixed points}: due to its history dependence, the dynamics may stagnate even when the current strategy is highly suboptimal relative to the observed utility (\Cref{example:spuriousFPs}).

\subsection{Our results}

We provide new upper and lower bounds for the convergence of $\ftrl$ in potential games and constrained optimization.


Our first positive result goes beyond $\ftrl$ and covers the entire class of no-regret dynamics. We establish a non-asymptotic rate of convergence to Nash equilibria, provided the updates are alternating and \emph{lazy}---the player performs an update only when the proposed strategy guarantees a sufficient improvement. 

\begin{theorem}
    \label{theorem:positive}
    In any potential game, alternating $\epsilon$-lazy no-regret dynamics converge to an $\epsilon$-Nash equilibrium after at most $\exp(O_\epsilon(1/\epsilon^2))$ iterations.
\end{theorem}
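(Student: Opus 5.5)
The plan is a potential-function argument combined with the no-regret guarantee. First we need to fix what ``alternating $\epsilon$-lazy'' means. Players update one at a time, in a fixed or round-robin order. When it is player $i$'s turn, its no-regret algorithm proposes a strategy $\vx_i'$. The player actually moves to $\vx_i'$ only if this raises its utility, against the current strategies of the others, by at least $\epsilon$ (or some fixed $\Theta(\epsilon)$ threshold). Otherwise it keeps its current strategy. The lazy switches are the driving force: since the game is a potential game, every accepted switch raises the potential $\Phi$ by at least $\epsilon$. Because $\Phi$ is bounded, say $\Phi \in [0,1]$ after normalizing utilities, at most $1/\epsilon$ switches can occur in total.

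The key step is to bound the length of a \emph{stagnation phase}: a maximal stretch of rounds in which the joint strategy $\vx$ stays fixed. Suppose the fixed profile $\vx$ is not an $\epsilon$-Nash equilibrium. Then some player $i$ has a deviation $\vx_i^\star$ with $u_i(\vx_i^\star,\vx_{-i}) \ge u_i(\vx) + \epsilon$. While nothing moves, player $i$ faces the same utility vector $u_i(\cdot,\vx_{-i})$ in every round, and it keeps playing $\vx_i$. Its internal algorithm, however, keeps proposing strategies, and none of them is accepted.

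We then use the no-regret guarantee of that internal algorithm, run on the constant loss sequence. Over $T$ rounds the proposals $\vx_i'^{(t)}$ satisfy
\[
\sum_{t} u_i(\vx_i'^{(t)},\vx_{-i}) \ge T\, u_i(\vx_i^\star,\vx_{-i}) - \reg(T).
\]
So some proposal must exceed $u_i(\vx)+\epsilon$ by a margin (using threshold $\epsilon/2$ if needed) once $\reg(T)/T < \epsilon/2$, and that proposal would have been accepted. This gives a contradiction.

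The subtlety, and what I expect to be the main obstacle, is that the internal algorithm's history is not reset at a phase boundary. It carries all previous utility vectors, which is exactly what creates the spurious-fixed-point issue of \Cref{example:spuriousFPs}. The regret bound over a window $[s,s+T]$ therefore cannot be applied naively, because the algorithm only guarantees regret from time $1$.

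To handle this, I would apply the regret bound over the whole horizon $[1,t]$ and treat the past as adversarial. The accumulated past can bias the proposal by at most the length of the past, since utilities are bounded. For the average to reach the target one then needs the current phase length to dominate the total elapsed time $t_{\mathrm{prev}}$, roughly $T \gtrsim t_{\mathrm{prev}}/\epsilon + \reg(t)/\epsilon$. Hence each phase can last at most about $C/\epsilon$ times the elapsed time so far, and the total time multiplies by a factor $O(1/\epsilon)$ per phase. With at most $1/\epsilon$ phases this yields $(C/\epsilon)^{O(1/\epsilon)}$, but the regret term adds a further loss. Using a sublinear regret bound $\reg(t)=O(\sqrt{t})$ in the phase-length recursion, the exponent picks up another $1/\epsilon$ factor, giving $\exp(O(1/\epsilon^2))$ overall.

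Finally, once a phase lasts longer than this bound, the fixed profile must be an $\epsilon$-Nash equilibrium, so the total number of iterations is bounded as claimed. The delicate step is making the cross-phase regret accounting rigorous, including player-specific time indexing under alternation; I expect most of the effort to go there.
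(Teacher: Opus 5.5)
Your proposal is correct and follows essentially the same route as the paper. The potential caps the number of accepted lazy updates at $\Phirange/\epsilon$. Applying the no-regret bound over the whole horizon $[1,t]$, while charging the pre-phase history at most $T_k B D_i$, gives the recursion $T_{k+1} \le T_k(1+P/\epsilon)$, with the same factor-$2$ slack between the laziness threshold and the equilibrium gap that the paper uses (its precise statement is for a $2\epsilon$-Nash equilibrium).

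One minor correction: the regret term does not add another $1/\epsilon$ to the exponent. It only forces a burn-in, polynomial in $1/\epsilon$, after which $\reg_i^{(t)} \le \epsilon t/2$ and the regret is absorbed into the per-phase factor. The paper's $\exp(O(1/\epsilon^2))$ comes simply from the crude bound $(1+P/\epsilon)^{\Phirange/\epsilon} \le \exp(P\Phirange/\epsilon^2)$, so your $(C/\epsilon)^{O(1/\epsilon)}$ estimate already implies the claimed bound.
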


\begin{figure*}[t]
    \centering
    \includegraphics[scale=0.6]{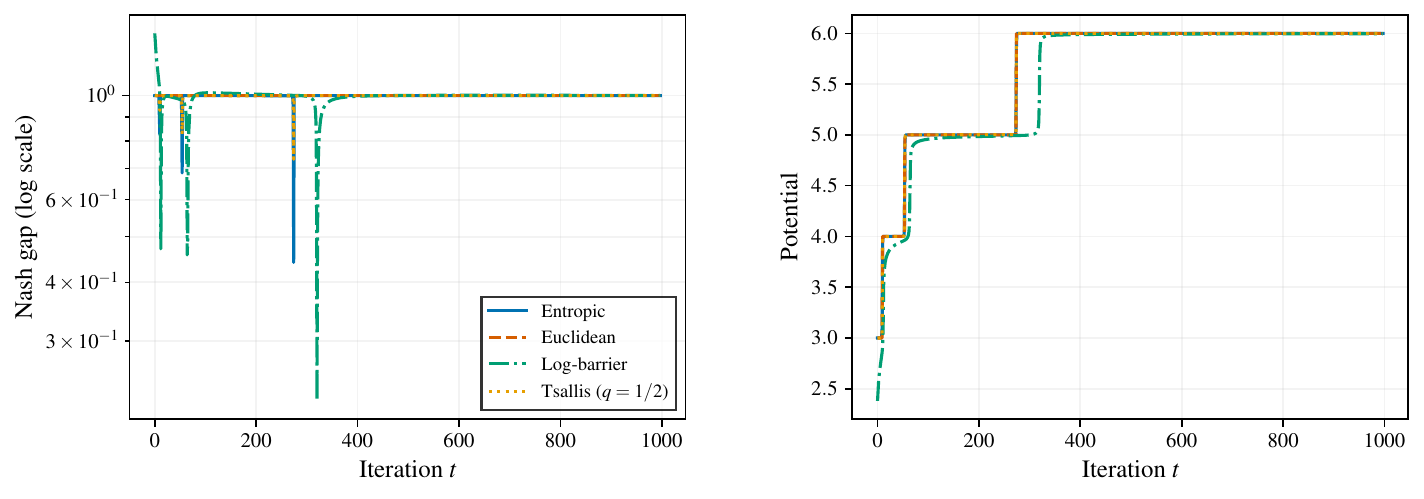}
    \caption{Illustration of our results: $\ftrl$ algorithms have the potential property (right), but can take exponential time to converge (left).}
    \label{fig:algs}
\end{figure*}

The proof proceeds by bounding the number of updates that can occur and then deriving a recursion for the time required between successive updates, paving the way to~\Cref{theorem:positive}. To put this into context, it is well-known that no-regret dynamics can fail to converge to Nash equilibria in potential games~\citep{Kleinberg09:Multiplicative}; \Cref{theorem:positive} offers a natural way to circumvent those impossibility results.

Turning to the special case of $\ftrl$, we show that, for sufficiently small learning rate, the potential value is nondecreasing (\Cref{fig:algs}, right), and this does not require alternation nor laziness (\Cref{prop:pot-improvement}). As a result, $\ftrl$ is bound to converge to a set in which the value of the potential is constant.

While~\Cref{theorem:positive} only establishes an exponential upper bound in $1/\epsilon^2$, our next result justifies this by establishing a fundamental barrier for $\ftrl$ dynamics, which manifests itself with or without alternation or laziness.

\begin{theorem}
\label{theorem:negative}
    In two-player $m \times m$ potential games, $\ftrl$ can take $2^{\Omega(m \log m)}$ iterations to converge to an $\epsilon$-Nash equilibrium for $\epsilon = 1/\poly(m)$. This holds for any permutation-invariant regularizer and learning rate $\eta^{(t)} = 1/t^\alpha$ for $\alpha \in [0, 1)$.
\end{theorem}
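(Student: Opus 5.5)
We will build an explicit two-player $m\times m$ identical-interest (hence potential) game in which $\ftrl$ is forced to follow a long ``improvement path'' through pure strategy profiles, visiting $2^{\Omega(m\log m)}$ of them before it reaches anything close to an $\epsilon$-Nash equilibrium. The natural template is a snake-like sequence of better-response moves. Player~1 and player~2 alternately switch to a new action, and each switch raises the common payoff by a small increment $\delta = 1/\poly(m)$. We would not encode a path of length $m^2$ directly. Instead we use the history dependence of $\ftrl$: the cumulative utility vector $\sum_\tau \eta^{(\tau)}\vec{u}^{(\tau)}$ serves as a memory, and the \emph{order} in which the $m$ actions overtake one another encodes a permutation. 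The target is to make the dynamics enumerate a constant fraction of the $m! = 2^{\Theta(m\log m)}$ orderings, like an odometer. Each ordering change forces at least one iteration, and while it is in progress the current profile remains $\epsilon$-far from equilibrium.

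The key steps, in order, are as follows.
\begin{enumerate}[(i)]
\item Reduce to an argmax-type analysis. Because the regularizer is permutation-invariant, the $\ftrl$ output $\vx^{(t)}$ is a symmetric function of the cumulative utilities. In particular, coordinates with larger cumulative utility receive at least as much mass, so the ranking of cumulative utilities determines which actions dominate. We will show that when the gaps between cumulative utilities exceed a threshold depending on the regularizer and on $\eta^{(t)}$, the strategy is $\epsilon$-close to pure. For $\eta^{(t)}=1/t^\alpha$ with $\alpha<1$, the sum $\sum_\tau\eta^{(\tau)}$ diverges, so the gaps keep growing and this closeness is eventually guaranteed. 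The exponent $\alpha<1$ is exactly what we need here.
\item Design the payoff matrix. The matrix $\mat{A}$ should have an entry structure (for instance $A_{ij}$ depending on $i-j \bmod m$, together with a slowly increasing diagonal ramp) such that, given the opponent's current dominant action, the utilities act as a counter. Each action's accumulated advantage must be overtaken in a prescribed cyclic order, with the overtaking times growing geometrically or factorially across levels, in the manner of a mixed-radix counter.
\item Prove by induction over ``phases'' that the ranking of cumulative utilities follows the prescribed sequence. During each phase, the played profile $(i,j)$ is one in which some action gives a gain of at least $\epsilon$, so no $\epsilon$-Nash equilibrium is reached until all phases have finished.
\item Count the phases to obtain the $2^{\Omega(m\log m)}$ bound.
\end{enumerate}

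The main obstacle is step~(iii): keeping the construction valid uniformly over all permutation-invariant regularizers and all schedules of the learning rate. The strategies are only approximately pure, so the off-argmax mass creates perturbations that accumulate over exponentially many rounds and could reorder the counter. To control this, we would first choose payoff gaps of order $1/\poly(m)$ that are much larger than $\epsilon$. We would then argue robustly, using only the monotonicity from~(i). It never matters what the exact mixed strategy is, only which action holds the majority mass. This should make each phase transition depend only on comparisons of cumulative sums, which we can verify with explicit inequalities.
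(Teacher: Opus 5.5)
You have a genuine gap: counting ordering changes at one iteration each cannot give an exponential bound under the very structure you propose. In your design both players are essentially pure, they switch alternately, and each switch raises the common payoff by $\delta>0$. The sequence of dominant profiles is then a strictly improving path in an $m\times m$ identical-interest game, so it visits no profile twice and has at most $m^2$ steps. Between two consecutive switches of the opponent's dominant action $j$, your utility vector is a fixed column $\mA[\cdot,j]$, up to the off-argmax mass that step (iii) explicitly ignores. So each cumulative difference $U[a]-U[a']$ changes at the constant rate $\mA[a,j]-\mA[a',j]$, and each pair of actions crosses at most once. That gives at most $\binom{m}{2}$ ordering changes per segment and $O(m^4)$ in total. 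An odometer running through a constant fraction of the $m!$ orderings would need exponentially many changes of the dominant profile, which an improving path forbids. Step (ii) is therefore not merely unspecified: no matrix can realize it within the majority-mass-only framework of (i) and (iii).

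The missing idea is that the exponential must come from the \emph{durations} of phases, not from their number. In the paper there are only $2m-3$ periods, along the spiral of $\mB_{m-1,2}$ with payoffs increasing by $1$. Because of FTRL's inertia, the next action $a_1(k)$ has accumulated a cumulative-utility deficit of order $\sum_{\ell\le k-2}(\ell-1)T_\ell$ over the whole history. It can shrink this deficit by only about $1$ per round, which gives $T_k+T_{k-1}\ge\frac12\sum_{\ell=4}^{k-2}(\ell-2)T_\ell$ and hence $(m-3)!$ growth.

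Your plan also misses two further ingredients. First, permutation invariance forces the first iterate to be uniform, so it already puts weight on the maximum-payoff profile. The paper needs a gadget, an extra row and column with large negative entries, to start the dynamics at the beginning of the spiral. The gadget also makes the first period long, $T_4\ge\gamma^{(1)}/2$, so that gaps growing linearly in $t$ dominate the $t^{\alpha}$ effect in the bound $\vx[a]\le R/(\eta^{(t)}\gamma)$. Divergence of $\sum_\tau\eta^{(\tau)}$ is not the relevant quantity here. Second, at every transition one player is genuinely mixed between two actions, so ``close to pure'' fails exactly when it matters. The paper handles this with period consistency: the switch completes within $O(t^{\alpha})$ rounds, while the opponent's reaction requires overcoming a gap that is $\Omega(t)$.
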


Crucially, this theorem applies to \emph{any} member of $\ftrl$ with a permutation invariant regularizer, which encompasses most common incarnations of $\ftrl$ (\Cref{fig:algs}, left). By virtue of known equivalences, \Cref{theorem:negative} thereby circumscribes specific $\md$ algorithms as well, most notably $\mwu$. Furthermore, \Cref{theorem:negative} holds even under a vanishing learning rate $\eta^{(t)} = 1/t^\alpha$ for $\alpha \in [0, 1)$. The analytical challenge in this regime stems from the propensity of $\ftrl$ to mix actions, especially when the learning rate is close to zero. Another notable aspect of our lower bound construction is that it tightly mirrors the mechanism driving our upper bound.

Finally, we turn our attention to $n$-player potential games. We analyze \emph{fictitious play ($\fictp$)}~\citep{Robinson51:iterative,Brown51:Iterative}, the foundational learning algorithm in games, which can be viewed as the extreme version of $\ftrl$ when $\eta \to \infty$. We show that $\fictp$ can take \emph{doubly} exponentially many iterations to converge to a Nash equilibrium.

\begin{theorem}
    There is an $n$-player binary-action potential game in which $\fictp$ takes $2^{\Omega(2^n)}$ iterations to reach an $\epsilon$-Nash equilibrium for $\epsilon = 1/2^n$.
\end{theorem}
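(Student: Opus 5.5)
The plan is to build an $n$-player binary-action potential game whose fictitious play trajectory simulates a binary counter on $n$ bits. The counter passes through all $2^n$ configurations before reaching an equilibrium, and each increment takes a number of rounds that grows geometrically with the time already elapsed. Recall that $\fictp$ best-responds to the empirical average of the opponents' past actions. If a player has spent $T$ rounds on one action, then switching its empirical marginal by a constant amount takes another $\Theta(T)$ rounds. A phase structure in which every new phase must overturn an empirical average accumulated over all previous phases therefore forces the elapsed time to grow as $T_{k+1} \ge (1+c)T_k$ per phase. With $2^{\Omega(n)}$ phases, the total is $2^{\Omega(2^n)}$.

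The construction is an identical-interest game (potential $\Phi$ equal to the common utility) with players $1,\dots,n$ and actions $\{0,1\}$. The potential is defined as a sum of local terms, each coupling player $i$ with the lower-indexed players $1,\dots,i-1$. Player $i$'s best response to the empirical marginals of the others then implements the carry rule of a counter, in the spirit of a Gray code or snake-in-the-box path on the hypercube. We choose an ordering $v_0,v_1,\dots,v_{N}$ of $N=2^{\Omega(n)}$ vertices of $\{0,1\}^n$ along a path of adjacent profiles. We set $\Phi(v_k)=k\delta$ along the path, make $v_N$ the unique pure equilibrium, and give all off-path profiles very low potential, so that at each $v_k$ the unique improving deviation leads to $v_{k+1}$. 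Scaling utilities so that $\delta\approx 2^{-n}$ while they stay in $[0,1]$ is what makes $\epsilon=1/2^n$ the right accuracy: any $\epsilon$-Nash equilibrium must lie near $v_N$.

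The key steps are as follows.
\begin{enumerate}[(i)]
\item Starting from $v_0$, show by induction that $\fictp$ visits the profiles $v_1,v_2,\dots$ in order, and that the current play stays close to the path.
\item Show that during phase $k$ (play at $v_k$), the player $i$ who must flip to reach $v_{k+1}$ keeps a dominant best response against the opponents' empirical mixture until the relevant empirical marginals cross a threshold. This threshold crossing requires the current phase length to be a constant fraction of the total history, giving $T_{k+1}\ge(1+c)T_k$.
\item Conclude that reaching $v_N$, and hence any $\epsilon$-Nash equilibrium, takes at least $(1+c)^N = 2^{\Omega(2^n)}$ rounds, adjusting constants via $N$.
\end{enumerate}

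The main obstacle is step (ii). The empirical distribution is a product of marginals that aggregates many past phases. Because those marginals are correlated through time, the best response to their product may not match the intended one-bit deviation, and some other player could deviate prematurely. To handle this I would first choose the potential increments so that each player's payoff difference between its two actions depends, up to small terms, on a single "control" player's marginal, i.e., the previous bit in the chain. Then I would track an invariant of the form "every empirical marginal is within $\gamma$ of its value at $v_k$ except for the player currently switching", and show the invariant persists because the phases are geometrically growing. If this decoupling fails, the fallback is to insert auxiliary "clock" players that hold the configuration fixed while a threshold is crossed. This costs only a constant factor in $n$ and preserves the $2^{\Omega(2^n)}$ bound.
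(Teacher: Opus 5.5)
Your high-level picture is the right one: embed a long monotone path of pure profiles in a binary identical-interest game, make off-path profiles worthless, and show $\fictp$ must walk the path with multiplicatively growing phases. A bound of $(1+c)^{\Theta(2^n)}$ would already suffice for the statement. But the two ingredients the argument rests on are missing.

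First, the structural property you need is that the path is \emph{induced}: a snake in the box, in which non-consecutive path vertices are never adjacent. Such snakes exist with length $\Theta(2^n)$. A counter, a Gray code, or a potential built from local terms coupling player $i$ to lower-indexed players does not give this. A Gray-code path has chords, so at $v_k$ some player can jump to $v_j$ with $j>k+1$, and your claim that the unique improving deviation at $v_k$ leads to $v_{k+1}$ fails. The snake property also guarantees that at $v_{k-1}$, flipping the bit of the player who moves at $v_k$ leads off the path; that is the source of the long delay.

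Second, you are analyzing different dynamics from the paper. There, $\fictp$ is follow-the-leader on cumulative realized utility vectors $\sum_{\tau}\vu_i^{(\tau)}$, i.e., a best response to the empirical \emph{joint} history of pure profiles, not to the product of marginals. Under that definition the decoupling problem you single out as the main obstacle does not arise. Play stays exactly on the path, because at $v_k$ every player other than the mover strictly prefers staying in every round. Under your product-of-marginals reading, your remedies (control players, clock players) are left unspecified, so step (ii) is unproven there.

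The more serious gap, in either model, is the growth claim. You assert that each phase must overturn an average accumulated over all previous phases. But with binary actions the moving player $i$ toggles back and forth along the path. In earlier phases $\kappa$ where $i$ was playing $a_i(k+1)$, the comparison between $a_i(k+1)$ and $a_i(k)$ favored $a_i(k+1)$ by up to $\kappa$ per round, so the history contains \emph{credit} toward the switch. In your model, similarly, the controlling player's empirical marginal may already sit near the threshold after earlier toggling, so crossing it need not take $\Theta(T)$ rounds.

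What reliably opposes the switch is only the recent past: $(k-1)T_{k-1}$ from phase $k-1$, and at least $T_{k-2}+T_{k-3}$ from the two phases before it. Against this, phase $k$ gains only $1$ per round. This gives
$T_k\ge (k-1)T_{k-1}+T_{k-2}+T_{k-3}-\sum_{\kappa\le k-4}\kappa T_\kappa$.
The real work in the paper is a strong induction showing the negative terms are dominated, which yields $T_k\ge (k-1)T_{k-1}$ and hence $(C2^n)!$ rounds. Your step (ii) replaces this accounting with a heuristic, so $T_{k+1}\ge (1+c)T_k$ is not established.
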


This constitutes an exponentially stronger lower bound relative to prior work~\citep{Panageas23:Exponential,Brandt13:Rate}. From a technical standpoint, our multi-player construction distills the essential property behind the two-player lower bound by making a connection to a classic graph-theoretic problem connected to coding theory (\Cref{sec:multiplayer}).

\subsection{Related work}

\paragraph{Learning in potential games} There is a rich literature on the convergence of learning dynamics in potential games. Unlike general games, potential games always admit a \emph{pure} Nash equilibrium, and better-response dynamics converge to one~\citep{Rosenthal73:Class,Monderer96:Potential}. Fictitious play is also known to converge in this setting, though it was recently shown that it may require exponential time to do so~\citep{Panageas23:Exponential}. Our lower bound in two-player game adapts their construction; unlike fictitious play, $\ftrl$ has, by definition, a propensity to mix, which significantly complicates the analysis. The construction of~\citet{Panageas23:Exponential} was also recently used by~\citet{Anagnostides25:Convergence} to show that \emph{regret matching}~\citep{Hart00:Simple} can also require exponential time to converge.

Beyond these classic dynamics, significant effort has been devoted to analyzing other learning algorithms in potential games~\citep{Heliou17:Learning,Palaiopanos17:Multiplicative,Anagnostides22:Last,Hart03:Regret,Marden07:Regret,Candogan13:Dynamics,Maheshwari25:Independent}. From a complexity standpoint, computing Nash equilibria in potential games is likely intractable when the precision is exponentially small~\citep{Babichenko21:Settling,Babichenko20:Communication}, but amenable to algorithms such as gradient descent when the precision is inverse polynomial. Our results reveal that $\ftrl$ algorithms are poor first-order optimizers even in the inverse polynomial regime.

From a broader vantage point, the convergence of $\ftrl$ in games is a topic of active research~\citep{Lotidis25:Multiagent,Lotidis25:Robust}.

\paragraph{Fictitious play} Following the pioneering work of~\citet{Robinson51:iterative}, the convergence of $\fictp$ has received extensive attention in two-player zero-sum games. \citet{Daskalakis14:Counter} proved an exponential lower bound, albeit under adversarial tie breaking. Characterizing its convergence beyond adversarial tie breaks remains open, although~\citet{Wang25:TieBreaking} recently made progress. Our construction holds regardless of how ties are broken.

\paragraph{Forgetfulness and convergence speed} Our finding that $\ftrl$ is exponentially slower than $\md$ in potential games mirrors a recent result by~\citet{Cai24:Fast}, albeit in a fundamentally different problem and class of algorithms. Specifically, their result pertains to (two-player) zero-sum games and \emph{optimistic} algorithms.
\section{Preliminaries}
\label{sec:prels}

Our paper revolves around the $\ftrl$ algorithm, which is reviewed in~\Cref{sec:ftrl}. \Cref{sec:potential} provides some basic background on potential games.

\subsection{Online learning and $\ftrl$}
\label{sec:ftrl}

We examine the convergence of \emph{follow the regularized leader ($\ftrl$)}. This is a classic online algorithm introduced by~\citet{Kalai05:Efficient}, which became a mainstay in online optimization~\citep{Shalev-Shwartz12:Online}. Taking a step back, in the usual online learning framework, a learner interacts with an environment over a sequence of $T$ rounds. In each round $t \in [T]$, the learner selects a strategy $\vx^{(t)} \in \cX$, where $\cX$ is a convex and compact set such as the probability simplex. The environment then specifies a utility vector $\vu^{(t)}$, so that the utility of the learner in round $t$ reads $\langle \vx^{(t)}, \vu^{(t)} \rangle$. $\ftrl$ is a specific update rule that maps the history of observed utilities to the next strategy of the learner. Specifically, upon observing a sequence of utilities $(\vu^{(\tau)})_{\tau=1}^t$, it computes
\begin{equation}
    \label{eq:FTRL}
    \vx^{(t+1)} \defeq \argmax_{\vx \in \cX} \left\{ \left\langle \vx, \sum_{\tau=1}^t \vec{u}^{(\tau)} \right\rangle - \frac{1}{\eta^{(t)}} \cR(\vx) \right\}.
\end{equation}
Here, $\cR$ is a $1$-strongly convex and continuously differentiable\footnote{Differentiability is assumed over some open set $\tilde{\cX} \supseteq \cX$.} regularizer with respect to some norm $\|\cdot\|$: $\cR(\vx) \geq \cR(\vx') + \langle \nabla \cR(\vx'), \vx - \vx' \rangle + \frac{1}{2} \|\vx - \vx' \|^2$ for any $\vx, \vx'$; and $\eta^{(t)} > 0$ is the (nonincreasing) \emph{learning rate} sequence. When $\eta^{(t)} = +\infty$, $\ftrl$ approaches \emph{fictitious play}~\citep{Robinson51:iterative}---also known as \emph{follow the leader} in the online learning literature---formally defined in the sequel. We will denote by $\regcon$ the \emph{range} of the regularizer $\cR$, that is, $| \cR(\vx) - \cR(\vx') | \leq \regcon$ for any $\vx, \vx' \in \cX$. Also, $\|\cdot\|_*$ denotes the dual norm of $\|\cdot\|$; that is, $\|\vec{u}\|_* = \sup_{\|\vec{x} \| \leq 1} \langle \vec{x}, \vec{u} \rangle $.

A classic result due to~\citet{Kalai05:Efficient} shows that, unlike fictitious play, $\ftrl$ has the \emph{no-regret} property even against an adversarially produced sequence of utilities. Regret is the most common way of measuring performance in an online learning environment. It is defined as
\begin{equation}
    \reg^{(T)} = \max_{\vx' \in \cX} \sum_{t=1}^T \langle \vx' - \vx^{(t)}, \vu^{(t)} \rangle.
\end{equation}

We are now ready to formally state the regret guarantee of $\ftrl$, as defined in~\eqref{eq:FTRL}.

\begin{proposition}[\citealp{Kalai05:Efficient,Shalev-Shwartz12:Online}]
    \label{prop:FTRL-regret}
    For any sequence of utilities $(\vu^{(t)})_{t=1}^T$, the regret of $\ftrl$ can be bounded as
    \begin{equation*}
        \reg^{(T)} \leq \frac{\regcon}{\eta^{(T)}} +  \sum_{t=1}^T \eta^{(t)} \|\vu^{(t)} \|^2_*
    \end{equation*}
    for any nonincreasing learning rate sequence $\eta^{(t)}$.
    In particular, if $\| \vu^{(t)} \|_* \leq B$ for all $t$ and $\eta^{(t)} = \nicefrac{C(m, B)}{t^\alpha}$ for some $\alpha \in (0, 1)$,
    \begin{equation*}
        \reg^{(T)} \leq \frac{\regcon T^{\alpha}}{C(m, B)} + B^2 C(m, B) \left( 1 + \frac{T^{1 - \alpha}}{1 - \alpha} \right).
    \end{equation*}
\end{proposition}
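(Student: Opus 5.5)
The plan is the standard "be-the-leader" argument, followed by a direct evaluation of the sums for the specific schedule. First, consider the fixed learning rate case. Define the auxiliary iterates $\tilde{\vx}^{(t+1)} \defeq \argmax_{\vx\in\cX}\{\langle \vx, \sum_{\tau\le t}\vu^{(\tau)}\rangle - \cR(\vx)/\eta^{(t)}\}$. These coincide with $\vx^{(t+1)}$ in \eqref{eq:FTRL}, with $\vx^{(1)}$ the maximizer of $-\cR$.

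Next, establish the be-the-leader inequality by induction on $T$, with the regularizer weight changing over time. For every comparator $\vx'$,
\[
\sum_{t=1}^T \langle \vx^{(t+1)}, \vu^{(t)}\rangle - \frac{\cR(\vx^{(1)})}{\eta^{(0)}} \;\ge\; \sum_{t=1}^T\langle \vx',\vu^{(t)}\rangle - \frac{\cR(\vx')}{\eta^{(T)}} - \sum_{t=1}^{T}\Big(\frac{1}{\eta^{(t)}}-\frac{1}{\eta^{(t-1)}}\Big)\big(\cR(\vx^{(t+1)})-\min \cR\big).
\]
Normalizing $\min\cR=0$ (shifting $\cR$ does not change the argmax), the telescoping terms are controlled by the range $\regcon$. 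Because the learning rate is nonincreasing, they sum to at most $\regcon/\eta^{(T)}$ in total.

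This gives
\[
\reg^{(T)} \le \frac{\regcon}{\eta^{(T)}} + \sum_{t=1}^T \langle \vx^{(t+1)}-\vx^{(t)},\vu^{(t)}\rangle .
\]
The remaining task is the stability bound $\langle \vx^{(t+1)}-\vx^{(t)},\vu^{(t)}\rangle \le \|\vx^{(t+1)}-\vx^{(t)}\|\,\|\vu^{(t)}\|_* \le \eta^{(t)}\|\vu^{(t)}\|_*^2$. To prove it:
\begin{itemize}
\item The objective $F_t(\vx)=\langle\vx,\sum_{\tau\le t}\vu^{(\tau)}\rangle-\cR(\vx)/\eta^{(t)}$ is $(1/\eta^{(t)})$-strongly concave.
\item The first-order optimality conditions at the two maximizers $\vx^{(t)}$ (of $F_{t-1}$) and $\vx^{(t+1)}$ (of $F_t$) give $\frac{1}{\eta^{(t)}}\|\vx^{(t+1)}-\vx^{(t)}\|^2\le \langle \vx^{(t+1)}-\vx^{(t)},\vu^{(t)}\rangle$.
\item When $\eta^{(t)}\neq\eta^{(t-1)}$ there is an extra term $(1/\eta^{(t)}-1/\eta^{(t-1)})\langle\nabla\cR(\vx^{(t)}),\vx^{(t+1)}-\vx^{(t)}\rangle$. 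I would absorb it by applying the stability step to the objective with the common rate $\eta^{(t)}$, and charge the rate change to the telescoping term above, which the range already bounds.
\end{itemize}
Dividing by the distance then yields $\|\vx^{(t+1)}-\vx^{(t)}\|\le\eta^{(t)}\|\vu^{(t)}\|_*$, and hence the first display of \Cref{prop:FTRL-regret}.

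Finally, specialize to $\eta^{(t)}=C(m,B)/t^\alpha$. The first term becomes $\regcon T^\alpha/C(m,B)$. For the second, bound $\sum_{t=1}^T t^{-\alpha}\le 1+\int_1^T s^{-\alpha}\,ds\le 1+T^{1-\alpha}/(1-\alpha)$, and multiply by $B^2C(m,B)$.

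I expect the main obstacle to be the time-varying learning rate in the stability step: the naive comparison of consecutive maximizers mixes two different regularizer weights. The cleanest fix I would try first is to compare $\vx^{(t+1)}$ against the maximizer of $F_t$ without the last utility, taken at the same rate $\eta^{(t)}$. The discrepancy is then accounted for by the monotonicity of $1/\eta^{(t)}$, in the style of Orabona's analysis of FTRL.
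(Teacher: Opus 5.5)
Your constant-rate argument is the standard one and is fine. The time-varying case, however, has a genuine gap, and it sits exactly at the stability step. The estimate $\|\vx^{(t+1)}-\vx^{(t)}\|\le\eta^{(t)}\|\vu^{(t)}\|_*$ requires $\vx^{(t)}$ and $\vx^{(t+1)}$ to be produced with the same regularizer weight. When $\eta^{(t)}<\eta^{(t-1)}$, the iterate also moves because the regularizer became heavier, however small $\vu^{(t)}$ is. Your patch does not remove this. If $\vec{z}^{(t)}$ maximizes $\langle\vx,\sum_{\tau<t}\vu^{(\tau)}\rangle-\cR(\vx)/\eta^{(t)}$, you do get $\|\vx^{(t+1)}-\vec{z}^{(t)}\|\le\eta^{(t)}\|\vu^{(t)}\|_*$. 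But the leftover $\langle\vec{z}^{(t)}-\vx^{(t)},\vu^{(t)}\rangle$ is linear in $\vu^{(t)}$ and can be of order $\mathrm{diam}(\cX)\,\|\vu^{(t)}\|_*$. It is not a difference of regularizer values, so the range cannot pay for it.

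In fact no argument can close this gap. With the indexing of \eqref{eq:FTRL}, the first display is false for general nonincreasing sequences; the paper gives no proof beyond the citation and the integral estimate. Take two actions, $\cR(\vx)=\tfrac12\|\vx\|_2^2$ (range $\tfrac14$, $\|\cdot\|_*=\|\cdot\|_2$), $T=2$, $\eta^{(1)}=4$, $\eta^{(2)}=\tfrac12$, $\vu^{(1)}=(\tfrac14,0)$, and $\vu^{(2)}=(-1,1)$. Then $\vx^{(1)}=(\tfrac12,\tfrac12)$ and $\vx^{(2)}=\Pi_{\Delta}((1,0))=(1,0)$, so $\reg^{(2)}=1-(\tfrac18-1)=\tfrac{15}{8}$. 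The claimed bound is only $\tfrac12+\tfrac14+1=\tfrac74$. Correspondingly, $\vx^{(3)}=(\tfrac1{16},\tfrac{15}{16})$ and $\langle\vx^{(3)}-\vx^{(2)},\vu^{(2)}\rangle=\tfrac{15}{8}>1=\eta^{(2)}\|\vu^{(2)}\|_2^2$. Note that the regret never depends on $\eta^{(T)}$, yet the stated bound lets you tune it freely. This signals that the coefficient of $\|\vu^{(t)}\|_*^2$ must be the rate that produced $\vx^{(t)}$, namely $\eta^{(t-1)}$.

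What is true is $\reg^{(T)}\le\regcon/\eta^{(T)}+\tfrac12\sum_{t=1}^T\eta^{(t-1)}\|\vu^{(t)}\|_*^2$, with $\eta^{(0)}\defeq\eta^{(1)}$ (this choice does not affect $\vx^{(1)}$). Your be-the-leader telescoping yields it if you do not split off the raw movement term. Normalize $\min\cR=0$, so $\max\Phi_0=0$, where $\Phi_t(\vx)=\langle\vx,\sum_{\tau\le t}\vu^{(\tau)}\rangle-\cR(\vx)/\eta^{(t)}$. Then
\[
\max\Phi_t-\max\Phi_{t-1}=\langle\vx^{(t+1)},\vu^{(t)}\rangle+\bigl[\Phi_{t-1}(\vx^{(t+1)})-\Phi_{t-1}(\vx^{(t)})\bigr]-\Bigl(\frac{1}{\eta^{(t)}}-\frac{1}{\eta^{(t-1)}}\Bigr)\cR(\vx^{(t+1)}).
\]
The last term is nonpositive and is simply dropped. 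In your display this correction enters with the wrong sign. Bounding it by $\regcon/\eta^{(T)}$ on top of $\cR(\vx')/\eta^{(T)}$ would give $2\regcon/\eta^{(T)}$, not the inequality you then state. The bracket is at most $-\|\vx^{(t+1)}-\vx^{(t)}\|^2/(2\eta^{(t-1)})$, by strong concavity of $\Phi_{t-1}$ at its maximizer.

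Now sum over $t$. Use $\langle\vx',\sum_t\vu^{(t)}\rangle\le\max\Phi_T+\regcon/\eta^{(T)}$, and bound each term by $\langle\vec{d},\vu\rangle-\|\vec{d}\|^2/(2\eta)\le\tfrac{\eta}{2}\|\vu\|_*^2$. This gives the bound without ever controlling the movement itself. For $\eta^{(t)}=C(m,B)/t^\alpha$ one has $\eta^{(t-1)}\le 2^\alpha\eta^{(t)}\le 2\eta^{(t)}$. So this implies the first display for that schedule (and for constant rates). Your integral estimate, the same one the paper uses, then gives the second display, which is what the paper relies on.
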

Above, we used the fact that $\sum_{t=1}^T \nicefrac{1}{t^\alpha} \leq 1 + \nicefrac{T^{1 - \alpha}}{(1 - \alpha)}$. The key takeaway of~\Cref{prop:FTRL-regret} is that $\reg^{(T)} = o(T)$ for any $\alpha \in (0, 1)$. We proceed to review some common instantiations of $\ftrl$, most notably $\mwu$.

\begin{example}[$\mwu$]
    A canonical member of $\ftrl$ is $\mwu$. It is the instantiation of $\ftrl$ on the probability simplex $\cX = \Delta(\cA)$ when $\cR$ is the (negative) entropy regularizer, $\cR(\vx) = \sum_{a \in \cA} \vx[a] \log \vx[a]$, which can be shown to be $1$-strongly convex with respect to $\|\cdot\|_1$. In that case, $\ftrl$ admits the closed-form solution
    \begin{equation*}
        \tag{\texttt{MWU}}
        \vx^{(t+1)}[a] = \frac{e^{\eta^{(t)} \sum_{\tau=1}^t \vu^{(\tau)}[a]}}{\sum_{a' \in \cA} e^{\eta^{(t)} \sum_{\tau=1}^t \vu^{(\tau)}[a']}} \quad \forall a \in \cA. 
    \end{equation*}
    When the learning rate is a constant, this can be equivalently written as
    \begin{equation*}
        \vx^{(t+1)}[a] = \vx^{(t)}[a] \frac{ e^{\eta \vu^{(t)}[a]}}{\sum_{ a' \in \cA } \vx^{(t)}[a'] e^{\eta \vu^{(t)}[a']} } \quad \forall a \in \cA.
    \end{equation*}
    
    Since the range of the entropy regularizer is $\log |\cA|$, \Cref{prop:FTRL-regret} implies that the regret of $\mwu$ is bounded by $\sqrt{T \log|\cA|}$ when $\|\vu^{(t)} \|_\infty \leq 1$ for all $t$.
\end{example}

\begin{example}[Euclidean regularization]
    Another canonical member of the $\ftrl$ family arises when $\cR: \vx \mapsto \frac{1}{2} \|\vx \|_2^2$. In that case, the update rule can be cast as
    \begin{equation*}
        \vx^{(t+1)} = \proj_{\cX} \left( \eta \sum_{\tau=1}^t \vu^{(\tau)} \right),
    \end{equation*}
    where $\proj_{\cX}(\cdot)$ denotes the Euclidean projection onto $\cX$.
\end{example}
Two other notable regularizers are i) the \emph{logarithmic regularizer} $\cR: \vx \mapsto - \sum_{a \in \cA} \log \vx[a]$, and ii) \emph{Tsallis entropy} $\cR: \vx \mapsto - \frac{1}{q(1 - q)} \sum_{a \in \cA} (\vx[a])^q$ for $q \in (0, 1)$, both of which have found many applications in, among others, learning with bandit feedback~\citep{Zimmert21:Tsallis,Abernethy08:Competing,Wei18:More}. All of these regularizers are permutation invariant (\Cref{def:permutation_invariance}).

A regularizer $\cR$ is called a \emph{Legendre regularizer} if, in addition to the previous properties, for any sequence of points $\vx^{(1)}, \vx^{(2)}, \dots $ converging to a boundary point of $\cX$, $\nabla \cR(\vx^{(T)}) \to \infty$ as $T \to \infty$~\citep{Cesa-Bianchi06:Prediction}. A well-known fact is that $\ftrl$ and $\md$ are equivalent under any Legendre regularizer; for example, we refer to~\citet{McMahan11:Follow,Mcmahan17:Survey}. This means that all our results concerning $\ftrl$---both upper and lower bounds---immediately translate to certain $\md$ variants as well.

\paragraph{Fictitious play} Fictitious play~\citep{Robinson51:iterative,Brown51:Iterative}, also known as \emph{follow the leader} in the online learning literature, can be viewed as the extreme case of $\ftrl$ with $\eta \to \infty$. Specifically, 
\begin{equation*}
    \vx^{(t+1)} \defeq \argmax_{\vx \in \cX} \left\{ \left\langle \vx, \sum_{\tau=1}^t \vec{u}^{(\tau)} \right\rangle \right\}.
\end{equation*}
When $\cX = \Delta(\cA)$, it is assumed that fictitious play selects a pure strategy.

\subsection{Games and solution concepts}
\label{sec:potential}

In an $n$-player normal-form game, each player $i \in [n]$ selects as strategy a probability distribution $\vx_i \in \Delta(\cA_i) \eqqcolon \cX_i$ from a finite set of actions $\cA_i$. We denote by $u_i(\vx)$ the expected utility of player $i \in [n]$ under the joint strategy $(\vx_1, \dots, \vx_n)$. We also use the notation $\vx_{-i} = (\vx_1, \dots, \vx_{i-1}, \vx_{i+1}, \dots, \vx_n)$. The most standard solution concept in games is the \emph{Nash equilibrium}, recalled below.

\begin{definition}[\citealp{Nash50:Equilibrium}]
    \label{def:NE}
    A joint strategy $(\vx_1, \dots, \vx_n)$ is an \emph{$\epsilon$-Nash equilibrium} if for any player $i \in [n]$ and deviation $\vx_i' \in \cX_i$, 
    \[
        u_i(\vx_i', \vx_{-i}) \leq u_i(\vx_i, \vx_{-i}) + \epsilon.
    \]
\end{definition}

As we alluded to in our introduction, there is an intimate connection between the no-regret property and a game-theoretic solution concept known as \emph{coarse correlated equilibrium (CCE)}~\citep{Aumann74:Subjectivity,Moulin78:Strategically}. CCEs relax~\Cref{def:NE} by allowing for a \emph{correlated} distribution.

\begin{definition}
    A distribution $\mu \in \Delta(\cA_1 \times \dots \times \cA_n)$ is an $\epsilon$-CCE if for any player $i \in [n]$ and deviation $\vx_i' \in \cX_i$,
    \[
        \E_{\vec{a} \sim \mu }[u_i(\vx_i', \vec{a}_{-i} )] \leq \E_{\vec{a} \sim \mu }[u_i(\vec{a})] + \epsilon.
    \]
\end{definition}
By virtue of~\Cref{prop:FTRL-regret}, it follows that $\ftrl$ dynamics converge to the set of CCEs. Specifically, if the cumulative regret of each player is bounded as $O_T(\sqrt{T})$, convergence to the set of CCEs occurs at a rate of $O_\epsilon(1/\epsilon^2)$.

\paragraph{Potential games} A game is called a \emph{potential game} if there exists a function $\Phi: \cX \to \R$ such that for any player $i$, $\vx_{-i}$, and $\vx_i, \vx_i' \in \Delta(\cA_i)$,
\begin{equation*}
    u_i(\vx_i, \vx_{-i}) - u_i(\vx_i', \vx_{-i}) = \Phi(\vx_i, \vx_{-i}) - \Phi(\vx_i', \vx_{-i}).
\end{equation*}
This implies that the utility gradient of each player can be expressed as $\vu_i(\vx) = \nabla_{\vx_i} \Phi(\vx)$. In a normal-form potential game, the potential function is multilinear, so it is smooth, but not necessarily concave. Our upper bounds apply more broadly beyond normal-form games to constrained optimization whenever the potential is \emph{$L$-smooth}: $\| \nabla \Phi(\vx) - \nabla \Phi(\vx') \|_* \leq L \|\vx - \vx' \|$ for any $\vx, \vx'$, where $\|\cdot\|_*$ is the dual norm. For convenience, we define $L$-smoothness with respect to the global norm $\|\vx\| \defeq \sqrt{\sum_{i=1}^n \|\vx_i \|^2_{(i)} }$, where $\| \cdot \|_{(i)}$ is the norm induced by the regularizer $\mathcal{R}_i$ employed by player $i$. To ease the notation, we will assume that all players employ the same regularizer, although our results apply more broadly.

\paragraph{Further notation} We denote by $D_i$ the diameter of $\cX_i$ with respect to some norm $\|\cdot\|$; the choice of norm will be clear from the context. We use $B$ for an upper bound on the norm of utilities in terms of the dual norm; that is, $\|\vu_i(\vx) \|_* \leq B$ for any $i \in [n]$ and joint strategy $\vx$.
\section{Exponential upper bound and the potential property for FTRL}
\label{sec:upperbounds}

We begin by establishing exponential upper bounds for $\ftrl$ in potential games. Our first goal is to show that when multiple players perform an $\ftrl$ update simultaneously, the value of the potential function cannot decrease. The following lemma takes the perspective of a single player, and shows that $\ftrl$ always yields an improvement relative to its current utility.

\begin{restatable}{lemma}{FTRLimprovement}
    \label{lemma:FTRL-improvement}
If the sequence $(\vx_i^{(t)})_{t\geq 1}$ is updated using $\ftrl$
under a $1$-strongly convex regularizer $\mathcal{R}$ with respect to some norm $\|\cdot\|$,
\[
\langle \vx_i^{(t+1)}, \vu_i^{(t)} \rangle - \langle \vx_i^{(t)}, \vu_i^{(t)} \rangle \ge \frac{1}{\eta} \|\vx_i^{(t+1)} - \vx_i^{(t)} \|^2.
\]
\end{restatable}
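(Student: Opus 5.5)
Fix player $i$ and a constant learning rate $\eta$, and write $\vec{U}^{(t)} \defeq \sum_{\tau=1}^{t} \vu_i^{(\tau)}$, so that $\vx_i^{(t+1)}$ maximizes $F_t(\vx) \defeq \langle \vx, \vec{U}^{(t)} \rangle - \frac{1}{\eta}\cR(\vx)$ over $\cX_i$, and $\vx_i^{(t)}$ maximizes $F_{t-1}$. The plan is to compare the two optimality conditions. The key observation is that $F_t = F_{t-1} + \langle \cdot, \vu_i^{(t)} \rangle$. Both $F_t$ and $F_{t-1}$ are $\frac{1}{\eta}$-strongly concave with respect to $\|\cdot\|$, because $\cR$ is $1$-strongly convex.

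\textbf{Step 1 (strong concavity at each maximizer).} For a $\frac{1}{\eta}$-strongly concave $F$ maximized over a convex set at $\vx^\star$, the first-order optimality condition $\langle \nabla F(\vx^\star), \vx - \vx^\star\rangle \le 0$ combines with strong concavity to give
\[
F(\vx) \le F(\vx^\star) - \tfrac{1}{2\eta}\|\vx - \vx^\star\|^2 \quad \text{for all } \vx \in \cX_i .
\]
Differentiability of $\cR$ on an open superset $\tilde{\cX} \supseteq \cX$ justifies the gradient condition at boundary points. We apply this twice:
\begin{itemize}
\item with $F = F_t$, $\vx^\star = \vx_i^{(t+1)}$, $\vx = \vx_i^{(t)}$;
\item with $F = F_{t-1}$, $\vx^\star = \vx_i^{(t)}$, $\vx = \vx_i^{(t+1)}$.
\end{itemize}

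\textbf{Step 2 (add the two inequalities).} Summing gives
\[
F_t(\vx_i^{(t)}) + F_{t-1}(\vx_i^{(t+1)}) \le F_t(\vx_i^{(t+1)}) + F_{t-1}(\vx_i^{(t)}) - \tfrac{1}{\eta}\|\vx_i^{(t+1)} - \vx_i^{(t)}\|^2 .
\]
Substitute $F_t = F_{t-1} + \langle\cdot,\vu_i^{(t)}\rangle$. All $F_{t-1}$ terms, including the regularizer terms, cancel. What remains is
\[
\langle \vx_i^{(t)}, \vu_i^{(t)}\rangle \le \langle \vx_i^{(t+1)}, \vu_i^{(t)}\rangle - \tfrac{1}{\eta}\|\vx_i^{(t+1)} - \vx_i^{(t)}\|^2 ,
\]
which is exactly the claim. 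An equivalent route is to add the two variational inequalities $\langle \vec{U}^{(t)} - \frac{1}{\eta}\nabla\cR(\vx_i^{(t+1)}), \vx_i^{(t)} - \vx_i^{(t+1)}\rangle \le 0$ and its counterpart at time $t-1$, and then use strong monotonicity of $\nabla \cR$. That version also yields the constant $\frac{1}{\eta}$.

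\textbf{Main obstacle.} The delicate point is the learning rate. The cancellation of the regularizer terms in Step 2 needs the same $\eta$ at times $t-1$ and $t$, which is why the statement is written with a single $\eta$. I would state this assumption explicitly, together with the convention that $\vx_i^{(1)} = \argmin \cR$, which corresponds to $\vec{U}^{(0)} = \vec{0}$, so that the case $t = 1$ is covered. If a time-varying rate were needed, I would instead rescale, writing $\vx_i^{(t)}$ as the maximizer of $\eta^{(t-1)}\langle \cdot, \vec{U}^{(t-1)}\rangle - \cR$. That introduces an extra cross term proportional to $\eta^{(t-1)} - \eta^{(t)}$, which the exact inequality does not accommodate; I would not pursue it here.
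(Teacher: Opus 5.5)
Your proof is correct and is essentially the paper's argument. The paper takes exactly your ``equivalent route'': it sums the two first-order optimality conditions and then uses strong monotonicity of $\nabla\cR$. Your main route instead combines each strong-convexity inequality with its optimality condition before summing, so it sums the same four inequalities in a different grouping and gives the same constant $\frac{1}{\eta}$. Your remark that the regularizer terms cancel only under a constant $\eta$ also matches the paper, which proves the lemma with a fixed learning rate.
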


In particular, the improvement is proportional to the squared movement $\|\vx_i^{(t+1)} - \vx_i^{(t)} \|$. The proof of~\Cref{lemma:FTRL-improvement} relies on the first-order optimality conditions of the $\ftrl$ update. Together with all the other proofs of this section, it is deferred to~\Cref{appendix:proofs-upperbounds}.

\Cref{lemma:FTRL-improvement} already implies that $\ftrl$ can never decrease the value of the potential function under \emph{alternating updates}, which means that players perform the update one by one; in fact, this holds no matter the choice of the learning rate.

We will now argue about the simultaneous case. Since $\Phi$ is $L$-smooth, we can lower bound the difference $\Phi(\vx^{(t+1)}) - \Phi(\vx^{(t)})$ by
\[
     \langle \nabla \Phi (\vx^{(t)}), \vx^{(t+1)} - \vx^{(t)} \rangle - \frac{L}{2} \|\vx^{(t+1)} - \vx^{(t)} \|^2.
\]
By~\Cref{lemma:FTRL-improvement}, the first term can in turn be lower bounded by
\[
    \sum_{i=1}^n \langle \vx_i^{(t+1)} - \vx_i^{(t)}, \vu_i^{(t)} \rangle \geq \frac{1}{\eta} \sum_{i=1}^n \|\vx_i^{(t+1)} - \vx_i^{(t)} \|^2.
\]
As a result, when $\eta \leq 1/L$, we have shown that
\[
    \Phi(\vx^{(t+1)}) - \Phi(\vx^{(t)}) \geq \frac{1}{2\eta} \|\vx^{(t+1)} - \vx^{(t)} \|^2.
\]
The telescopic summation yields the following implication.

\begin{proposition}
    \label{prop:pot-improvement}
    Suppose that each player in a potential game employs $\ftrl$ with learning rate $\eta \leq 1/L$, where $L$ is the smoothness parameter of the potential. For any $t \geq 1$,
    \begin{equation}
        \label{eq:pot-improvement}
        \Phi(\vx^{(t+1)}) - \Phi(\vx^{(t)}) \geq \frac{1}{2\eta} \|\vx^{(t+1)} - \vx^{(t)} \|^2.
    \end{equation}
    In particular,
    \[
        \sum_{t=1}^T \sum_{i=1}^n \|\vx_i^{(t+1)} - \vx_i^{(t)} \|^2 \leq 2 \eta \Phirange.
    \]
\end{proposition}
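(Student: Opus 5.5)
The plan is to prove the per-step inequality \eqref{eq:pot-improvement} by combining $L$-smoothness of $\Phi$ with \Cref{lemma:FTRL-improvement}, and then to telescope. The only care needed is to use the same global norm $\|\vx\|^2 = \sum_i \|\vx_i\|_{(i)}^2$ in both ingredients.

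\textbf{Step 1 (smoothness lower bound).} Since $\nabla\Phi$ is $L$-Lipschitz from $\|\cdot\|$ to $\|\cdot\|_*$, the standard descent-lemma argument applies. Integrate $\langle \nabla \Phi(\vx^{(t)} + s(\vx^{(t+1)}-\vx^{(t)})), \vx^{(t+1)}-\vx^{(t)}\rangle$ over $s\in[0,1]$ and use the dual-norm pairing bound $|\langle \vec{g},\vec{d}\rangle| \le \|\vec{g}\|_*\|\vec{d}\|$. This gives
\[
\Phi(\vx^{(t+1)}) - \Phi(\vx^{(t)}) \ge \langle \nabla\Phi(\vx^{(t)}), \vx^{(t+1)}-\vx^{(t)}\rangle - \tfrac{L}{2}\|\vx^{(t+1)}-\vx^{(t)}\|^2 .
\]
The domain is a product of simplices, so it is convex and the segment stays inside it, where smoothness is assumed.

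\textbf{Step 2 (decompose the linear term and apply the lemma).} The inner product splits across players as $\sum_i \langle \nabla_{\vx_i}\Phi(\vx^{(t)}), \vx_i^{(t+1)}-\vx_i^{(t)}\rangle$. In a potential game $\nabla_{\vx_i}\Phi(\vx^{(t)}) = \vu_i(\vx^{(t)}) = \vu_i^{(t)}$, which is exactly the utility fed to player $i$'s $\ftrl$ at round $t$. Applying \Cref{lemma:FTRL-improvement} to each player and summing bounds the linear term below by $\frac{1}{\eta}\sum_i \|\vx_i^{(t+1)}-\vx_i^{(t)}\|^2 = \frac1\eta\|\vx^{(t+1)}-\vx^{(t)}\|^2$.

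\textbf{Step 3 (combine and telescope).} Combining the two steps gives the lower bound $(\frac1\eta - \frac L2)\|\vx^{(t+1)}-\vx^{(t)}\|^2$. When $\eta\le 1/L$ we have $L/2 \le \frac{1}{2\eta}$, so this is at least $\frac{1}{2\eta}\|\vx^{(t+1)}-\vx^{(t)}\|^2$, which is \eqref{eq:pot-improvement}. Summing over $t=1,\dots,T$, the left-hand side telescopes to $\Phi(\vx^{(T+1)})-\Phi(\vx^{(1)}) \le \Phirange$. Multiplying through by $2\eta$ then yields the stated bound $\sum_t\sum_i \|\vx_i^{(t+1)}-\vx_i^{(t)}\|^2 \le 2\eta\Phirange$.

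\textbf{Main obstacle.} The main subtlety lies in Step 2, namely in confirming that \Cref{lemma:FTRL-improvement} really holds as stated. The lemma is written with a constant $\eta$, which matches the hypothesis here. Its proof should add the first-order optimality conditions of the two consecutive $\ftrl$ problems, at rounds $t$ and $t+1$, each tested against the other iterate. Adding them and invoking strong convexity gives $\langle \vx^{(t+1)}-\vx^{(t)}, \vu^{(t)}\rangle \ge \frac1\eta\big(\langle\nabla\cR(\vx^{(t+1)})-\nabla\cR(\vx^{(t)}), \vx^{(t+1)}-\vx^{(t)}\rangle\big) \ge \frac1\eta\|\vx^{(t+1)}-\vx^{(t)}\|^2$. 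Since that lemma is given earlier, I will simply invoke it, noting only that the regularizers' norms must coincide with those used to define $L$.
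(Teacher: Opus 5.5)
Your proposal is correct and follows the paper's proof essentially step for step. The steps are the $L$-smoothness lower bound, then \Cref{lemma:FTRL-improvement} applied to each player with $\vu_i^{(t)} = \nabla_{\vx_i}\Phi(\vx^{(t)})$, then absorbing the $\frac{L}{2}$ term via $\eta \le 1/L$, and finally telescoping against $\Phirange$, with your care about the global norm and the constant learning rate being exactly what the paper implicitly relies on.
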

The second implication---the fact that $\ftrl$ has a bounded second-order path length---follows from~\eqref{eq:pot-improvement} from a telescopic summation, noting that $\Phirange$ denotes the range of the potential function, which is bounded.

An interesting implication of~\Cref{prop:pot-improvement} is that simultaneous $\ftrl$ asymptotically converges to a set in which the value of the potential is the same (\Cref{prop:constant-pot}). \Cref{prop:pot-improvement} also implies that $\lim_{t \to \infty} \|\vx^{(t+1)} - \vx^{(t)} \| \to 0 $. Specifically, for any $\epsilon > 0$, $O_\epsilon(1/\epsilon^2)$ iterations suffice so that $\|\vx^{(t+1)} - \vx^{(t)} \| \leq \epsilon$. For $\md$ with a smooth regularizer, this in fact implies that $\vx^{(t+1)}$ is an $O_\epsilon(\epsilon)$-Nash equilibrium~\citep{Anagnostides22:Last}; however, this is not the case for $\ftrl$, no matter the choice of the regularizer. In other words, $\ftrl$ can have \emph{spurious fixed points}.

\begin{example}[Spurious fixed points of $\ftrl$]
    \label{example:spuriousFPs}
    Consider a sequence of utilities $\vu_i^{(\tau)} = (1, 0)$ for all $\tau \in [t]$. Under this sequence, $\ftrl$ is producing a strategy that plays the first action deterministically, up to an error proportional to $1/t$. Now, if $\vu_i^{(t+1)} = (0, 1)$, it holds that $\|\vx_i^{(t+1)} - \vx_i^{(t)} \| \leq O_t(1/t)$, and consequently $\vx_i^{(t+1)}$ is highly suboptimal relative to $\vu_i^{(t+1)}$.
\end{example}

This inertia of $\ftrl$ is indeed what drives our lower bounds presented later in~\Cref{sec:lowerbounds}.

\subsection{Non-asymptotic convergence of no-regret dynamics}

Despite this pathological behavior, we will now establish an upper bound for $\ftrl$. In fact, our result applies to any no-regret dynamics, provided that the updates are alternating and $\epsilon$-lazy. To put this into context, we highlight that no-regret dynamics can fail to converge in potential games~\citep{Kleinberg09:Multiplicative}, so our positive results offer a natural way of bypassing that impossibility.

More precisely, let $\mathfrak{R}(\vu_i^{(1)}, \dots, \vu_i^{(t)}) \in \cX$ denote the output of the regret minimization algorithm upon observing the sequence $(\vu_i^{(\tau)})_{\tau=1}^t$, which we assume guarantees $\reg_i^{(t)} \leq O_t(t^{1-\alpha)})$ for some $\alpha \in (0, 1]$. If $\tvx_i^{(t+1)} = \mathfrak{R}(\vu_i^{(1)}, \dots, \vu_i^{(t)})$ at each time $t$, the strategy $\vx_i^{(t+1)}$ in the $\epsilon$-\emph{lazy} version is defined as
\begin{equation}
    \label{eq:lazy}
    \vx_i^{(t+1)} = 
\begin{cases} 
\tilde{\vx}_i^{(t+1)} & \text{if } \langle \tvx_i^{(t+1)} - \vx_i^{(t)} , \vu_i^{(t)} \rangle  \ge \epsilon, \\
\vx_i^{(t)} & \text{otherwise.}
\end{cases}
\end{equation}
That is, the update occurs only if it delivers at least an $\epsilon$ improvement relative to the current utility. In this context, we prove the following result.

\begin{restatable}{theorem}{lazynoregret}
    \label{theorem:positive-precise}
    Alternating $\epsilon$-lazy no-regret dynamics converge to a $2\epsilon$-Nash equilibrium after at most $\exp(O_\epsilon(1/\epsilon^2))$ rounds.
\end{restatable}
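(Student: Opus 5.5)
The plan follows the outline given in the introduction: first bound the total number of actual strategy changes, then bound how long the dynamics can go without changing.

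\textbf{Step 1: few updates.} Under alternating updates only one player $i$ moves at a time, so by the potential property
\[
\Phi(\vx^{(t+1)}) - \Phi(\vx^{(t)}) = \langle \vx_i^{(t+1)} - \vx_i^{(t)}, \vu_i^{(t)} \rangle
\]
when $\Phi$ is multilinear in the player's own block. In the general $L$-smooth case the difference is at least this inner product minus $\frac{L}{2}D_i^2$. I would handle the smooth case by rescaling, or simply restrict to the multilinear normal-form setting. By the laziness rule~\eqref{eq:lazy}, every executed update increases $\Phi$ by at least $\epsilon$. Hence the total number of executed updates is at most $\Phirange/\epsilon$.

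\textbf{Step 2: stagnation certifies equilibrium.} Suppose that during a window all players have been frozen for $W$ consecutive rounds at $\vx$, so that each player keeps observing the same utility $\vu_i = \vu_i(\vx)$. I will show that, once $W$ is large relative to the time $t_0$ at which the window starts, $\vx$ is a $2\epsilon$-Nash equilibrium. Assume instead that some player $i$ has a deviation $\vx_i'$ with $\langle \vx_i' - \vx_i, \vu_i\rangle > 2\epsilon$. Since the update was rejected in every round of the window, the suggestions satisfy $\langle \tvx_i^{(t)}, \vu_i\rangle < \langle \vx_i,\vu_i\rangle+\epsilon$ throughout. The underlying regret minimizer's regret over its first $t_0+W$ rounds, measured against $\vx_i'$, is then at least
\[
\epsilon W - 2B D\, t_0 .
\]
Here the first $t_0$ rounds contribute at most $2BD$ each in absolute value, where $D$ bounds the relevant diameter in the appropriate norm pairing. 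The regret bound gives $\reg_i^{(t_0+W)} \le C (t_0+W)^{1-\alpha}$. The inequality $\epsilon W - 2BDt_0 \le C(t_0+W)^{1-\alpha}$ fails once $W \ge K t_0$ with $K = O(BD/\epsilon)$, provided $t_0+W \ge (C/\epsilon)^{1/\alpha}$ up to constants. So a freeze of length roughly $K t_0 + \mathrm{poly}(1/\epsilon)$ forces $\vx$ to be a $2\epsilon$-NE. With alternation, "frozen" means that every player was offered a turn and declined. I would treat each player's own subsequence of turns, which keeps the argument the same up to a factor of $n$.

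\textbf{Step 3: recursion.} Let $t_k$ be the time of the $k$-th update. By Step 2, if no $2\epsilon$-NE has been reached, then $t_{k+1} \le (K+1)t_k + \mathrm{poly}(1/\epsilon)$. Iterating over at most $\Phirange/\epsilon$ updates gives
\[
T \le (K+1)^{O(\Phirange/\epsilon)}\,\mathrm{poly}(1/\epsilon) = \exp\bigl(O(\epsilon^{-1}\log(1/\epsilon))\bigr),
\]
which is within the claimed $\exp(O_\epsilon(1/\epsilon^2))$. A cruder accounting also lands there. For instance, with $\alpha=1/2$ one needs $W \gtrsim t_0/\epsilon + 1/\epsilon^2$, giving growth factors $O(1/\epsilon)$ per update. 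The $1/\epsilon^2$ in the exponent may also arise if the count in Step 1 is $1/\epsilon^2$, as happens in the smooth simultaneous variant.

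\textbf{Main obstacle.} The delicate step is Step 2. There I must carefully relate the rejected suggestions of the lazy wrapper to the regret of the underlying algorithm, which is fed the true utilities even when its suggestions are discarded. I must also control the history prior to the freeze, which is exactly what produces the multiplicative (hence exponential) growth of $t_k$.
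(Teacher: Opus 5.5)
Your proposal is correct and follows the paper's proof essentially step for step. The paper likewise bounds the number of updates by $\Phirange/\epsilon$ via the potential property, alternation and laziness. It then runs the same regret accounting: the pre-freeze history costs at most $BD\,t_k$, and each rejected round of a non-$2\epsilon$-equilibrium freeze contributes $\epsilon$. This gives the multiplicative recursion $t_{k+1}\le t_k\,(1+O(1/\epsilon))$, iterated over at most $\Phirange/\epsilon$ updates. Your two departures are refinements of the same argument rather than a different route. First, an explicit additive burn-in term replaces the paper's standing assumption that $\reg_i^{(t)}\le \epsilon t/2$ already holds at the update time. Second, the sharper $\exp(O(\epsilon^{-1}\log(1/\epsilon)))$ count replaces the paper's cruder $(1+x)^k\le e^{kx}$ step.
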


Our analysis proceeds by bounding the total number of updates that can occur (\Cref{lem:number_of_updates}) and the duration of the intervals between consecutive updates (\Cref{lem:epoch_recursion}). Specifically, we derive the recursion $T_{k+1} \leq T_k \left( 1 + O_\epsilon\left( \frac{1}{\epsilon} \right) \right)$. What is intriguing is that the mechanism driving our upper bound mirrors the construction of our lower bound.

\section{Exponential lower bound for FTRL}
\label{sec:lowerbounds}

In this section, we show that \ftrl can take exponentially many rounds to reach an approximate Nash equilibrium even in a two-player identical-interest games. Our lower bound applies to either alternating or simultaneous updates; in what follows, we consider simultaneous updates for concreteness. The construction is also robust in terms of using laziness per~\eqref{eq:lazy}, establishing the tightness of~\Cref{theorem:positive-precise} up to factors in the exponent. 

In what follows, we provide a high-level overview of the main steps in our argument; the formal proofs are deferred to \Cref{sec:proofs_lb}.
The class of games we study is based on the construction of \citet{Panageas23:Exponential}, originally introduced to analyze fictitious play. We employ a variant of their class of games, formally defined in~\cref{subsec:init}.
For an odd dimension $m$, we define a payoff matrix $\mA$ whose maximum entry equals $2m-1$, and we take the action sets to be $\cA_1=\cA_2=[m]$.
For each payoff value $k\in\{3,\dots,2m-1\}$, we denote by $a_1(k),a_2(k)\in[m]$ the unique row and column indices, respectively, such that $\mA[a_1(k),a_2(k)]=k$. An illustration of $\mat{A}$ is given in~\Cref{fig:matA}. The role of the extra row and column compared to $\mat{B}$ concerns the initialization; as we shall see, $\ftrl$ initialized uniformly at random on $\mat{A}$ will end up at the beginning of the spiral in the $\mat{B}$ submatrix.\footnote{For simplicity in the exposition, we allow the payoffs to be larger than 1. As we explain in~\Cref{rem:bounded_payoffs}, it is straightforward to adjust our construction even when all payoffs are in $[-1, 1]$.}

To examine the behavior of $\ftrl$ in this class of games, it is essential to keep track of the \emph{cumulative utility gap} of each action, defined as
\[
\gap{i}{t}{a_i}
\defeq
\max_{a_i' \in \cA_i} \sum_{\tau=1}^{t} \vu_i^{(\tau)}[a_i']
-
\sum_{\tau=1}^{t} \vu_i^{(\tau)}[a_i] \geq 0
\]
for each player $i \in \{1, 2\}$. An important observation is that for $\ftrl$ algorithms, an action with large gap will always be played with small probability, as we establish in the lemma below.

\begin{restatable}{lemma}{GapToProb}
\label{lem:regret_gap}
Suppose that $(\vx_i^{(t)})_{t \geq 1}$ is updated using $\ftrl$. If $\gap{i}{t}{a_i} \geq \gamma > 0$ for some action $a_i \in \cA_{i}$, then
\begin{equation}
    \label{eq:gap-prob}
    \vx_{i}^{(t+1)}[a_i] \leq \frac{\regcon}{\eta^{(t)} \gamma},
\end{equation}
where $R$ denotes the range of the regularizer.
\end{restatable}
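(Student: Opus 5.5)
The plan is to compare the FTRL objective at its maximizer $\vx_i^{(t+1)}$ with its value at a competitor that moves all the mass from $a_i$ to a best cumulative action. Write $\vec{U} \defeq \sum_{\tau=1}^t \vu_i^{(\tau)}$ and $\eta \defeq \eta^{(t)}$. Let $a^\star \in \argmax_{a'} \vec{U}[a']$, so that $\vec{U}[a^\star] - \vec{U}[a_i] = \gap{i}{t}{a_i} \geq \gamma$. Set $p \defeq \vx_i^{(t+1)}[a_i]$; we may assume $a^\star \neq a_i$, since otherwise the gap is zero.

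Define $\vx' \defeq \vx_i^{(t+1)} + p(\ve_{a^\star} - \ve_{a_i})$. This is again a point of the simplex $\cX_i = \Delta(\cA_i)$: it agrees with $\vx_i^{(t+1)}$ except that the mass $p$ on $a_i$ is moved to $a^\star$.

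Optimality of $\vx_i^{(t+1)}$ in \eqref{eq:FTRL} gives
\[
\langle \vx_i^{(t+1)}, \vec{U}\rangle - \tfrac{1}{\eta}\cR(\vx_i^{(t+1)}) \geq \langle \vx', \vec{U}\rangle - \tfrac{1}{\eta}\cR(\vx').
\]
Rearranging, this becomes
\[
p\,(\vec{U}[a^\star] - \vec{U}[a_i]) = \langle \vx' - \vx_i^{(t+1)}, \vec{U}\rangle \leq \tfrac{1}{\eta}\bigl(\cR(\vx') - \cR(\vx_i^{(t+1)})\bigr) \leq \tfrac{\regcon}{\eta}.
\]
The last step uses only the definition of the range $\regcon$, since both points lie in $\cX_i$. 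Since the gap is at least $\gamma$, we get $p\gamma \leq \regcon/\eta$, which is exactly \eqref{eq:gap-prob}.

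No step is a real obstacle. The one point to check is that the competitor $\vx'$ is feasible, and this holds because we only relocate mass within the simplex. The argument uses neither strong convexity nor permutation invariance, only the boundedness of $\cR$ on $\cX_i$. If $\cX_i$ were a general convex set rather than the simplex, one would have to argue feasibility separately, but the lemma concerns actions of a normal-form game, so the simplex is the relevant setting.
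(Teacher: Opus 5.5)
Your proposal is correct and follows essentially the same route as the paper's proof. The paper likewise moves the mass on $a_i$ to a maximizer of the cumulative utility, compares the FTRL objective at $\vx_i^{(t+1)}$ with its value at this competitor, and bounds the regularizer difference by the range $\regcon$.
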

An important point about~\Cref{lem:regret_gap} is that the bound relating the gap and the probability of playing the corresponding action in~\eqref{eq:gap-prob} becomes \emph{weaker} as the learning rate gets closer to $0$, which happens at a rate of $1/t^\alpha$. In the beginning, $\eta^{(t)}$ is relatively large, so the regularizer has a limited impact. As the learning rate decreases, the propensity of $\ftrl$ to mix over actions intensifies, introducing a considerable obstacle in the lower bound construction; it is harder to control the trajectory of $\ftrl$ under significant mixing. Our key observation is that the gap for the relevant actions will grow \emph{linearly} in $t$ (\Cref{lem:lower_bound_regret_gap}), thereby subsuming the vanishing effect of the learning rate in~\eqref{eq:gap-prob}.



\begin{figure}
\centering
\begin{subfigure}[t]{0.3\textwidth}
  \centering
  \includegraphics[width=\linewidth]{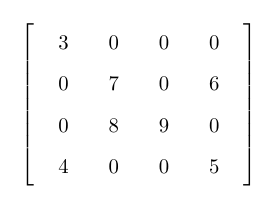}
  \caption{Original matrix $\mB $.}
  \label{fig:matB}
\end{subfigure}
\begin{subfigure}[t]{0.35\textwidth}
  \centering
  \includegraphics[width=\linewidth]{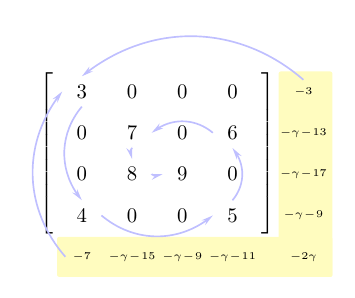}
  \caption{New matrix $\mA$.}
  \label{fig:matA}
\end{subfigure}
\end{figure}

\paragraph{Periods} The key invariance we establish is that the trajectory of \ftrl can be divided into $k$ \emph{periods}, so that in the $k$th period $\ftrl$ places high probability on the profile $(a_1(k),a_2(k))$. Because $\ftrl$ has a tendency to mix, especially under Legendre regularizers, our analysis needs to account for the residue probability. To do so, we fix the threshold $\delta \coloneqq \frac{1}{4m}$ and define the periods as follows.

\begin{restatable}[Transition period]{definition}{transperiod}
\label{def:transition_period}
For each integer $k \in \{3,4,\ldots,2m-1\}$, the \emph{transition period} $k$ is the set of rounds $t$ for which the following conditions hold:
\begin{enumerate}[label=(\roman*)]
\item If $k \ge 4$ is even, then for all $t \in [\tlow{k}, \thigh{k}]$,
\(
\vx_{2}^{(t)}[a_2(k)] \ge 1-\delta
\;
\text{and}
\;
\vx_{1}^{(t)}[a_1(k-1)] + \vx_{1}^{(t)}[a_1(k)] \ge 1-\delta
.
\)

\item If $k \ge 5$ is odd, then for all $t \in [\tlow{k}, \thigh{k}]$,
\(
\vx_{1}^{(t)}[a_1(k)] \ge 1-\delta
\;
\text{and}
\;
\vx_{2}^{(t)}[a_2(k-1)] + \vx_{2}^{(t)}[a_2(k)] \ge 1-\delta
.
\)
\end{enumerate}
\end{restatable}

For $k\ge 3$, we let $\tlow{k}$ and $\thigh{k}$ denote the first and last rounds of period~$k$, respectively. $T_k \coloneqq \thigh{k}-\tlow{k}+1$ denote its length, so that period~$k$ corresponds to the interval $t\in[\tlow{k},\thigh{k}]$.

A key property that we establish is that $\ftrl$ transitions from each period $k$ to its successor period $k+1$.

\begin{restatable}[Period consistency]{property}{MainProp}
\label{property:main}
Suppose both players employ \ftrl. If a round $t$ belongs to period $k$ for some $3 \le k \le 2m-1$, then the subsequent round $t+1$ belongs either to the same period $k$ or to the next period $k+1$.
\end{restatable}

In other words, there are no shortcuts: $\ftrl$ needs to traverse the entire path to reach the maximum payoff.

To argue about~\Cref{property:main}, let us say that $k$ is even. We need to analyze the transition of the row player from $a_1(k-1)$ to $a_1(k)$. When the row player starts playing $a_1(k)$ with higher and higher probability, it triggers a change in the utility observed by the column player. What we need to show is that the transition from $a_1(k-1)$ to $a_1(k)$ will be faster than the time it takes for the column player to react.

This is indeed the case for the following reason. The transition from $a_1(k-1)$ to $a_1(k)$ takes time proportional to $1/\eta^{(t)}$, which is of the order of $t^\alpha$. On the other hand, the time it takes for the column player to react is $\Omega(t)$. This is a high-level simplification; the precise argument is given in~\Cref{sec:valid_period}.

\paragraph{Growth of the cumulative utility gaps} Having established this key invariance, we show that the gap of actions corresponding to future periods grows considerably.

\begin{restatable}{lemma}{gaplb}
\label{lem:lower_bound_regret_gap}
Let $k \ge 6$ be even. Then
\[
\gap{1}{\thigh{k-2}}{a_1(k)}
\;\ge\;
\sum_{\ell = 4}^{k-2}
\Bigl[
(1-\delta)(\ell-1) - \delta(2m-1)
\Bigr] \, T_{\ell}.
\]
Similarly, if $k \ge 5$ is odd, then
\[
\gap{2}{\thigh{k-2}}{a_2(k)}
\;\ge\;
\sum_{\ell = 4}^{k-2}
\Bigl[
(1-\delta)(\ell-1) - \delta(2m-1)
\Bigr] \, T_{\ell}.
\]
\end{restatable}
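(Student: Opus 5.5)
We argue for $k$ even; the odd case is symmetric with the players' roles swapped. Since $\gap{1}{t}{a_1(k)} \ge \sum_{\tau\le t} \bigl(\vu_1^{(\tau)}[a] - \vu_1^{(\tau)}[a_1(k)]\bigr)$ for any fixed comparator action $a$, we do not need to know which action attains the maximum. We fix a convenient comparator in each period and use that the gap is a maximum over $a$ of a sum over periods. Concretely, we lower bound the gap by summing, over the periods $\ell = 4,\dots,k-2$, the per-round difference between a well-chosen action and $a_1(k)$. Rounds before $\tlow{4}$ contribute a nonnegative amount, or one that is absorbed; this is the step where the initialization matters.

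The key per-period estimate is the following. During period $\ell$, one player has mass at least $1-\delta$ on a single action, and the other has mass at least $1-\delta$ on $\{a(\ell-1),a(\ell)\}$.

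\emph{Even $\ell$.} The column player puts mass at least $1-\delta$ on $a_2(\ell)$. The row utility is $\vu_1^{(\tau)}[a] = \sum_{b} \mA[a,b]\,\vx_2^{(\tau)}[b]$. We compare the row action $a_1(\ell)$ against $a_1(k)$.
\begin{itemize}
\item From the spiral structure of $\mA$ (as in \citet{Panageas23:Exponential}), $\mA[a_1(\ell),a_2(\ell)] = \ell$.
\item For $k \ge \ell+2$, the entry $\mA[a_1(k),a_2(\ell)]$ is $0$ or at least not positive. The row of $a_1(k)$ contains only the payoffs $k$ and $k\pm1$ on the spiral, and these lie in different columns.
\end{itemize}
Bounding the residual $\delta$ mass by the extreme entries $0$ and $2m-1$ gives a per-round difference of at least $(1-\delta)\ell - \delta(2m-1) \ge (1-\delta)(\ell-1)-\delta(2m-1)$.

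\emph{Odd $\ell$.} The column player's mass is split between $a_2(\ell-1)$ and $a_2(\ell)$. We pick the comparator $a_1(\ell)$, which is the row containing both $\ell-1$ and $\ell$ in those two columns. It therefore earns at least $\ell-1$ on the $(1-\delta)$ mass, while $a_1(k)$ earns $0$ there, since its nonzero entries lie in columns $a_2(k-1), a_2(k)$ with $k-1 > \ell$. This yields exactly the claimed factor $(1-\delta)(\ell-1) - \delta(2m-1)$.

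Since a single comparator must be used across all periods, we cannot take a different $a$ in each period. Instead we apply the bound period by period to the difference $\max_{a}\sum - \sum[a_1(k)]$. This requires a genuine inequality, so we use the following inductive invariant: $\gap{1}{t}{a_1(k)} \ge \gap{1}{t}{a'} + \sum_{\tau \le t}\bigl(\vu_1^{(\tau)}[a'] - \vu_1^{(\tau)}[a_1(k)]\bigr)$ for each $a'$. Writing $S(t) := \sum_{\tau\le t}\vu_1^{(\tau)}[a_1(k)]$, we show that $\max_a \sum_{\tau\le t}\vu_1^{(\tau)}[a] - S(t)$ increases over period $\ell$ by at least the per-round bound times $T_\ell$. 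This holds because the leader at the end of period $\ell$ collects at least as much as $a_1(\ell)$ does during period $\ell$, plus its earlier total. More precisely, $\max_a \bigl(X_a + Y_a\bigr) \ge \max_a X_a + \min$-increment fails in general, so instead we take $a^\star$ to be the leader at time $\thigh{\ell}$. By \Cref{property:main} and the spiral, the leader at the end of period $\ell$ is $a_1(\ell)$ or $a_1(\ell+1)$, and its cumulative total dominates $S$ over every earlier period as well. We verify this by induction over periods using the same entrywise comparisons.

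The main obstacle is exactly this comparator issue: we must show that a single row action, namely the current leader $a_1(k-2)$ or $a_1(k-3)$, beats $a_1(k)$ by the stated amount in every earlier period $\ell$, and not only in period $\ell$ itself. We would handle this by checking on the spiral that $a_1(k)$ has payoff $0$ against every column $a_2(\ell)$ with $\ell \le k-2$, so $\vu_1^{(\tau)}[a_1(k)] \le \delta(2m-1)$ throughout. It then suffices to show $\max_a \sum_{\tau \le \thigh{k-2}} \vu_1^{(\tau)}[a] \ge \sum_{\ell}(1-\delta)(\ell-1)T_\ell$. That inequality follows from the FTRL no-loss-from-leading argument: the cumulative best response dominates the realized utility $\sum_\tau \langle \vx_1^{(\tau)},\vu_1^{(\tau)}\rangle$ up to the regret. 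Alternatively, and more cleanly, it follows by induction, since the leader's total is at least the leader at $\thigh{\ell-1}$'s total plus that leader's in-period earnings, which are at least $(1-\delta)(\ell-1)$ per round by the spiral adjacency.
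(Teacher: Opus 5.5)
Your closing sentence (``Alternatively, and more cleanly, \dots''), combined with your bound $\vu_1^{(\tau)}[a_1(k)]\le\delta(2m-1)$, is a correct proof. It is the paper's argument in aggregated form.

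The paper works round by round:
\begin{itemize}
\item by \Cref{lem:order_preservation}, the cumulative-utility leader during period $\ell$ lies in $\{a_1(\ell-1),a_1(\ell)\}$;
\item so by \Cref{lem:aux_max_min_increment}, the maximal cumulative utility grows by at least $(1-\delta)(\ell-1)$ per round;
\item meanwhile $a_1(k)$ earns at most $\delta(2m-1)$ per round;
\item summing over periods $4,\dots,k-2$ and dropping $\gap{1}{1}{a_1(k)}\ge 0$ gives the claim.
\end{itemize}
You telescope per period instead. With $C_a(t)\defeq\sum_{\tau\le t}\vu_1^{(\tau)}[a]$ and $a^\dagger$ the leader at $\thigh{\ell-1}$, you use $\max_a C_a(\thigh{\ell})\ge\max_a C_a(\thigh{\ell-1})+\sum_{\tau=\tlow{\ell}}^{\thigh{\ell}}\vu_1^{(\tau)}[a^\dagger]$. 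This is equivalent. Your bound on $a_1(k)$ holds because row $a_1(k)$ has positive entries only in columns $a_2(k)=a_2(k-1)$ and $a_2(k+1)$. To identify $a^\dagger\in\{a_1(\ell-1),a_1(\ell)\}$ you need \Cref{lem:order_preservation} together with \Cref{lemma:aux_prob_comparison} at round $\tlow{\ell}$. \Cref{property:main} alone is not enough, since it only constrains probabilities.

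The rest of the proposal contains claims that are false and should be removed.
\begin{itemize}
\item \textbf{The inequality you say ``fails in general'' always holds.} If $a^\dagger\in\argmax_a X_a$, then $\max_a(X_a+Y_a)\ge X_{a^\dagger}+Y_{a^\dagger}=\max_a X_a+Y_{a^\dagger}$. This is exactly what legitimizes your opening idea of using a different comparator in each period, namely the leader at the \emph{start} of that period.
\item \textbf{Switching to the leader $a^\star$ at the \emph{end} of the period goes the wrong way.} Here $\max_a(X_a+Y_a)=X_{a^\star}+Y_{a^\star}\le\max_a X_a+Y_{a^\star}$, which is an upper bound, not the lower bound you need.
\item \textbf{The ``main obstacle'' you formulate is false.} You ask that a single row such as the current leader $a_1(k-2)$ beat $a_1(k)$ by the stated amount in every earlier period. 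For $k\ge 8$, that row has a zero entry in column $a_2(4)$. So during period $4$ its per-round advantage over $a_1(k)$ is at most $\delta(2m-1)<\tfrac12$, far below $3(1-\delta)-\delta(2m-1)$.
\item \textbf{The regret route fails as stated.} \Cref{prop:FTRL-regret} bounds regret from above. Lower-bounding $\max_a C_a(t)$ by the realized utility $\sum_{\tau\le t}\langle\vx_1^{(\tau)},\vu_1^{(\tau)}\rangle$ instead requires a \emph{lower} bound on regret. Even granting one of order $-\regcon/\eta^{(t)}$ (which FTRL does satisfy), the realized payoff per round is only about $(1-\delta)^2(\ell-1)$. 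So you would not recover the stated coefficients.
\end{itemize}

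Like the paper, you also bound the residual mass using $0$ as the smallest entry. But column $m$ of $\mA$ has entries of order $-\gamma$, so the lower bound on $a^\dagger$'s earnings additionally needs $\vx_2^{(t)}[m]$ to be tiny.
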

When $\delta$ is small enough, we use~\Cref{lem:lower_bound_regret_gap} to derive the following recursion.

\begin{restatable}[Recurrence relation of $T_k + T_{k-1}$]{lemma}{recurs}
\label{lem:lower_bound_k_k1}
For a small enough $\delta > 0$,
\[
T_k + T_{k-1}
\;\ge\;
\frac{1}{2}
\sum_{\ell=4}^{k-2}
(\ell - 2)
\,
T_{\ell}
\]
for all $k \ge 6$.
\end{restatable}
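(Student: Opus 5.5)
The plan is to play the gap lower bound of \Cref{lem:lower_bound_regret_gap} against an upper bound on how fast that gap can shrink, in the spirit of the fictitious-play argument of \citet{Panageas23:Exponential}. Take $k \ge 6$ even; the odd case is symmetric with the players swapped. At round $\thigh{k-2}$, \Cref{lem:lower_bound_regret_gap} gives
\[
\gap{1}{\thigh{k-2}}{a_1(k)} \ge \sum_{\ell=4}^{k-2}\bigl[(1-\delta)(\ell-1)-\delta(2m-1)\bigr]T_\ell .
\]
By \Cref{property:main}, after $\thigh{k-2}$ the dynamics pass through periods $k-1$ and $k$ before entering period $k+1$. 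During period $k+1$, which is odd, we need $\vx_1^{(t)}[a_1(k+1)] \ge 1-\delta$. I also need the row player to put mass at least about $1-\delta$ on $a_1(k)$ at the end of period $k$. The reason is that the transition into period $k+1$ requires the row player to have already shifted onto $a_1(k)$ before the column player switches.

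The main step is to show that the gap of $a_1(k)$ must essentially close during periods $k-1$ and $k$. If $\gap{1}{t}{a_1(k)} \ge \gamma$ with $\gamma$ linear in $t$, then \Cref{lem:regret_gap} gives
\[
\vx_1^{(t+1)}[a_1(k)] \le \frac{\regcon}{\eta^{(t)}\gamma} = O\!\left(\frac{\regcon\, t^{\alpha}}{\gamma}\right),
\]
which is below $1-\delta$ once $\gamma \gtrsim \regcon t^\alpha$. So at the start of period $k+1$ the gap must be at most $O(\regcon t^\alpha)$.

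On the other hand, each round changes the gap by at most the payoff range. More precisely, during periods $k-1$ and $k$ the column player is mostly on $a_2(k-1)$ or $a_2(k)$, up to $\delta$ residue. So action $a_1(k)$ gains at most roughly $k$ per round relative to the best action, plus a residue term of order $\delta(2m-1)$ per round. Hence the decrease of the gap over periods $k-1$ and $k$ is at most $\bigl(k+\delta(2m-1)\bigr)(T_{k-1}+T_k)$.

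Combining the two bounds gives
\[
\bigl(k+O(\delta m)\bigr)(T_{k-1}+T_k) \ge \sum_{\ell=4}^{k-2}\bigl[(1-\delta)(\ell-1)-\delta(2m-1)\bigr]T_\ell - O(\regcon\, t^\alpha).
\]

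Next I simplify. With $\delta = 1/(4m)$, small enough, we have $(1-\delta)(\ell-1)-\delta(2m-1) \ge \ell-2$ up to a constant factor close to $1$. Dividing by $k+O(1)$ gives a recursion with weights $(\ell-2)/k$. The stated form, with constant $\tfrac12$ and no $1/k$ factor, suggests the intended bound on the per-round decrease is sharper. During period $k-1$ the row player is on $a_1(k-1)$ and the column player is moving between $a_2(k-2)$ and $a_2(k-1)$. Then the relative advantage of $a_1(k)$ over the current leader is only an $O(1)$ payoff difference in the spiral structure, not $k$, because consecutive entries differ by $1$. So I would bound the per-round decrease by a constant, at most $2$, yielding exactly the factor $\tfrac12$. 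This requires checking the entries of $\mA$ in columns $a_2(k-1)$ and $a_2(k)$ against row $a_1(k)$.

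I expect two obstacles. The first is handling the $O(\regcon t^\alpha)$ slack. This needs the linear growth of the gap, so that the slack is absorbed for large $t$ or $m$ (for example, $t \ge T_4$ large relative to $\regcon$). The second is verifying the constant per-round gap decrease from the structure of $\mA$, including the $\delta$-residue contributions from off-path actions, whose payoffs are at most $2m-1$.
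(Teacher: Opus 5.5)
Your proposal is correct and is essentially the paper's argument. You lower-bound $\gap{1}{\thigh{k-2}}{a_1(k)}$ by \Cref{lem:lower_bound_regret_gap}, and you show the gap of $a_1(k)$ shrinks by at most $1+\delta(2m-1)\le 2$ per round through periods $k-1$ and $k$. The reason is that $a_2(k-1)=a_2(k)$ limits the advantage of $a_1(k)$ over the leader $a_1(k-1)$ to $1$ plus residue; this is exactly \Cref{lem:regret_gap_increase_prev_action} together with the core of \Cref{lem:lower_bound_period_k}. You then require the gap to close by the end of period $k$. The only difference is that closing step. You use \Cref{lem:regret_gap}, which leaves an $O(\regcon\,t^\alpha)$ residual that you would have to absorb via a case split and the largeness of $T_4$. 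The paper instead uses order preservation (\Cref{lem:order_preservation}): once $a_1(k)=a_1(k+1)$ carries mass at least $1-\delta>1/2$, it must maximize cumulative utility, so its gap is exactly zero and your first obstacle disappears.
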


The only way for $\ftrl$ to have a small Nash equilibrium gap under the invariance established in~\Cref{property:main} is for it to reach the last period (\Cref{lem:beneficial_deviation}). However, on account of~\Cref{lem:lower_bound_k_k1}, this takes $2^{\Omega(m \log m)}$ rounds, establishing our lower bound, as formalized in \Cref{theorem:negative_formal}.

\paragraph{Basis of the induction} The validity of our previous argument rests on $t$ being large enough, so that the linear gap $\Omega(t)$ subsumes factors that grow as $t^{\alpha}$. To complete the proof, we establish the basis of the induction. Specifically, we construct a suitable gadget that forces $\ftrl$ to i) immediately transition at the beginning of the spiral and ii) spend a considerable amount of time in the initial period. This can be achieved by appending an additional row and column to the matrix with sufficiently negative entries (\Cref{fig:matA}). When $\ftrl$ is initialized uniformly at random, which is the optimal choice for permutation-invariant regularizers (\Cref{lem:uniform_minimizer_regularizer}), all actions except the first will experience large regret. This means that escaping from the initial period will require many rounds so as to overcome the negative regret. Setting the negative entries in the extra row and column appropriately guarantees the desired bound on the length of the initial period,  as we formalize in~\Cref{sec:proofs_lb}.

\section{Doubly exponential lower bound for fictitious play}
\label{sec:multiplayer}

We now turn to multi-player potential games. We will prove an exponentially stronger lower bound for fictitious play ($\fictp$). 

Our construction is based on a connection we make to a graph-theoretic problem known as \emph{snake in the box}. This was first introduced by~\citet{Kautz58:Unit} in the context of coding theory, and has many interesting applications~\citep{Klee70:Maximum,Klee67:Method}. The basic version of the problem is to identify a path along the edges of a high-dimensional hypercube with the following property. The path begins at some vertex of the hypercube and traverses the edges---two vertices are connected if they differ by a single bit---to as many vertices as it can reach, subject to the constraint that every time it arrives at a new vertex, the previous one and \emph{all of its neighbors can never be used} going forward. Such a path is called a \emph{snake}. \Cref{fig:snake_path} (left) portrays an example for the 4-dimensional hypercube. 

This problem can be defined for any graph. In our context, we associate each joint action with a vertex in the graph, and two vertices are connected with an edge if there is a unilateral deviation that goes from one to the other. Under this mapping, \Cref{fig:snake_path} (right) shows a snake in a 3-player game where each player has 4 actions. Our key observation is that what essentially drove our lower bound in two-player games is the defining property of a snake. Indeed, the payoff matrix used in our two-player lower bound (\Cref{fig:matB}) induces a snake that spirals toward the center.

For our purposes, it suffices to restrict our attention to an $n$-dimensional hypercube. A well-known fact is that there exists a snake with exponential length~\citep{Abbott88:Snake,Evdokimov69:Maximal}, although characterizing the precise length is a major open problem in graph theory.

\begin{lemma}[\citealp{Evdokimov69:Maximal}]
    For any $n \in \N$, there exists a snake on the $n$-dimensional hypercube with length $C \cdot 2^{n}$, for some absolute constant $C > 0$.
\end{lemma}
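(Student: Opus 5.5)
The statement is a purely combinatorial fact about induced paths in the hypercube $Q_n$, so I would not use any of the game-theoretic machinery developed above. The plan is a recursive, product-type construction in the spirit of \citet{Evdokimov69:Maximal} and \citet{Abbott88:Snake}. Let $s(n)$ denote the maximum snake length in $Q_n$. I would prove a recursion of the form
\[
  s(n+k_j) \;\geq\; 2^{k_j}\,(1-c_j)\, s(n)
\]
along a sequence of block sizes $k_j$, with losses $c_j$ satisfying $\sum_j c_j < \infty$. Iterating from a fixed small base case, for which snakes can be found by hand, then gives $s(n) \geq 2^n \prod_j (1-c_j) \cdot s(n_0)/2^{n_0} = C \cdot 2^n$. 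Values of $n$ that fall between the steps of the sequence are handled by monotonicity $s(n+1)\ge s(n)$, at the cost of a constant factor.

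\textbf{Gluing step.} Split the coordinates of $Q_{n+k}$ into inner coordinates ($Q_n$) and outer coordinates ($Q_k$).
\begin{itemize}
  \item Choose an outer path $v_1, v_2, \dots$ in $Q_k$ that is Gray-code-like: consecutive vertices differ in one bit, and it visits all but a small fraction of $Q_k$.
  \item At each outer vertex $v_i$, lay down a copy of an inner snake. Consecutive copies are joined by flipping the outer bit that separates $v_i$ from $v_{i+1}$ at a chosen endpoint.
  \item For the concatenation to be a snake, two fibres $\{v_i\}\times Q_n$ and $\{v_j\}\times Q_n$ with $v_i,v_j$ adjacent in $Q_k$ must carry inner vertex sets that are never equal or adjacent, except at the designated junction.
  \item Since the outer path is not itself induced, I would enforce this with inner ``colours.'' Partition $Q_n$ into a few translates of a linear code of minimum distance $\ge 3$, in the style of a Hamming code. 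The inner snake is then built so that its copies at outer-adjacent vertices use colour classes at distance $\ge 2$. This is done by applying translations and coordinate permutations of the base snake, which are cube automorphisms and so preserve the snake property.
\end{itemize}

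\textbf{Main obstacle.} The hard step is making the junction and colour bookkeeping cost only a $(1-c_j)$ fraction with $c_j$ summable, rather than a constant factor per level. A constant factor per level would compound to $2^{n}/\mathrm{poly}$ or worse.

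My first attempt would be to let the block sizes $k_j$ grow, for example doubling. The loss at each level then comes from two sources: the $O(1)$ discarded vertices near each junction, relative to a snake of length $\Theta(2^{n})$, and the unvisited outer vertices, of which there are about $2^{k_j}/k_j$. This should give $c_j = O(1/k_j) + O(2^{-n_j})$, which is summable for geometrically growing $k_j$.

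If the colouring trick fails to give induced junctions, I would fall back on a weaker multiplicative recursion, still with summable losses, obtained from Abbott--Katchalski-type doubling arguments. Failing that, I would simply invoke the cited results, since the lemma is classical.
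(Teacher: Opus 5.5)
The paper does not prove this lemma at all. It quotes it from \citet{Evdokimov69:Maximal}, with \citet{Abbott88:Snake} cited alongside. So your last fallback---simply invoking the classical result---is exactly what the paper does, and it is the appropriate move.

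The self-contained construction you sketch is not a proof, because its central step is missing. For $v\neq w$, the vertices $(v,x)$ and $(w,y)$ of $Q_{n+k}$ are adjacent iff $v\sim w$ in $Q_k$ and $x=y$. Inner adjacency across fibres is therefore irrelevant. What inducedness actually demands is that the inner vertex sets $S_v,S_w$ of any two visited, outer-adjacent fibres be disjoint, except that consecutive fibres share exactly their junction vertex. Your outer path visits almost all of $Q_k$, so this must hold across almost every edge of $Q_k$. You therefore need many pairs of \emph{disjoint} automorphic images of a snake of length $(1-o(1))\,s(n)$.

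Nothing in the plan produces these pairs:
\begin{itemize}
\item Averaged over translations, a set $S\subseteq Q_n$ with $|S|=c2^n$ meets its translate in $|S|^2/2^n=c^2 2^n$ points, so disjointness is not automatic.
\item The standard way to force disjointness is to place the snake in a half-cube and translate across the splitting coordinate. That costs exactly the factor $2$ per level you are trying to avoid.
\item The Hamming-code colouring does not help. A translate of a distance-$3$ code is an independent set, so no snake can live in a colour class, since consecutive snake vertices are adjacent. The phrase ``copies use colour classes at distance $\ge 2$'' therefore has no content.
\end{itemize}
Consequently the loss bound $c_j=O(1/k_j)$ is asserted, not derived. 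This near-lossless gluing is the real content of Evdokimov's theorem, and it remains open in your proposal.

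Separately, the interpolation step is wrong as stated. With doubling block sizes, $n_{j+1}-n_j=k_j=\Theta(n_j)$. For $n_j<n<n_{j+1}$, monotonicity only yields $s(n)\ge s(n_j)\ge C2^{n_j}$. That falls short of $C2^n$ by a factor up to $2^{k_j-1}=2^{\Theta(n)}$, which is not a constant. To cover every dimension, you would need the recursion from every base $n$ and for essentially arbitrary block sizes. Then each target dimension could be reached exactly, from a bounded base, by a schedule with summable losses.
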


\begin{figure*}
    \centering
    \includegraphics[scale=0.4]{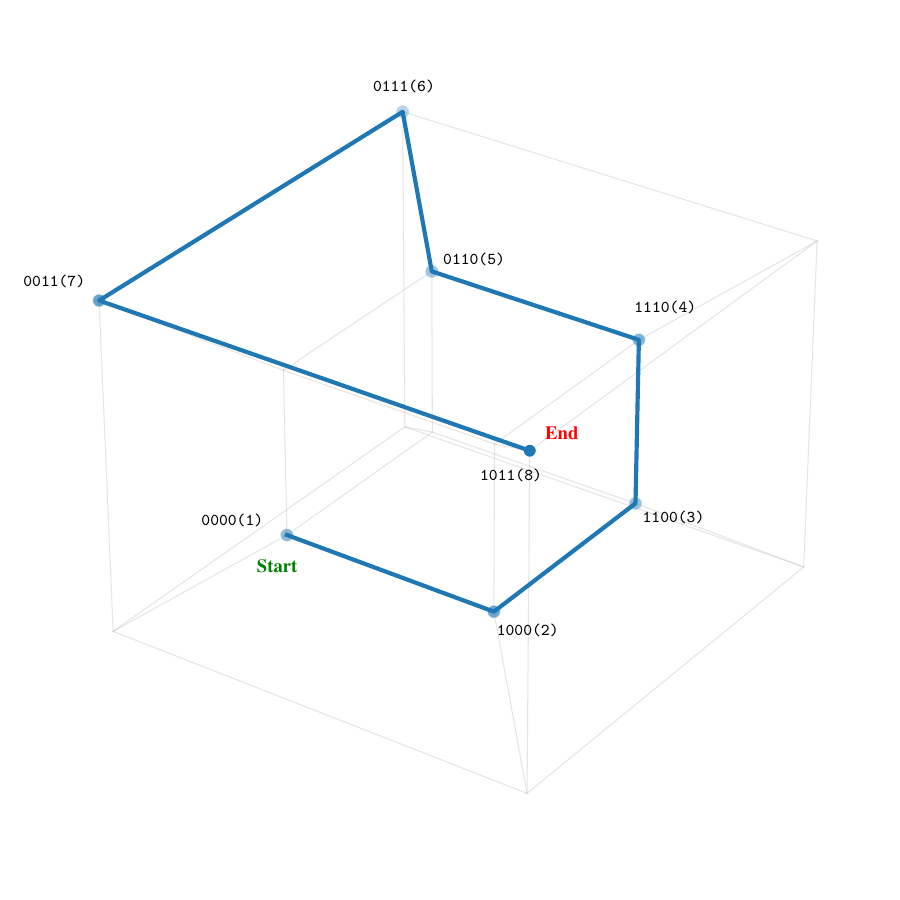}
    \includegraphics[scale=0.35]{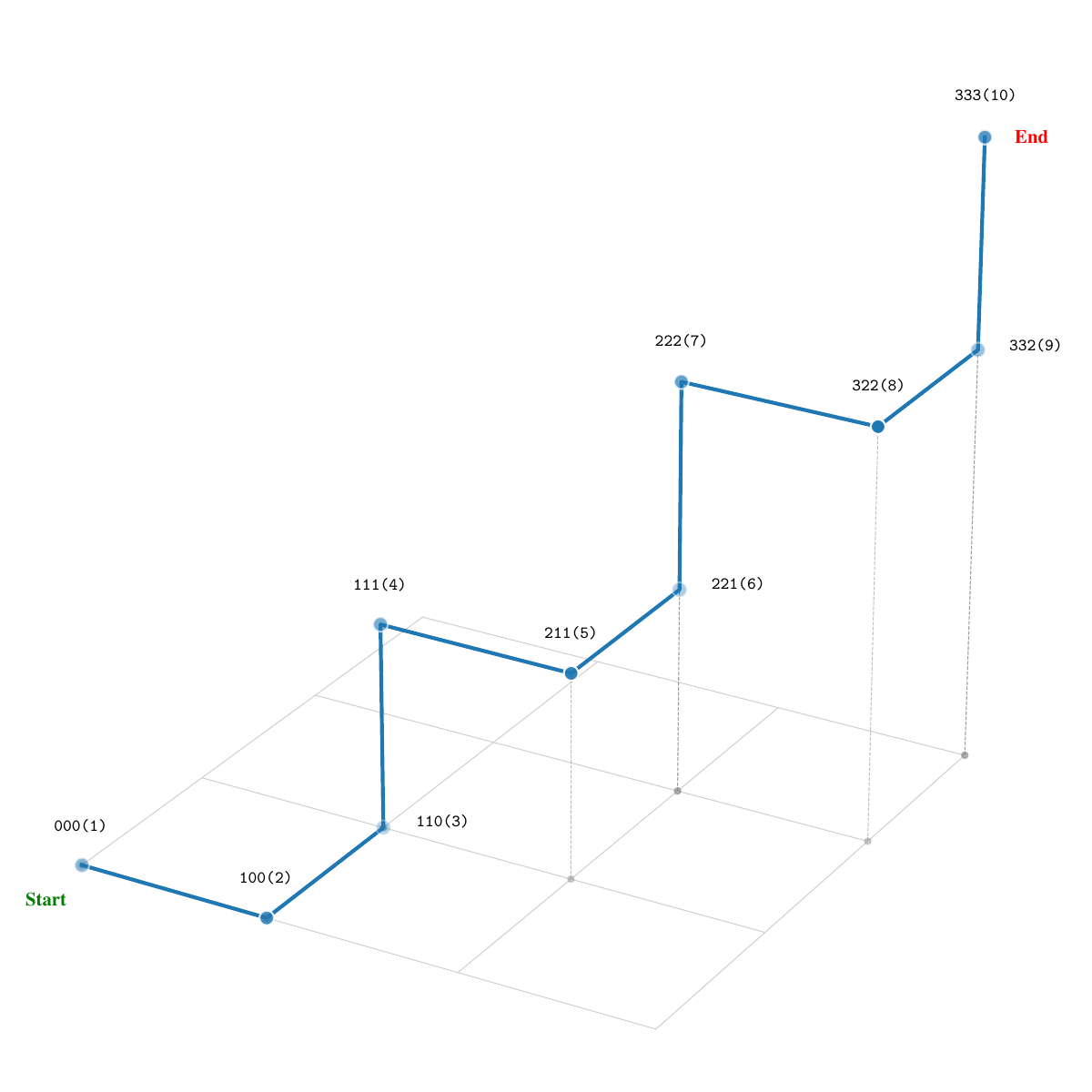}
    \caption{A snake on a 4-dimensional hypercube (left) and a snake corresponding to a 3-player 4-action game (right).}
    \label{fig:snake_path}
\end{figure*}

Let $P$ be such a snake in the $n$-dimensional hypercube. We proceed by embedding this path into a potential game. Specifically, we construct a binary-action $n$-player game with identical interests as follows. A joint action $(a_1, \dots, a_n)$ is in one-to-one correspondence with vertices of the hypercube. For a joint action $\vec{a} = (a_1, \dots, a_n) \in P$, we let $\idx(\vec{a})$ be the position of $\vec{a}$ on the path $P$, starting from $1$. We define the utility, for each player $i \in [n]$, as
\begin{equation}
    u_i(\vec{a}) = 
    \begin{cases}
        0 & \text{if } \vec{a} \notin P,\\
        \idx(\vec{a}) & \text{otherwise}.
    \end{cases}
\end{equation}

We assume that $\fictp$ is initialized at the start of the path $P$. By definition, it will always transition to pure strategies. Furthermore, because of the structure of the game, a basic invariance is that it will always remain on the path. Now, by construction, in every joint action on the path, there is a player who can improve the utility by at least 1 through a unilateral deviation, unless the joint action corresponds to the last vertex in the path, which is the global maximum of the utility function.

The crucial role of the snake property is this: at all iterations before reaching the end of the path, \emph{exactly} one player will have positive best-response gap. This is so because, by definition, all subsequent vertices except the immediate successor cannot be reached through a unilateral deviation. In other words, there are no shortcuts, and each edge traversal is caused by the movement of a single player.

If $T_k$ is the number of iterations that $\fictp$ spent on the action profile corresponding to payoff $k$, we establish the following recursion (the proof is in~\Cref{appendix:multiplayer}).

\begin{restatable}{lemma}{recursionsnake}
    \label{lemma:recursionsnake}
    For any $k \geq 2$, $T_{k} \geq (k -1) T_{k-1} + T_{k-2} + T_{k-3} - \sum_{\kappa = 1}^{k-4} \kappa T_\kappa$. ($T_{k}$ for $k \leq 0$ is to be interpreted as $0$.)
\end{restatable}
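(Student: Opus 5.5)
The plan is to read the recursion off the cumulative-utility comparison that player $i_k$ must make before fictitious play can leave the $k$th vertex of the snake. Write $P = (v_1, v_2, \dots)$ for the snake, so that $\idx(v_j) = j$. Let $i_j$ be the unique player whose coordinate differs between $v_j$ and $v_{j+1}$. By the invariance discussed above, play stays on $P$, visits $v_1, v_2, \dots$ in order, and spends $T_j$ consecutive rounds at $v_j$.

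Fix $k$ and let $i \defeq i_k$. Let $c$ be player $i$'s action at $v_k$ and $b$ the other action. $\fictp$ switches player $i$ from $c$ to $b$ only when
\[
D \defeq \sum_{\tau} \bigl( u_i(b, \vec{a}^{(\tau)}_{-i}) - u_i(c, \vec{a}^{(\tau)}_{-i}) \bigr) \geq 0,
\]
where the sum runs over all rounds played so far. I will treat ties as not favouring the switch, or absorb them, since the claimed inequality is not sensitive to them. Grouping rounds by vertex gives $D = \sum_{j \le k} T_j \, \Delta_j$, where $\Delta_j$ is the per-round contribution while play is at $v_j$. The recursion then follows from four per-vertex estimates:
\[
\Delta_k = 1, \qquad \Delta_{k-1} \le -(k-1), \qquad \Delta_{k-2}, \Delta_{k-3} \le -1, \qquad \Delta_\kappa \le \kappa \ \text{ for } \kappa \le k-4 .
\]
Plugging these into $D \ge 0$ at the exit time yields exactly $T_k \ge (k-1) T_{k-1} + T_{k-2} + T_{k-3} - \sum_{\kappa \le k-4} \kappa T_\kappa$.

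\textbf{Snake facts.} The key structural input is that $v_{k+1} = v_k \oplus e_i$ must be non-adjacent to every $v_j$ with $j \le k-1$. First, $i_{k-1} \neq i$, since otherwise $v_{k+1} = v_{k-1}$. Second, $i_{k-2} \neq i$: otherwise $v_{k+1} = v_{k-2} \oplus e_{i_{k-1}}$, which is adjacent to $v_{k-2}$. Third, the only path vertices adjacent to $v_j$ are $v_{j \pm 1}$. All the estimates below use only these three facts.

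\textbf{Estimates for the last three vertices.} I expect this case analysis to be the main obstacle, since each case must invoke the snake property to rule out a "shortcut" neighbour.
\begin{itemize}
\item At $v_k$: $u(b) = k+1$ and $u(c) = k$, so $\Delta_k = 1$.
\item At $v_{k-1}$: player $i$ plays $c$, because $i_{k-1} \neq i$. So $u(c) = k-1$. The vertex $v_{k-1} \oplus e_i$ is not $v_k$, and it is not $v_{k-2}$ because $i_{k-2} \neq i$. Hence it is off the path, giving $u(b) = 0$ and $\Delta_{k-1} = -(k-1)$.
\item At $v_{k-2}$: player $i$ again plays $c$, so $u(c) = k-2$. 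The flipped vertex can only be $v_{k-3}$ or off-path, so $\Delta_{k-2} \le -1$.
\item At $v_{k-3}$, case $i_{k-3} \neq i$: player $i$ plays $c$, and $u(b) \le k-4$ because $v_{k-2}$ differs from $v_{k-3}$ in coordinate $i_{k-3} \neq i$. So $\Delta_{k-3} \le -1$.
\item At $v_{k-3}$, case $i_{k-3} = i$: player $i$ plays $b$ there, and flipping to $c$ gives $v_{k-2}$. So $\Delta_{k-3} = (k-3) - (k-2) = -1$.
\end{itemize}

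\textbf{Estimate for $\kappa \le k-4$.} Here I only need a crude bound. If player $i$ plays $c$ at $v_\kappa$, then $\Delta_\kappa \le (\kappa+1) - \kappa = 1 \le \kappa$. If player $i$ plays $b$ at $v_\kappa$, then $\Delta_\kappa \le \kappa - 0 = \kappa$.

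\textbf{Remaining checks.} Small $k$ is handled by the convention $T_j = 0$ for $j \le 0$. I also need to confirm the indexing: the $T_k$ rounds at $v_k$ must all be counted before the switch. Since $\fictp$ best-responds to the history, a switch occurring immediately after the last round at $v_k$ gives $D \ge 0$ with that full count, which only strengthens the inequality. Finally, no other player moves during this time, because exactly one player has positive best-response gap at each vertex before the end of the path, as established above.
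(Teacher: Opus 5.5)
Your proposal is correct and follows the paper's proof. Both arguments decompose the mover's cumulative utility difference by vertex and use the same per-round bounds: $+1$ at $v_k$, $-(k-1)$ at $v_{k-1}$, at most $-1$ at $v_{k-2}$ and $v_{k-3}$, and at most $\kappa$ at $v_\kappa$ for $\kappa\le k-4$. Your explicit derivation of $i_{k-1}\neq i$ and $i_{k-2}\neq i$ from the induced-path property makes the paper's terser case analysis precise; for instance, your bound $u(b)\le k-4$ at $v_{k-3}$ also covers the case where the flipped vertex is $v_{k-4}$ rather than off the path.

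One wording point: under fictitious play, the reason no other player moves is the cumulative history, not just the current best-response gap. Since its last switch, each non-mover has gained at least $1$ per round in favour of its current action. The paper takes this same invariance for granted, so this is not a gap relative to its proof.
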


We draw attention to the fact that except the first three leading terms, the rest are negative. This is unlike the basic recursion that can be set up in two-player games, and happens because a player continually switches from one action to the other; in two-player games, a player always switches to a new action, which has accumulated negative regret throughout.

We next argue that, despite the negative terms, $T_k \geq (k-1)!$ for $k \geq 1$.

\begin{restatable}{lemma}{factorial}
Any sequence satisfying the recurrence relation of~\Cref{lemma:recursionsnake} grows at least as $(k-1)!$ for $k \geq 1$.
\end{restatable}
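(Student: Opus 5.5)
The plan is strong induction on $k$, carrying alongside the growth bound a second invariant that controls the negative tail of the recurrence. Write $S_j \defeq \sum_{\kappa=1}^{j} \kappa T_\kappa$, with $S_j = 0$ for $j \le 0$. The normalization is $T_1 \ge 1$: $\fictp$ spends at least one iteration at the start of the snake. For every $k \ge 2$ I will prove the joint statement
\[
\text{(a)}\quad T_k \ge (k-1)\,T_{k-1}, \qquad\qquad \text{(b)}\quad S_{k-2} \le T_k + T_{k-1}.
\]
Iterating (a) from $T_1 \ge 1$ gives $T_k \ge (k-1)!$ immediately, so everything reduces to (a) and (b).

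\emph{Base cases.} For $k = 2,3,4$ the sum in \Cref{lemma:recursionsnake} is empty, so the recurrence reads $T_k \ge (k-1)T_{k-1} + T_{k-2} + T_{k-3}$, and (a) follows by dropping the nonnegative terms $T_{k-2}, T_{k-3}$. For (b):
\begin{itemize}[leftmargin=*]
    \item $k=2$: $S_0 = 0$, so (b) is trivial.
    \item $k=3$: $S_1 = T_1 \le T_2 \le T_3 + T_2$, using (a) at $k=2$.
    \item $k=4$: $S_2 = T_1 + 2T_2 \le 3T_2 \le T_3 + T_2$, using $T_3 \ge 2T_2$ from (a) at $k=3$.
\end{itemize}

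\emph{Inductive step for (a).} Let $k \ge 5$ and assume (a) and (b) for all smaller indices. By (b) at index $k-2$ we have $S_{k-4} \le T_{k-2} + T_{k-3}$. The negative part of the recurrence is exactly $-S_{k-4}$, so it is absorbed by the positive terms $T_{k-2} + T_{k-3}$. Hence $T_k \ge (k-1)T_{k-1}$, which is (a) at $k$.

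\emph{Inductive step for (b).} Write
\[
S_{k-2} = S_{k-3} + (k-2)T_{k-2} \le T_{k-1} + (k-1)\,T_{k-2},
\]
using (b) at index $k-1$. By (a), $T_{k-1} \ge (k-2)T_{k-2} \ge T_{k-2}$ for $k \ge 3$, so $(k-1)T_{k-2} \le (k-1)T_{k-1} \le T_k$, where the last step is (a) at $k$, just proved. Therefore $S_{k-2} \le T_{k-1} + T_k$, which is (b) at $k$.

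\emph{Main obstacle.} The only real difficulty is choosing the auxiliary invariant (b). It must be strong enough to cancel $\sum_{\kappa=1}^{k-4}\kappa T_\kappa$ against the two lower-order positive terms $T_{k-2} + T_{k-3}$. It must also be self-propagating, which works because the newly added term $(k-2)T_{k-2}$ is dominated by $T_k$ thanks to the factorial growth in (a). Once (b) is set up this way, the remaining checks are routine.
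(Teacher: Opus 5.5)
Your proof is correct and is essentially the paper's argument. Your invariant (b) at index $k$ is exactly the paper's condition $R_{k+2} \ge 0$, where $R_k = T_{k-2}+T_{k-3}-\sum_{j=1}^{k-4} jT_j$, and your propagation step, which absorbs the new term $(k-2)T_{k-2}$ using $T_k \ge (k-1)T_{k-2}$, is the paper's computation $R_k - R_{k-1} = T_{k-2} - (k-3)T_{k-4} \ge 0$ reindexed. Your explicit normalization $T_1 \ge 1$, together with the implicit nonnegativity of the $T_j$, anchors the factorial bound more cleanly than the paper's appeal to $T_3 = 2!$.
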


Combining, we arrive at the following \emph{doubly} exponential lower bound for $\fictp$ in potential games.

\begin{theorem}
    \label{theorem:FP-doubly}
    $\fictp$ requires at least $(C 2^{n})!$ to converge to a Nash equilibrium of an $n$-player potential game, where $C > 0$ is some absolute constant.
\end{theorem}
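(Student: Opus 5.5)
The proof assembles the snake-in-the-box embedding with the two lemmas just stated. We take $P$ to be the snake on the $n$-dimensional hypercube of length $N \defeq C \cdot 2^n$ guaranteed by~\citet{Evdokimov69:Maximal}, and build the identical-interest binary-action game with $u_i(\vec{a}) = \idx(\vec{a})$ on $P$ and $0$ elsewhere. This is trivially a potential game, with potential $\Phi = u_1$. $\fictp$ is initialized at the first vertex of $P$.

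The first step is to verify the structural invariants used in~\Cref{lemma:recursionsnake}. We argue by induction on the rounds that the realized joint action always lies on $P$, and that it only moves from vertex $p$ to vertex $p+1$. The induction rests on the snake property. At vertex $p$ with $p<N$, the unilateral deviations lead either to off-path vertices, which have payoff $0$, or to $p-1$ or $p+1$. No other later vertex is adjacent to $p$, and vertex $p-1$ has lower payoff. Consequently only the player who flips the coordinate distinguishing $p$ from $p+1$ can have cumulative utility favoring a switch. The delicate part is that cumulative, not current, utilities drive $\fictp$. So we must check, using the history, that no other player's cumulative utility ever favors the opposite bit. This requires tracking each player's past: while the play sat on earlier vertices, the other bit of a player typically yielded $0$ or a smaller payoff. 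I expect this to be what the appendix proof of~\Cref{lemma:recursionsnake} handles, and we may invoke it. It also shows the argument is independent of tie-breaking.

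Second, convergence requires reaching the last vertex. At any vertex $p<N$, the mover has a unilateral deviation gaining exactly $1$. Since $\fictp$ plays pure strategies, the current profile is then not an $\epsilon$-Nash equilibrium for any $\epsilon<1$. Hence the number of iterations before reaching a Nash equilibrium is at least $\sum_{k=1}^{N-1} T_k \ge T_{N-1}$. Here we note that $T_1 \ge 1$ holds because the initial vertex is played at least once, which gives the base case.

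Third, we apply the factorial lemma to the recursion of~\Cref{lemma:recursionsnake} to get $T_{N-1} \ge (N-2)!$, which equals $(C'2^n)!$ after adjusting the constant; for small $n$ the bound can be absorbed into $C$. The main obstacle is the invariant of the first step, namely that the cumulative-utility ordering never lets an off-successor player switch. Beyond that, the remaining work lies in the factorial lemma, where the negative terms $\sum_{\kappa \le k-4} \kappa T_\kappa$ must be dominated. There I would prove by strong induction the stronger statement $T_k \ge (k-1)T_{k-1} \cdot (1-o(1))$. The key estimate is that $\sum_{\kappa\le k-4}\kappa T_\kappa \le T_{k-2} + T_{k-3}$ once the terms grow factorially, since the sum is dominated by its last term $(k-4)T_{k-4}\le T_{k-3}$ up to a geometric tail.
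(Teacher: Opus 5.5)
Your proposal is correct and follows the paper's route. You embed the Evdokimov snake of length $N=C2^n$ into an identical-interest binary-action game, note that every non-terminal profile admits a unilateral gain of exactly $1$, and chain the recursion of \Cref{lemma:recursionsnake} with the factorial lemma to get $T_{N-1}\ge (N-2)!\ge (C'2^n)!$. One small correction: the ``$(1-o(1))$'' version you propose is weaker than what is needed, not stronger. However, your key estimate $\sum_{\kappa\le k-4}\kappa T_\kappa\le T_{k-2}+T_{k-3}$ already gives $T_k\ge (k-1)T_{k-1}$ exactly: it follows from $\kappa T_\kappa\le T_{\kappa+1}$ and monotonicity under the inductive hypothesis, and it plays the same role as the paper's telescoping of $R_k-R_{k-1}$.
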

This holds even for converging to a $\epsilon$-Nash equilibrium with $\epsilon < 1$, although the payoffs in our construction can be exponentially large. By rescaling---$\fictp$ is scale invariant---it follows that doubly exponentially many iterations are needed to converge to an $\epsilon$-Nash equilibrium with $\epsilon \approx 1/2^n$. This closely matches our upper bound in~\Cref{theorem:positive-precise}.

\Cref{theorem:FP-doubly} improves exponentially over existing lower bounds for fictitious play in potential games~\citep{Panageas23:Exponential,Brandt13:Rate}.
\section{Conclusions and future research}

We have established that $\ftrl$, a celebrated online optimization algorithm, can take exponentially many iterations to converge to approximate Nash equilibria in potential games. We also showed a \emph{doubly} exponential lower bound for fictitious play in multi-player games. An interesting question is whether a doubly exponential lower bound applies to $\ftrl$ when the learning rate is small; the primary challenge is that it is significantly harder to hide the path---as we did for fictitious play---when players are mixing.

\section*{Acknowledgments}

Ioannis Panageas is supported by NSF grant CCF-
2454115. Tuomas Sandholm is supported by NIH award A240108S001, the Vannevar Bush Faculty Fellowship ONR N00014-23-1-2876, and National Science Foundation grant RI-2312342. Ioannis Anagnostides is grateful to Emanuel Tewolde and Brian Hu Zhang for helpful discussions at the initial stage of this project.

\bibliography{main}

\clearpage
\appendix

\section{Proofs from Section~\ref{sec:upperbounds}}
\label{appendix:proofs-upperbounds}

This section contains the proofs omitted from~\Cref{sec:upperbounds}. We begin with the one-step improvement property of $\ftrl$.

\FTRLimprovement*

\begin{proof}
Let $\psi^{(t)}(\vx_i) = \left\langle \vx_i, \sum_{\tau=1}^t \vu_i^{(\tau)} \right\rangle - \frac{1}{\eta} \mathcal{R}(\vx_i)$. By definition, $\vx_i^{(t+1)}$ maximizes $\psi^{(t)}(\vx_i)$ with respect to $\mathcal{X}$. The first-order optimality condition implies
\[
\langle \nabla \psi^{(t)}(\vx_i^{(t+1)}), \vx_i^{(t+1)} - \vx_i' \rangle \ge 0 \quad \forall \vx_i' \in \mathcal{X}_i.
\]
Substituting the gradient $\nabla \psi^{(t)}(\vx_i) = \sum_{\tau=1}^t \vu_i^{(\tau)} - \frac{1}{\eta} \nabla \mathcal{R}(\vx_i)$, we have
\begin{equation} \label{eq:opt_t+1_new}
\left\langle \sum_{\tau=1}^t \vu_i^{(\tau)} - \frac{1}{\eta} \nabla \mathcal{R}(\vx_i^{(t+1)}), \vx_i^{(t+1)} - \vx_i' \right\rangle \ge 0 \quad \forall \vx_i' \in \mathcal{X}_i.
\end{equation}
Similarly, the optimality condition for $\vx_i^{(t)}$ (which maximizes $\psi^{(t-1)}$) implies
\begin{equation} \label{eq:opt_t_new}
\left\langle \sum_{\tau=1}^{t-1} \vu_i^{(\tau)} - \frac{1}{\eta} \nabla \mathcal{R}(\vx_i^{(t)}), \vx_i^{(t)} - \vx_i' \right\rangle \ge 0 \quad \forall \vx_i' \in \mathcal{X}_i.
\end{equation}
Substituting $\vx_i' = \vx_i^{(t)}$ in \eqref{eq:opt_t+1_new} and $\vx_i' = \vx_i^{(t+1)}$ in \eqref{eq:opt_t_new},
\begin{align}
\label{eq:ineq1_new}
\left\langle \sum_{\tau=1}^t \vu_i^{(\tau)} - \frac{1}{\eta} \nabla \mathcal{R}(\vx_i^{(t+1)}), \vx_i^{(t+1)} - \vx_i^{(t)} \right\rangle &\ge 0, \\
\label{eq:ineq2_new}
\left\langle \sum_{\tau=1}^{t-1} \vu_i^{(\tau)} - \frac{1}{\eta} \nabla \mathcal{R}(\vx_i^{(t)}), \vx_i^{(t)} - \vx_i^{(t+1)} \right\rangle &\ge 0.
\end{align}
Summing \eqref{eq:ineq1_new} and \eqref{eq:ineq2_new} yields
\[
\left\langle \sum_{\tau=1}^t \vu_i^{(\tau)}, \vx_i^{(t+1)} - \vx_i^{(t)} \right\rangle + \left\langle \sum_{\tau=1}^{t-1} \vu_i^{(\tau)}, \vx_i^{(t)} - \vx_i^{(t+1)} \right\rangle 
- \frac{1}{\eta} \left\langle \nabla \mathcal{R}(\vx_i^{(t+1)}) - \nabla \mathcal{R}(\vx_i^{(t)}), \vx_i^{(t+1)} - \vx_i^{(t)} \right\rangle \ge 0.
\]
Since
\[
\left\langle \sum_{\tau=1}^t \vu_i^{(\tau)}, \vx_i^{(t+1)} - \vx_i^{(t)} \right\rangle + \left\langle \sum_{\tau=1}^{t-1} \vu_i^{(\tau)}, \vx_i^{(t)} - \vx_i^{(t+1)} \right\rangle = \langle \vu_i^{(t)}, \vx_i^{(t+1)} - \vx_i^{(t)} \rangle,
\]
we have
\[
\langle \vu_i^{(t)}, \vx_i^{(t+1)} - \vx_i^{(t)} \rangle 
- \frac{1}{\eta} \left\langle \nabla \mathcal{R}(\vx_i^{(t+1)}) - \nabla \mathcal{R}(\vx_i^{(t)}), \vx_i^{(t+1)} - \vx_i^{(t)} \right\rangle \ge 0.
\]
Since $\mathcal{R}$ is a 1-strongly convex function, 
\[
\left\langle \nabla \mathcal{R}(\vx_i^{(t+1)}) - \nabla \mathcal{R}(\vx_i^{(t)}), \vx_i^{(t+1)} - \vx_i^{(t)} \right\rangle \geq \|\vx_i^{(t+1)} - \vx_i^{(t)} \|^2,
\]
and the claim follows.
\end{proof}

We next leverage~\Cref{prop:pot-improvement} to establish that simultaneous $\ftrl$ converges to a limit set in which the potential has the same value. We refer to~\citet{Losert83:Dynamics} for a related argument.

\begin{proposition}
    \label{prop:constant-pot}
In an $n$-player potential game, the $\omega$-limit set of simultaneous $\ftrl$ has the same potential value.
\end{proposition}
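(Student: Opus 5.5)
The plan is to treat the value of the potential along the trajectory as a Lyapunov function, via \Cref{prop:pot-improvement}, and then pass to limit points by continuity. Fix the learning rate $\eta \leq 1/L$ as in \Cref{prop:pot-improvement}. By~\eqref{eq:pot-improvement}, the sequence $\Phi(\vx^{(t)})$ is nondecreasing in $t$. Since $\Phi$ is continuous (indeed multilinear) on the compact set $\cX = \prod_i \cX_i$, the sequence is also bounded above by $\max_{\cX} \Phi$. Hence it converges to some limit $\Phi^\star \defeq \lim_{t \to \infty} \Phi(\vx^{(t)})$.

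Next, let $\vx^\star$ be any point in the $\omega$-limit set. By definition there is a subsequence $t_j \to \infty$ with $\vx^{(t_j)} \to \vx^\star$, and the $\omega$-limit set is nonempty by compactness of $\cX$. By continuity of $\Phi$, $\Phi(\vx^\star) = \lim_j \Phi(\vx^{(t_j)})$. This limit equals $\Phi^\star$ because every subsequence of a convergent real sequence has the same limit. So every point of the $\omega$-limit set has potential value $\Phi^\star$, which is the claim.

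I would also remark that the second part of \Cref{prop:pot-improvement} gives $\|\vx^{(t+1)} - \vx^{(t)}\| \to 0$. By the standard argument, this makes the $\omega$-limit set connected, which is in the spirit of~\citet{Losert83:Dynamics}. It is not needed for the constant-value conclusion, however.

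The only real subtlety is the step size. Monotonicity relies on $\eta \leq 1/L$; with a nonincreasing schedule $\eta^{(t)}$, the same smoothness argument applies at each step provided $\eta^{(t)} \leq 1/L$ for all $t$. If the learning rate only eventually drops below $1/L$, I would apply the argument from that time onward. This is legitimate because the $\omega$-limit set depends only on the tail of the sequence.

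One point to check is that \Cref{lemma:FTRL-improvement}, which is stated for a constant $\eta$, extends to a varying $\eta^{(t)}$. The difficulty is that the two optimality conditions then involve $\nabla \cR$ scaled by $1/\eta^{(t)}$ and $1/\eta^{(t-1)}$ respectively. If this extension fails, I would restrict the statement to constant $\eta \leq 1/L$, which is the setting of \Cref{prop:pot-improvement}.
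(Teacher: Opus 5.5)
Your proof is correct and follows the paper's argument. With constant $\eta \leq 1/L$, \Cref{prop:pot-improvement} makes $\Phi(\vx^{(t)})$ nondecreasing and bounded, hence convergent to some $\Phi^\star$, and continuity of $\Phi$ along a subsequence converging to any point of the (nonempty) $\omega$-limit set gives that point the value $\Phi^\star$. Restricting to a constant learning rate is also right, since the proof of \Cref{lemma:FTRL-improvement} does not go through directly when $1/\eta^{(t)}$ and $1/\eta^{(t-1)}$ differ, and you correctly avoid relying on the paper's extra remark that limit points are fixed points of $\ftrl$.
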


\begin{proof}
By~\Cref{prop:pot-improvement}, we know that
\[
\Phi(\vec{x}^{(t+1)}) - \Phi(\vec{x}^{(t)}) \ge \frac{1}{2 \eta} \sum_{i=1}^n \| \vec{x}_i^{(t+1)} - \vec{x}_i^{(t)} \|^2 \ge 0.
\]
Since $\Phi$ is bounded and nondecreasing along the trajectory, the sequence $\{\Phi(\vec{x}^{(t)})\}$ converges to a limit $\Phi^*$.
Let $\Omega$ be the $\omega$-limit set---the set of accumulation points---of the sequence $\{\vec{x}^{(t)}\}$. Since $\mathcal{X}$ is compact, $\Omega$ is nonempty. By continuity, $\Phi(\vec{x}) = \Phi^*$ for all $\vec{x} \in \Omega$.
Because the potential is constant on $\Omega$, the improvement at every step must be zero. By~\Cref{lemma:FTRL-improvement}, this implies that any $\vec{x} \in \Omega$ is a \textit{fixed point} of $\ftrl$.
\end{proof}

Furthermore, it is worth pointing out that if the limit point exists, it is a Nash equilibrium. Indeed, let us assume that $\vx = \lim_{t \to \infty} \vx^{(t)}$, so $\ftrl$ converges pointwise. For the sake of contradiction, suppose that $\vx$ is not a Nash equilibrium. This means that there exists player $i$, deviation $\vx_i^*$, and $\epsilon > 0$ such that
\begin{equation}
    \label{eq:subopt}
    \langle \nabla_{\vec{x}_i} \Phi(\vec{x}), \vec{x}_i^* \rangle > \langle \nabla_{\vec{x}_i} \Phi(\vec{x}), \vec{x}_i \rangle + \epsilon.
\end{equation}
Since $\Phi$ is continuously differentiable, $\lim_{t \to \infty} \vu_i^{(t)} = \nabla_{\vx_i} \Phi(\vx)$. This also implies $\lim_{t \to \infty} \frac{1}{t} \sum_{\tau=1}^t \vu_i^{(\tau)} = \nabla_{\vx_i} \Phi(\vx)$. Now, the $\ftrl$ update for player $i$ at time $t+1$ can be expressed as
\[
\vec{x}_i^{(t+1)} = \argmax_{\vec{x}_i' \in \mathcal{X}_i} \left( \left\langle \vec{x}_i', \frac{1}{t} \sum_{\tau=1}^t \vec{u}_i^{(\tau)} \right\rangle - \frac{1}{\eta t} \mathcal{R}_i(\vec{x}_i') \right).
\]
By Berge's maximum theorem, this implies
\[
    \vx_i = \lim_{t \to \infty} \vx_i^{(t)} \in \argmax_{\vx' \in \cX_i} \langle \vx_i', \nabla_{\vx_i} \Phi(\vx) \rangle,
\]
which is a contradiction to~\eqref{eq:subopt}. 

In fact, this argument connecting pointwise convergence to Nash equilibria is significantly more general, and holds under any no-regret algorithm and general-sum game.

We continue with the proof of~\Cref{theorem:positive-precise}, which is recalled below.

\lazynoregret*

\begin{proof}
We first point out that the total number of updates is bounded by $O_\epsilon(1/\epsilon)$. This follows directly from the potential property, the fact that the updates are alternating, and the definition of $\epsilon$-lazy updates.

\begin{lemma}[Number of updates]
\label{lem:number_of_updates}
Let $K$ denote the total number of times an update occurs under $\epsilon$-lazy alternating no-regret dynamics, that is, $\vx_i^{(t+1)} \neq \vx_i^{(t)}$ for some player $i$. Then
\[
K \le \frac{\Phirange }{\epsilon}.
\]
\end{lemma}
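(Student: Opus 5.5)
The plan is a potential-increment argument. Since the updates are alternating, at each round at most one player $i$ changes strategy, and all other coordinates stay fixed. Because the game is a potential game, the change in potential then equals player $i$'s change in utility:
\[
\Phi(\vx^{(t+1)}) - \Phi(\vx^{(t)}) = u_i(\vx_i^{(t+1)}, \vx_{-i}^{(t)}) - u_i(\vx_i^{(t)}, \vx_{-i}^{(t)}).
\]
Utilities are linear in a player's own strategy in a normal-form game, and the utility vector player $i$ observes at its update round is $\vu_i^{(t)} = \nabla_{\vx_i}\Phi(\vx^{(t)})$, computed with the others' current strategies. So this difference equals $\langle \vx_i^{(t+1)} - \vx_i^{(t)}, \vu_i^{(t)} \rangle$ exactly.

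Next I would invoke the definition of $\epsilon$-lazy updates in~\eqref{eq:lazy}. Whenever an update occurs, meaning $\vx_i^{(t+1)} = \tvx_i^{(t+1)} \neq \vx_i^{(t)}$, the update was accepted only because $\langle \tvx_i^{(t+1)} - \vx_i^{(t)}, \vu_i^{(t)} \rangle \geq \epsilon$. Hence every update round raises $\Phi$ by at least $\epsilon$. Every non-update round leaves the joint strategy unchanged and thus leaves $\Phi$ unchanged. The potential is therefore nondecreasing along the whole trajectory.

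Summing over rounds, $K\epsilon \leq \Phi(\vx^{(T+1)}) - \Phi(\vx^{(1)}) \leq \Phirange$, which gives $K \leq \Phirange/\epsilon$.

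The only delicate step is being precise about the alternation convention. In the round where player $i$ moves, the utility $\vu_i^{(t)}$ used in the lazy test must be the gradient evaluated at the current joint profile, i.e., after the previous players' updates. I will fix the convention that one player moves per round, with $\vu_i^{(t)}$ evaluated at $\vx^{(t)}$, so that the identity in the first paragraph is exact rather than approximate. Under a general smooth potential, where linearity in the player's own strategy fails, one would instead need a smoothness correction.
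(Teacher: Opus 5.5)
Your proof is correct and follows the paper's approach. The paper only remarks that the bound follows from the potential property, alternation, and the $\epsilon$-lazy rule; you spell this out by noting that under alternation the potential change equals the mover's utility change $\langle \vx_i^{(t+1)} - \vx_i^{(t)}, \vu_i^{(t)} \rangle$, which is at least $\epsilon$ on every update round and zero otherwise, and then telescoping against $\Phirange$. Your observation that multilinearity makes this identity exact, while a general smooth potential would need a correction term, is a useful precision that the paper leaves implicit.
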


What remains is to bound the number of rounds in between two consecutive updates. 

\begin{lemma}[Time between updates]
\label{lem:epoch_recursion}
Let $T_k$ be the time index of the $k$th update, which is assumed to be large enough so that $\reg_i^{(t)} \leq \frac{1}{2} \epsilon t$ for all $t \geq T_k$. Suppose that during $[T_k, T_{k+1}-1]$, the strategy is fixed at $\vx^{(T_k)}$ and is not an $2 \epsilon$-Nash equilibrium. Then there exists a game-dependent parameter $P > 0$ such that
\[
T_{k+1} \le T_k \left( 1 + \frac{P}{\epsilon} \right).
\]
\end{lemma}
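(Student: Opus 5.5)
The approach is to compare the cumulative utility that player $i$ collects during the frozen epoch against a fixed profitable deviation, and then apply the regret guarantee at time $T_{k+1}$.

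Since $\vx^{(T_k)}$ is not a $2\epsilon$-Nash equilibrium, some player $i$ and deviation $\vx_i^*$ satisfy $\langle \vx_i^* - \vx_i^{(T_k)}, \vu_i \rangle > 2\epsilon$, where $\vu_i = \nabla_{\vx_i}\Phi(\vx^{(T_k)})$. Because the joint strategy is frozen during $[T_k, T_{k+1}-1]$, player $i$ observes the same utility $\vu_i^{(t)} = \vu_i$ in every round of this interval (up to the one-round offset caused by alternation, which we absorb into constants). The lazy rule~\eqref{eq:lazy} keeps player $i$ from switching only if the proposal $\tvx_i^{(t+1)}$ of the underlying no-regret algorithm satisfies $\langle \tvx_i^{(t+1)} - \vx_i^{(T_k)}, \vu_i\rangle < \epsilon$.

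The key step uses the regret of the \emph{underlying} algorithm $\mathfrak{R}$, whose iterates are the $\tvx_i^{(t)}$. The regret assumption $\reg_i^{(t)} \le \frac{1}{2}\epsilon t$ is understood to refer to this sequence. At time $t = T_{k+1}-1$ we obtain
\[
\sum_{\tau=1}^{t} \langle \vx_i^* - \tvx_i^{(\tau)}, \vu_i^{(\tau)} \rangle \le \tfrac{1}{2}\epsilon t .
\]
We split the sum into the prefix $\tau < T_k$ and the epoch $\tau \in [T_k, t]$. Each term of the prefix is at least $-D B$, where $D$ is the diameter and $B$ bounds the dual norm of the utilities, so the prefix is at least $-DB\,T_k$. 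During the epoch, laziness gives, for each $\tau$,
\[
\langle \vx_i^* - \tvx_i^{(\tau)}, \vu_i\rangle = \langle \vx_i^* - \vx_i^{(T_k)}, \vu_i\rangle - \langle \tvx_i^{(\tau)} - \vx_i^{(T_k)}, \vu_i\rangle > 2\epsilon - \epsilon = \epsilon .
\]
Combining the two pieces yields
\[
\epsilon (t - T_k) - DB\, T_k \le \tfrac{1}{2}\epsilon t,
\]
which rearranges to $\tfrac{1}{2}\epsilon t \le (\epsilon + DB) T_k$. Hence
\[
T_{k+1} \le 2 T_k\Bigl(1 + \frac{DB}{\epsilon}\Bigr) + 1 .
\]
Taking $P$ proportional to $DB$ and absorbing constants gives the claim. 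The factor $2$ can be sharpened by using a regret threshold $c\epsilon t$ with $c$ small, or it can simply be absorbed, since the statement only needs the form $T_k(1+P/\epsilon)$ up to constants.

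I expect the main obstacle to be bookkeeping. First, under alternation player $i$ may update at an index offset by one, and the relevant utility must be the one against the other players' current frozen strategies. Second, the regret bound must apply to the proposal sequence $\tvx_i$ rather than the played sequence. I would handle the second point by noting that $\mathfrak{R}$ is run on the actual observed utilities, so its guarantee holds verbatim for the $\tvx_i^{(\tau)}$. The prefix rounds are then handled crudely by the uniform bound $DB$ per round, which is exactly what produces the multiplicative factor $1+P/\epsilon$.
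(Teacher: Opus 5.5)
Your proposal is correct and follows the paper's argument essentially step for step. Like the paper, you apply the regret bound to the proposal sequence $\tvx_i^{(\tau)}$, bound the pre-epoch rounds crudely by $B D_i$ each, use laziness together with the $2\epsilon$ Nash gap to get at least $\epsilon$ per epoch round, and rearrange using $\reg_i^{(t)} \le \frac{1}{2}\epsilon t$. The factor $2$ and the $+1$ that you make explicit are also hidden in the paper's ``rearranging'' step. Absorbing them into $P$ is legitimate because the hypothesis that $\vx^{(T_k)}$ is not a $2\epsilon$-Nash equilibrium forces $2\epsilon < D_i B$; hence, for example, $P = 3B\max_i D_i$ works.
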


\begin{proof}
Let $t < T_{k+1}$. By definition of the no-regret property,
\[
    \sum_{\tau=1}^t \langle \vu_i^{(\tau)}, \vx_i' - \tvx_i^{(\tau)} \rangle \leq \reg_i^{(t)}.
\]
Bounding the difference in the first $T_k$ rounds as $\sum_{\tau=1}^{T_k} \langle \vu_i^{(\tau)}, \vx_i' - \tvx_i^{(\tau)} \rangle \leq \sum_{\tau=1}^{T_k} \| \vx_i' - \tvx_i^{(\tau)} \| \| \vu_i^{(t)} \|_* \leq T_k B D_i$, we have
\[
    \sum_{\tau = T_k+1}^t \langle \vx_i' - \tvx_i^{(\tau)}, \vu_i^{(T_k)} \rangle \leq \reg_i^{(t)} + T_k B D_i.
\]
Here we used the fact that for $T_k < t < T_{k+1}$ no updates occur, so $\vu_i^{(t)} = \vu_i^{(T_k)}$. We now write
\[
    (t-T_k) \langle \vx_i' - \vx_i^{(T_k)}, \vu_i^{(T_k)}  \rangle + \sum_{\tau=T_k+1}^t \langle \vx_i^{(T_k)} - \tvx_i^{(\tau)}, \vu_i^{(T_k)} \rangle \leq T_k B D_i + \reg_i^{(t)}.
\]
Since $\vx_i^{(\tau)}$ is not getting updated before $T_{k+1}$, it follows that $\langle \vx_i^{(\tau)}, \vu_i^{(T_k)} \rangle \geq \langle \tvx_i^{(\tau)}, \vu_i^{(T_k)} \rangle - \epsilon$. Furthermore, since $\vx_i^{(T_k)}$ is not a $2\epsilon$-Nash equilibrium, we have
\[
    \epsilon (t - T_k) \leq T_k B D_i + \reg_i^{(t)}.
\]
Using the fact that $\reg_i^{(t)} \leq \frac{1}{2} \epsilon t$ and rearranging concludes the proof.
\end{proof}

In other words, we have shown that
\[
    T_{k+1} \leq T_k \left( 1 + O_\epsilon \left( \frac{1}{\epsilon} \right) \right).
\]
So,
\[
    T_{k} \leq \left( 1 + O_\epsilon \left( \frac{1}{\epsilon} \right) \right)^k \leq \exp\left( k   O_\epsilon \left( \frac{1}{\epsilon} \right) \right).
\]

Combining with~\Cref{lem:number_of_updates}, the total number of iterations needed is $\exp\left( O_\epsilon \left( \frac{1}{\epsilon^2} \right) \right)$.
\end{proof}

\section{Properties of the regularizer}
\label{sec:regularizer_properties}

This section establishes basic properties of permutation-invariant regularizers. 

\begin{definition}[Permutation invariance]
\label{def:permutation_invariance}
A function $\cR: \Delta_m \to \mathbb{R}$ is called \emph{permutation invariant} if for every permutation $\pi : [m] \to [m]$ and every $\vx \in \Delta_m$,
\[
\cR(\vx) = \cR(\pi(\vx)),
\]
where $\pi(\vx)[i] := \vx[\pi(i)]$.
\end{definition}

\begin{lemma}[Permutation-invariant function has uniform minimizer on the simplex]
\label{lem:uniform_minimizer_regularizer}
Let $\cR : \Delta_m \to \mathbb{R}$ be a convex, permutation-invariant function. Then the uniform distribution
\(
\frac{1}{m} \mathbf{1}
\)
is a minimizer of $\cR$ over $\Delta_m$. If $\cR$ is strictly convex, $\frac{1}{m} \mathbf{1}$ is the unique minimizer.
\end{lemma}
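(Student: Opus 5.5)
The plan is a symmetrization argument: average $\cR$ over all coordinate permutations and use convexity to compare with its value at the uniform point. Fix any $\vx \in \Delta_m$ and let $\perm{m}$ denote the symmetric group on $[m]$. Consider the average
\[
\bar{\vx} \defeq \frac{1}{m!} \sum_{\pi \in \perm{m}} \pi(\vx).
\]
Each $\pi(\vx)$ lies in $\Delta_m$, so $\bar{\vx} \in \Delta_m$ by convexity of the simplex.

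The first step is to show that $\bar{\vx} = \frac{1}{m}\mathbf{1}$. For a fixed coordinate $i$, the map $\pi \mapsto \pi(i)$ hits each $j \in [m]$ exactly $(m-1)!$ times. Hence
\[
\bar{\vx}[i] = \frac{(m-1)!}{m!} \sum_{j} \vx[j] = \frac{1}{m}.
\]

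The second step applies Jensen's inequality, i.e., convexity of $\cR$ over the finite convex combination, together with permutation invariance:
\[
\cR\left(\tfrac{1}{m}\mathbf{1}\right) = \cR(\bar{\vx}) \le \frac{1}{m!}\sum_{\pi} \cR(\pi(\vx)) = \cR(\vx).
\]
Since $\vx$ was arbitrary, the uniform distribution is a minimizer.

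The third step handles uniqueness under strict convexity. Suppose $\vx$ is a minimizer with $\vx \neq \frac{1}{m}\mathbf{1}$. Then $\vx$ is not constant across coordinates, so some transposition $\pi$ gives $\pi(\vx) \neq \vx$. The points $\{\pi(\vx)\}$ are therefore not all equal, and strict Jensen (for a nontrivial convex combination of distinct points with positive weights) gives $\cR(\frac{1}{m}\mathbf{1}) < \cR(\vx)$. This contradicts minimality of $\vx$.

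One could alternatively take $\vx$ and $\pi(\vx)$, both minimizers by invariance, and note that strict convexity makes their midpoint strictly better. That already contradicts $\vx$ being a minimizer, which is cleaner.

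No step is a real obstacle. The only care needed is that strict Jensen must be applied to distinct points. The two-point midpoint argument sidesteps this entirely, so I would use it for the uniqueness part.
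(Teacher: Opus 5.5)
Your proof is correct and uses the same argument as the paper: average $\pi(x^\star)$ over all permutations, show the average is $\frac{1}{m}\mathbf{1}$ by the same $(m-1)!/m!$ count, and apply Jensen together with permutation invariance. There are two small differences, both in your favor: you average an arbitrary $x$ rather than a presumed minimizer, so you do not need a minimizer to exist a priori, and your two-point midpoint argument for uniqueness is more explicit than the paper's one-line appeal to the equality case of Jensen's inequality.
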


\begin{proof}
Let $\xstar \in \arg\min_{\vx \in \Delta_m} \cR(\vx)$. By permutation invariance (\cref{def:permutation_invariance}), for any permutation $\pi$,
\[
\cR(\pi(\xstar)) = \cR(\xstar),
\]
so $\pi(\xstar)$ is also a minimizer.

Let $\perm{m}$ denote the set of all permutations of $\{1,\dots,m\}$. By convexity of $\cR$ and Jensen's inequality,
\[
\cR
\Biggl( 
\frac{1}{| \perm{m} |} 
\sum_{\pi \in \perm{m}} \pi(\xstar) 
\Biggr)
\le 
\frac{1}{| \perm{m} |} 
\sum_{\pi \in \perm{m}} \cR(\pi(\xstar)) 
= \cR(\xstar).
\]

The average over all permutations of $\xstar$—and more generally of any vector in $\Delta_m$—is the uniform vector $\tfrac{1}{m}\mathbf{1}$. To see this, define
\[
\bar{\vx} 
\defeq 
\frac{1}{|\perm{m}|} 
\sum_{\pi \in \perm{m}} \pi(\xstar),
\]
and denote its $j$th coordinate by $\bar{\vx}[j]$. Decompose $\perm{m}$ as
\(
\perm{m}
=
\bigcup_{i=1}^{m} 
\perm{m}^{(i)},
\)
where $\perm{m}^{(i)}$ is the set of permutations mapping coordinate $j$ to $i$. Clearly, $|\perm{m}^{(i)}| = (m-1)!$ for each $i \in [m]$. Hence,
\begin{align*}
\bar{\vx}[j] 
&= 
\frac{1}{|\perm{m}|} 
\sum_{\pi \in \perm{m}}
\pi(\xstar)[j] 
= 
\frac{1}{|\perm{m}|} 
\sum_{i=1}^{m} |\perm{m}^{(i)}| \, \xstar[i] 
\\
&= 
\frac{(m-1)!}{m!} 
\sum_{i=1}^{m} \xstar[i] 
= 
\frac{1}{m}.
\end{align*}

Consequently, $\bar{\vx} = \frac{1}{m}\mathbf{1}$ and 
\(
\cR(\bar{\vx}) \le \cR(\vx^\star),
\)
so $\bar{\vx}$ is also a minimizer. If $\cR$ is strictly convex, equality holds only if $\vx^\star = \bar{\vx} =  \frac{1}{m} \mathbf{1}$.
\end{proof}

\begin{lemma}[\ftrl with a permutation-invariant regularizer preserves order]
\label{lem:order_preservation}
Let $\cR : \Delta(\cA_i) \to \mathbb{R}$ be a permutation-invariant, strictly convex, and differentiable regularizer, and let $\eta > 0$. Then, for any actions $a_i,a_i' \in \cA_i$ and any round $t$, it holds that
\[
\sum\limits_{\tau=1}^{t} \vu_{i}^{(\tau)}[a_i] 
\ge 
\sum\limits_{\tau=1}^{t} \vu_{i}^{(\tau)}[a_i'] 
\quad 
\Longleftrightarrow
\quad 
\vx_i^{(t+1)}[a_i] 
\ge 
\vx_i^{(t+1)}[a_i'].
\]
\end{lemma}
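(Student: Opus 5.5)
Write $\vec{U} \defeq \sum_{\tau=1}^t \vu_i^{(\tau)}$ and $\vx \defeq \vx_i^{(t+1)}$, the unique maximizer of $\psi(\vy) = \langle \vy, \vec{U}\rangle - \frac{1}{\eta}\cR(\vy)$ over $\Delta(\cA_i)$; uniqueness holds because $\cR$ is strictly convex. The plan is an exchange (swap) argument. Let $\pi$ be the transposition of $a_i$ and $a_i'$, and let $\vx^\pi$ be $\vx$ with those two coordinates swapped. By permutation invariance (\Cref{def:permutation_invariance}), $\cR(\vx^\pi) = \cR(\vx)$, so
\[
\psi(\vx) - \psi(\vx^\pi) = \langle \vx - \vx^\pi, \vec{U} \rangle = (\vx[a_i] - \vx[a_i'])\,(\vec{U}[a_i] - \vec{U}[a_i']).
\]

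The first step handles the strict case. Suppose $\vec{U}[a_i] > \vec{U}[a_i']$ but $\vx[a_i] < \vx[a_i']$. Then the product above is negative, so $\psi(\vx^\pi) > \psi(\vx)$, which contradicts optimality. Hence $\vx[a_i] \ge \vx[a_i']$.

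The second step handles ties. If $\vec{U}[a_i] = \vec{U}[a_i']$, the identity gives $\psi(\vx^\pi) = \psi(\vx)$. Since $\vx^\pi$ is then also a maximizer, uniqueness forces $\vx^\pi = \vx$, i.e.\ $\vx[a_i] = \vx[a_i']$. Combining the two steps proves the forward implication $\Rightarrow$.

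For the reverse implication, argue by contrapositive. Suppose $\vec{U}[a_i] < \vec{U}[a_i']$. Applying the strict case with the roles of the two actions exchanged gives $\vx[a_i'] \ge \vx[a_i]$. Equality is the remaining problem, and it is the main obstacle: it is a strictness issue, and strict convexity alone does not immediately exclude it.

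To handle equality, I would use differentiability and the first-order optimality (KKT) conditions, as in the proof of \Cref{lemma:FTRL-improvement}. Suppose $\vx[a_i] = \vx[a_i'] = p$. Permutation invariance implies that the partial derivatives of $\cR$ at any point fixed by $\pi$ agree in the two coordinates. So $\partial_{a_i}\cR(\vx) = \partial_{a_i'}\cR(\vx)$, and the gradient of $\psi$ differs in these two coordinates by exactly $\vec{U}[a_i'] - \vec{U}[a_i] > 0$.

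If $p > 0$, consider the feasible direction that moves a small mass $\epsilon$ from $a_i$ to $a_i'$. Its directional derivative is $\vec{U}[a_i'] - \vec{U}[a_i] > 0$, so $\psi$ increases, contradicting optimality. If $p = 0$, the direction of moving mass from $a_i$ is infeasible. Here I would invoke the natural conventions of the paper: either $\cR$ is Legendre, so that the maximizer is interior and $p > 0$, or the statement is read with ties at zero probability excluded. I expect this boundary case to require the only caveat in the argument. Otherwise, the contradiction yields $\vx[a_i] < \vx[a_i']$, which completes the equivalence.
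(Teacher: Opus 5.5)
Your argument is correct, but it takes a different route from the paper's.

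\textbf{The two routes.} The paper starts from the stationarity condition $\nabla\cR(\vx_i^{(t+1)}) = \eta\sum_{\tau=1}^t \vu_i^{(\tau)} + \lambda\mathbf{1}$. It then applies strict monotonicity of $\nabla\cR$ to the pair $\vx$, $\pi(\vx)$, whose gradients are swapped. This gives the sign relation $(\partial_{a_i}\cR - \partial_{a_i'}\cR)(\vx[a_i]-\vx[a_i'])>0$ whenever $\vx\neq\pi(\vx)$, and both directions are read off it. You compare objective values instead. Because $\cR(\vx^\pi)=\cR(\vx)$, optimality and uniqueness alone give the forward direction. This uses no differentiability and no multiplier condition, so it holds even when the maximizer lies on the boundary of the simplex.

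\textbf{The interiority assumption and the tie case.} The paper's stationarity equation has a multiplier only for $\sum_a \vx[a]=1$, so it tacitly assumes a relatively interior maximizer (in effect, a Legendre regularizer). Under that assumption, its handling of the tie $\vx=\pi(\vx)$ in the converse reduces to the same fact you use: the derivative of $\cR$ along $\ve_{a_i}-\ve_{a_i'}$ vanishes at a swap-fixed point. Invariance is assumed only on the simplex, and the individual partial derivatives depend on the extension off it. So justify this step as evenness of $s\mapsto\cR(\vx+s(\ve_{a_i}-\ve_{a_i'}))$ on $[-p,p]$, which needs $p>0$.

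\textbf{The boundary case is a real limitation of the statement.} Your caveat for $p=0$ does not reflect a weakness of your method; the converse is false there. Take $\cR(\vx)=\frac{1}{2}\|\vx\|_2^2$, $\eta=1$, and cumulative utility $(10,0,1)$. The iterate is the Euclidean projection $(1,0,0)$, so $\vx[2]\ge\vx[3]$ although $\vec{U}[2]<\vec{U}[3]$. Your proof therefore establishes exactly what is true: the forward direction always, and the converse whenever the two actions are not tied at probability zero. That covers every regularizer for which the paper's own proof is valid.
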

\begin{proof}
Since $\cR$ is strictly convex and differentiable, the \ftrl optimizer $\vx_1^{(t+1)}$ is unique and satisfies the first-order condition
\[
\nabla \cR(\vx_i^{(t+1)}) = \eta
\sum\limits_{\tau=1}^{t} \vu_i^{(\tau)} + \lambda \mathbf{1}
\]
for some scalar $\lambda$. Hence, for any actions $a_i,a_i' \in \cA_i$,
\begin{equation}
\frac{\partial \cR(\vx_{i}^{(t+1)}) }{\partial \vx_i[a_i] }
-
\frac{\partial \cR(\vx_{i}^{(t+1)}) }{\partial \vx_i[a_i']}
=
\eta \left(
\sum\limits_{\tau=1}^{t} \vu_i^{(\tau)}[a_i] 
-
\sum\limits_{\tau=1}^{t} \vu_i^{(\tau)}[a_i']
\right).
\label{eq:order_preservation}
\end{equation}

Suppose 
\(
\sum_{\tau=1}^{t} \vu_i^{(\tau)}[a_i] 
\ge
\sum_{\tau=1}^{t} \vu_i^{(\tau)}[a_i'].
\)
Then \eqref{eq:order_preservation} gives 
\begin{equation}
\label{eq:order_eq1}
\frac{\partial \cR(\vx_{i}^{(t+1)}) }{\partial \vx_i[a_i]}
\ge
\frac{\partial \cR(\vx_{i}^{(t+1)}) }{\partial \vx_i[a_i']}
.
\end{equation}
We claim this implies 
\[
\vecx{i}{t+1}[a_i] \geq \vecx{i}{t+1}[a_i']
.
\]

For the sake of contradiction, assume $\vx_i^{(t+1)}[a_i] < \vx_i^{(t+1)}[a_i']$.
Let $\pi$ be the permutation that swaps $a_i$ and $a_i'$, and set $\vy \defeq \pi(\vx_i^{(t+1)})$.
By permutation invariance, $\cR(\vy)= \cR(\pi(\vx_i^{(t+1)})) = \cR(\vx_i^{(t+1)})$; moreover, the gradient operator is linear, which implies that the corresponding partial derivatives are swapped:
\begin{equation}
\frac{\partial \cR(\vy)}{\partial \vx_i[a_i]}
=
\frac{\partial \cR(\vx_i^{(t+1)})}{\partial \vx_i[a_i']},
\qquad
\frac{\partial \cR(\vy)}{\partial \vx_i[a_i']}
=
\frac{\partial \cR(\vx_i^{(t+1)})}{\partial \vx_i[a_i]}.
\label{eq:ordering_eq1}
\end{equation}
Since $\cR$ is strictly convex, we have
\begin{equation}
\langle \nabla \cR(\vx_i^{(t+1)})-\nabla \cR(\vy),\, \vx_i^{(t+1)}-\vy\rangle > 0
\quad
\text{for}
\;\;
\vecx{i}{t+1} \neq \vy.
\label{eq:ordering_eq2}
\end{equation}
By definition of $\vy$, it differs from $\vx_i^{(t+1)}$ only in the coordinates $a_i$ and $a_i'$. Therefore, using \eqref{eq:ordering_eq1}, the inner product in \eqref{eq:ordering_eq2} reduces to
\begin{equation}
2\left(
\frac{\partial \cR(\vx_i^{(t+1)})}{\partial \vx_i[a_i]}
-
\frac{\partial \cR(\vx_i^{(t+1)})}{\partial \vx_i[a_i']}
\right)
\left(
\vx_i^{(t+1)}[a_i]-\vx_i^{(t+1)}[a_i']
\right)
>0.
\label{eq:ordering_val_eq1}
\end{equation}
We note that the case $\vx=\vy$ is trivial, since the claim follows immediately from \eqref{eq:ordering_val_eq1}.
Under our assumption $\vx_i^{(t+1)}[a_i]-\vx_i^{(t+1)}[a_i']<0$, the above inequality forces
\[
\frac{\partial \cR(\vx_i^{(t+1)})}{\partial \vx_i[a_i]}
-
\frac{\partial \cR(\vx_i^{(t+1)})}{\partial \vx_i[a_i']}
<0,
\]
contradicting \eqref{eq:order_eq1}.
Therefore, $\vx_i^{(t+1)}[a_i] \ge \vx_i^{(t+1)}[a_i']$.

Conversely, suppose that $\vx_i^{(t+1)}[a_i] \ge \vx_i^{(t+1)}[a_i']$ but
\(
\sum_{\tau=1}^{t} \vu_i^{(\tau)}[a_i]
<
\sum_{\tau=1}^{t} \vu_i^{(\tau)}[a_i'].
\)
Then \eqref{eq:order_preservation} implies
\[
\frac{\partial \cR(\vx_i^{(t+1)})}{\partial \vx_i[a_i]}
<
\frac{\partial \cR(\vx_i^{(t+1)})}{\partial \vx_i[a_i']},
\]
which, by the same argument as above, forces $\vx_i^{(t+1)}[a_i] < \vx_i^{(t+1)}[a_i']$, contradicting the assumption.
Combining both directions establishes the equivalence.
\end{proof}
\section{Proofs from Section~\ref{sec:lowerbounds}}
\label{sec:proofs_lb}

We begin with the following simple but crucial observation. It shows that any pure strategy whose \emph{cumulative utility} up to round~$t$ is smaller than the maximal \emph{cumulative utility} by at least a constant~$\gamma$ is assigned, in round $t+1$, a probability that is inversely proportional to~$ \eta^{(t)} \gamma$. To simplify the exposition, we introduce the \emph{cumulative utility gap}, which we henceforth refer to as the \emph{gap}.

\begin{definition}[Cumulative utility gap]
\label{def:gap}
For a player~$i$, the gap of an action $a_i \in \cA_i$ at time~$t$ is
\[
\gap{i}{t}{a_i}
\defeq
\max_{a_i' \in \cA_i} \sum_{\tau=1}^{t} \vu_i^{(\tau)}[a_i']
-
\sum_{\tau=1}^{t} \vu_i^{(\tau)}[a_i].
\]
\end{definition}




\GapToProb*

\begin{proof}
Define
\[
\vx_i' \coloneqq \vx_i^{(t+1)} - \vx_i^{(t+1)}[a_i]\vec{e}_{a_i}
+ \vx_i^{(t+1)}[a_i]\vec{e}_{a_i'},
\]
where
\[
a_i' \in \argmax_{a \in \cA_{i}} \sum_{\tau=1}^{t} \vu_{i}^{(\tau)}[a],
\]
and $\vec{e}_{a}$ denotes the $a$th standard basis vector.
Equivalently, $\vx_i'$ differs from $\vx_i^{(t+1)}$ only in the coordinates $a_i$ and $a_i'$, with
$\vx_i'[a_i]=0$ and
\(
\vx_i'[a_i'] = \vx_i^{(t+1)}[a_i]+\vx_i^{(t+1)}[a_i'].
\)
In particular, $\vx_i' \in \Delta(\cA_i)$.

Since $\gap{i}{t}{a_i} \ge \gamma$, we obtain
\begin{align}
\biggl\langle 
\vx_i', \sum_{\tau=1}^{t} \vu_i^{(\tau)} 
\biggr\rangle
-
\biggl\langle 
\vx_i^{(t+1)}, \sum_{\tau=1}^{t} \vu_i^{(\tau)}
\biggr\rangle
&= 
\langle 
\vx_i' - \vx_{i}^{(t+1)}, \sum_{\tau=1}^{t} \vu_i^{(\tau)}
\rangle 
\notag 
\\
&= 
\vx_{i}^{(t+1)}[a_i]
\left(
\sum_{\tau=1}^{t} \vu_{i}^{(\tau)}[a_{i}'] 
-
\sum_{\tau=1}^{t} \vu_{i}^{(\tau)}[a_i]
\right) 
\notag
\\
&= 
\vx_i^{(t+1)}[a_i]
\;
\gap{i}{t}{a_i}
\notag 
\\[3pt]
&\ge 
\vx_{i}^{(t+1)}[a_i]
\;
\gamma.
\label{eq:gap_eq1}
\end{align}
Moreover, the regularizer has bounded range, so
\(
\cR(\vx_i') \le \cR(\vx_i^{(t+1)}) + \regcon.
\)
On the other hand, $\vx_i^{(t+1)}$ is the \ftrl maximizer of the objective
\(
\langle \vx_i, \sum_{\tau=1}^{t} \vu_i^{(\tau)} \rangle - \frac{1}{\eta^{(t)}}\cR(\vx_i),
\)
and therefore its value is at least that of  feasible $\vx_i'$:
\begin{equation}
\left\langle 
\vx_{i}^{(t+1)}, \sum_{\tau=1}^{t} \vu_{i}^{(\tau)}
\right\rangle
-
\frac{1}{\eta^{(t)}} \cR(\vx_{i}^{(t+1)})
\;\ge\;
\left\langle 
\vx_{i}', \sum_{\tau=1}^{t} \vu_{i}^{(\tau)}
\right\rangle
-
\frac{1}{\eta^{(t)}} \cR(\vx_{i}').
\label{eq:gap_eq2}
\end{equation}

Combining \eqref{eq:gap_eq1} and \eqref{eq:gap_eq2} with the bound on $\cR(\vx_i')$ yields 
\[ 
\vx_{i}^{(t+1)}[a_i] 
\le
\frac{\regcon}{\eta^{(t)} \gamma}, 
\]
which implies the claimed bound.
\end{proof}

\subsection{Definition of transition period}

As explained in the main body, we partition the \ftrl dynamics into \emph{periods}. 
A period~$k$ is a block of consecutive rounds during which \ftrl concentrates its probability mass on a small set of strategies---the \emph{competing actions}---while all remaining actions receive negligible mass. 
We fix the threshold $\delta \coloneqq \frac{1}{4m}$ and define periods accordingly. 
We now recall the basic setup.

For $k\ge 3$, let $\tlow{k}$ and $\thigh{k}$ denote the first and last rounds of period~$k$, respectively, and let $T_k \coloneqq \thigh{k}-\tlow{k}+1$ denote its length, so that period~$k$ corresponds to the interval $t\in[\tlow{k},\thigh{k}]$.

\transperiod*

Since the actions $a_1(2)$ and $a_2(2)$ are undefined, the above definition does not apply to $k=3$. We therefore let period~$3$ consist only of the first round, \textit{i.e.}, $\tlow{3}=\thigh{3}=1$, and let period~$4$ start immediately thereafter. The remainder of the analysis relies on \cref{property:main}, which we prove by induction in \cref{sec:valid_period}.

\MainProp*

\subsection{Uniform initialization}
\label{subsec:init}

We begin with the initialization, which serves as the base case for the induction in the proof of \cref{property:main}.
A standard initialization for \ftrl is the uniform strategy. In particular, for permutation-invariant regularizers, the first update ($t=1$) selects a minimizer of $\cR$, which is the uniform distribution.
Indeed, at initialization the cumulative utility gaps satisfy
\[
\gap{i}{0}{a_i}=0
\quad
\text{for all } i\in\{1,2\} \text{ and } a_i\in\cA_i.
\]
Thus, at $t=1$ the \ftrl update reduces to minimizing $\cR$; by \cref{lem:uniform_minimizer_regularizer}, we obtain
\[
\vx_1^{(1)}=\mathrm{Uniform}(\cA_1),
\qquad
\vx_2^{(1)}=\mathrm{Uniform}(\cA_2).
\]

As we mentioned in \cref{sec:lowerbounds}, our construction builds on the class of games introduced by \citet{Panageas23:Exponential}. For completeness, \cref{sec:matrix_props} reviews their recursive construction of the matrix $\mB_{m,r}$ (\textit{cf.}~\eqref{mat:payoffB} in \cref{def:payoffB}) and establishes the structural properties we require.
Although this class is well suited for proving exponential lower bounds for fictitious play, it is not directly applicable to \ftrl: under uniform initialization, running \ftrl on $(\mB_{m,r},\mB_{m,r})$ for any $r$ quickly identifies the actions attaining the maximum payoff and converges to that profile within a constant number of rounds.
To preclude this behavior, we modify the payoff matrix as described below.

The underlying idea is straightforward. By \cref{lem:regret_gap}, the gap
\(
\gap{1}{1}{a_1} \;(\text{resp. } \gap{2}{1}{a_2})
\) 
at round 1 controls the probability assigned to action $a_1$ (resp. $a_2$) at round 2.  
Hence, by ensuring that a sufficiently large gap is incurred for all actions except $(1,1)$ after round $1$, we can guarantee that at round 2 the only action profile played with probability nearly 1 (up to $\delta$) is $(1,1)$.

For an odd dimension $m$, we consider the matrix $\mB = \mB_{m-1,2}$ of size $(m-1) \times (m-1)$ as defined in 
\cref{def:payoffB}. Starting from the matrix $\mB$, we embed it into a larger matrix
\[
\mA \in [-\gamma - (4m - 1),\, 2m - 1]^{m \times m},
\]
which is obtained by adding one extra row and one extra column to $\mA$.
The transformation is illustrated in \Cref{fig:matrixA,fig:matrixB}.

\begin{figure}
\centering
\begin{subfigure}[t]{0.3\textwidth}
  \centering
  \includegraphics[width=\linewidth]{img/array1.pdf}
  \caption{Original matrix $\mB = \mB_{4,2}$ of size $4 \times 4$.}
  \label{fig:matrixB}
\end{subfigure}
\hspace{2em}
\begin{subfigure}[t]{0.35\textwidth}
  \centering
  \includegraphics[width=\linewidth]{img/array2.pdf}
  \caption{New matrix $\mA$ of size $5 \times 5$.}
  \label{fig:matrixA}
\end{subfigure}
\end{figure}

\begin{equation}
\mA [a_1,a_2]
=
\begin{cases}
\mB[a_1,a_2],
& 1 \le a_1 \le m-1 \ \text{and}\ 1 \le a_2 \le m-1,
\\[2pt]
- \sum_{a_1'=1}^{m-1} \mB[a_1',1],
& a_1 = m \ \text{and}\ a_2 = 1,
\\[2pt]
- \gamma - \sum_{a_1'=1}^{m-1} \mB[a_1',a_2],
& a_1 = m \ \text{and}\ 2 \le a_2 \le m-1,
\\[2pt]
- \sum_{a_2'=1}^{m-1} \mB[1,a_2'],
& a_2 = m \ \text{and}\ a_1 = 1,
\\[2pt]
- \gamma - \sum_{a_2'=1}^{m-1} \mB[a_1,a_2'],
& a_2 = m \ \text{and}\ 2 \le a_1 \le m-1,
\\[2pt]
- 2 \; \gamma,
& a_1 = m \ \text{and}\ a_2 = m,
\\[2pt]
0,
& \text{otherwise.}
\end{cases}
\label{eq:matrixA}
\end{equation}

\begin{lemma}[Structural properties of $\mA$]
\label{lem:matrixA_props}
The following properties hold for $\mA$:
\begin{enumerate}[label=(\roman*)]
\item The positive entries $\{3,4,\ldots,2m-1\}$ each appear exactly once.
\label{prop:matrixA_prop4_unique}
\item $\max_{i,j} \mA[i,j] = (2m-1)$.
\label{prop:matrixA_prop1_max}
\item For even $k \in \{3,4,\dots,2m-2\}$,
\(
a_1(k) = a_1(k+1).
\)
\label{prop:matrixA_prop2_even}
\item For odd $k \in \{3,4,\dots,2m-2\}$,
\(
a_2(k) = a_2(k+1).
\)
\label{prop:matrixA_prop3_odd}
\end{enumerate}
\end{lemma}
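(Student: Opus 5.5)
The plan is to reduce every claim to the corresponding structural property of $\mB=\mB_{m-1,2}$ from \cref{sec:matrix_props}, and then check that the extra row and column introduced in \eqref{eq:matrixA} create no new positive entries. By the definition of $\mA$, the upper-left $(m-1)\times(m-1)$ block coincides with $\mB$. It therefore suffices to know that $\mB$ contains each of the values $3,4,\dots,2m-1$ exactly once, that its maximum is $2m-1$, and that its remaining entries are nonpositive (zero in the "otherwise" pattern). I would also use the spiral property of $\mB$: consecutive payoffs $k,k+1$ share a row when $k$ is even and share a column when $k$ is odd. I expect to cite these as the established properties of the Panageas et al. construction in \cref{def:payoffB}. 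If they are not stated verbatim there, I would verify them by induction on the recursive definition of $\mB_{m,r}$. The inductive step is that $\mB_{m,r}$ is obtained from $\mB_{m-2,r+2}$ (or similar) by adding an outer layer of increasing labels along a path that alternately moves horizontally and vertically.

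The key new step concerns the added row $m$ and column $m$. For \ref{prop:matrixA_prop4_unique} and \ref{prop:matrixA_prop1_max}, I need every entry there to be nonpositive. Each such entry has the form $-\sum(\text{a row or column of }\mB)$, $-\gamma-\sum(\cdots)$, or $-2\gamma$. With $\gamma\ge 0$ and all column and row sums of $\mB$ nonnegative, every entry in the added row and column is $\le 0$. Row and column sums of $\mB$ are nonnegative because every column and row of $\mB$ contains only entries in $\{0\}\cup\{3,\dots,2m-1\}$; any negative entries of $\mB$, if the construction has them, would need to be dominated, and I would check this from the explicit form of $\mB$. This step is the main obstacle. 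If $\mB$ contains negative entries, a row sum could be negative, which would make the added entry positive. In that case I would instead bound the entries using $\gamma$ large, or directly verify nonnegativity of the sums from the recursive structure.

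Given nonpositivity of the new entries, the multiset of positive entries of $\mA$ equals that of $\mB$. This gives \ref{prop:matrixA_prop4_unique}, and \ref{prop:matrixA_prop1_max} follows because $2m-1$ appears and everything else is smaller. For \ref{prop:matrixA_prop2_even} and \ref{prop:matrixA_prop3_odd}, the positions $a_1(k),a_2(k)$ in $\mA$ coincide with those in $\mB$, since indices $1,\dots,m-1$ are unchanged. The shared-row and shared-column statements are then exactly the spiral adjacency property of $\mB$: an even step $k\to k+1$ is a column move by player 2 within a fixed row, and an odd step is a row move within a fixed column. I would confirm the parity convention against the starting cell containing $3$ and the first move of the spiral, using the base case $\mB_{4,2}$ depicted in \Cref{fig:matrixB}.
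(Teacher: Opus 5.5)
Your proposal is correct and matches the paper's proof. Both arguments reduce all four claims to the properties of the embedded block $\mB_{m-1,2}$ established in \cref{lem:support_matrixB} and \cref{lem:matrixB_props}, whose maximum is $2+(2(m-1)-1)=2m-1$, and both note that the added row and column contain only negative entries, so positions and positive values are unchanged. The obstacle you flag does not arise: \cref{lem:support_matrixB} shows that every entry of $\mB$ lies in $\{0\}\cup\{3,\dots,2m-1\}$, so all row and column sums are nonnegative. One small correction: the recursion in \cref{def:payoffB} nests $\mB_{m-2,r+4}$, not $\mB_{m-2,r+2}$.
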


\begin{proof}
All claims follow from the properties established in \cref{sec:matrix_props} for the embedded submatrix $\mB_{m-1,2}$ of $\mA$. In particular, \cref{prop:matrixA_prop4_unique} is an immediate consequence of \cref{lem:support_matrixB}. Moreover, by \cref{prop:matrixB_prop1} in \cref{lem:matrixB_props}, the maximum entry of $\mB_{m-1,2}$ equals
\(
2 + \bigl(2(m-1)-1\bigr)=2m-1.
\) proving \cref{prop:matrixA_prop1_max}.
Since the embedding preserves the values of $\mB_{m-1,2}$, and all additional entries in $\mA$ (in particular those in the extra row and column) are chosen to be negative, this entry is also the maximum of $\mA$.
The remaining properties follow directly from \cref{prop:matrixB_prop2,prop:matrixB_prop3} in \cref{lem:matrixB_props}, applied to the same embedded block.
\end{proof}

As illustrated below, this construction ensures that, under uniform initialization, all actions other than $(1,1)$ start with a utility gap of order $\Omega(\gamma / m)$, thereby enforcing that the action profile $(1,1)$ is played with probability nearly $1$.   
The remainder of the analysis relies on \cref{property:main}, which we prove by induction in \cref{sec:valid_period}.
In what follows, we establish \cref{asmpt:initial_conditions}, which will be used throughout the analysis.

\begin{property}
\label{asmpt:initial_conditions}
At round $1$, the utility gaps satisfy
\[
\gap{1}{1}{a_1} \ge \gamma^{(1)}
\quad
\text{and}
\quad
\gap{2}{1}{a_2} \ge \gamma^{(1)},
\]
for all actions $a_1 \in \cA_1$ and $a_2 \in \cA_2$ except for action~$1$, which has zero gap, and $\gamma^{(1)}$ satisfies
\begin{equation}
\gamma^{(1)} \defeq 
\max\Bigl\{  
\frac{4 m^3 \,\regcon}{\etacon},
2
\left(
\frac{64 \regcon m}{\delta}
\right)^{\frac{1}{1-\alpha}},
\left(
\frac{\regcon m}{\delta}
\right)^{\frac{1}{1-\alpha}}
\alpha^{\frac{\alpha}{1-\alpha}}
(1-\alpha)
\Bigr\}
\label{eq:gamma_init}
\end{equation}

Then, it also holds
\[
\vx_{2}^{(2)}[a_2(4)] \ge 1-\delta, \quad
\vx_{1}^{(2)}[a_1(3)] + \vx_{1}^{(2)}[a_1(4)] \ge 1-\delta,
\quad
\text{where}
\;
\delta = \frac{1}{4m}.
\]
In other words, period $k=4$ starts at round $2$, satisfying the \cref{def:transition_period}.
\end{property}




\begin{lemma}
Let $\vx_1^{(1)}=\vx_2^{(1)}=\mathrm{Uniform}(m)$. If both players employ \ftrl on the game $(\mA,\mA)$, then \cref{asmpt:initial_conditions} is satisfied.
\end{lemma}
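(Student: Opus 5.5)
The plan is a direct computation of the round-$1$ utilities, then \Cref{lem:regret_gap} to turn large gaps into small probabilities. At round $1$ both players play $\mathrm{Uniform}(m)$ by \Cref{lem:uniform_minimizer_regularizer}. The game is identical-interest, so $\vu_1^{(1)}[a_1]=\frac1m\sum_{a_2}\mA[a_1,a_2]$ and $\vu_2^{(1)}[a_2]=\frac1m\sum_{a_1}\mA[a_1,a_2]$. I will read off these row and column sums from \eqref{eq:matrixA}. I treat $\gamma$ in \eqref{eq:matrixA} as a free parameter, to be chosen as a suitable multiple of $m\gamma^{(1)}$, for example $\gamma \defeq m\gamma^{(1)}$ plus a slack term.

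\textbf{Row and column sums.}
\begin{itemize}
\item Row $1$: the extra entry $\mA[1,m]=-\sum_{a_2'}\mB[1,a_2']$ cancels the $\mB$-part exactly, so the sum is $0$.
\item Rows $2\le a_1\le m-1$: the extra entry is $-\gamma-\sum_{a_2'}\mB[a_1,a_2']$, so the sum is exactly $-\gamma$.
\item Row $m$: the sum equals $-\sum_{a_1'}\mB[a_1',1]-\sum_{a_2=2}^{m-1}\bigl(\gamma+\sum_{a_1'}\mB[a_1',a_2]\bigr)-2\gamma = -(m)\gamma-\sum_{a_1',a_2'\le m-1}\mB[a_1',a_2']$.
\end{itemize}
The entries of $\mB$ lie in $[-(2m-1),2m-1]$ (in fact they are nonnegative by \cref{sec:matrix_props}). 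Hence this last sum is at most $-m\gamma+(m-1)^2(2m-1)\le-\gamma$ once $\gamma\ge 2m^2$, which the choice of $\gamma$ ensures. The column sums are symmetric.

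\textbf{Gaps.} It follows that action $1$ is the unique maximizer for each player, with cumulative utility $0$ after round $1$. Every other action $a_i\neq 1$ has $\gap{i}{1}{a_i}\ge\gamma/m\ge\gamma^{(1)}$, which establishes the first part of \cref{asmpt:initial_conditions}.

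\textbf{Probabilities at round $2$.} With $\eta^{(1)}=1/1^\alpha=\etacon$, \Cref{lem:regret_gap} gives $\vx_i^{(2)}[a_i]\le \regcon/\gamma^{(1)}\le 1/(4m^3)$ for each $a_i\neq1$. This uses the first term in the maximum \eqref{eq:gamma_init}. Summing over the $m-1$ such actions,
\[
\vx_i^{(2)}[1]\ \ge\ 1-\frac{m-1}{4m^3}\ \ge\ 1-\delta .
\]
The remaining terms of \eqref{eq:gamma_init} are not needed here; they are reserved for later steps of the induction.

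\textbf{Main obstacle.} The step needing care is identifying action $1$ with the spiral labels. I must check, from the structure of $\mB_{m-1,2}$ in \cref{sec:matrix_props} (\cref{lem:support_matrixB,lem:matrixB_props}), that the entries $3$ and $4$ sit so that $a_2(4)=1$ and $1\in\{a_1(3),a_1(4)\}$. By \Cref{lem:matrixA_props}\cref{prop:matrixA_prop3_odd}, $a_2(3)=a_2(4)$, so the payoffs $3$ and $4$ share a column. Hence it suffices that the entry $3$ lies at position $(1,1)$ and the entry $4$ in the same column $1$, which I would verify from the recursive definition of $\mB_{m,r}$ with $r=2$. Given this, $\vx_2^{(2)}[a_2(4)]=\vx_2^{(2)}[1]\ge1-\delta$ and $\vx_1^{(2)}[a_1(3)]+\vx_1^{(2)}[a_1(4)]\ge\vx_1^{(2)}[1]\ge1-\delta$. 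This is exactly \cref{def:transition_period} for $k=4$ at round $2$.
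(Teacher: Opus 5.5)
Your proposal is correct and follows the paper's proof essentially step for step. Like the paper, you compute the round-$1$ row and column sums of $\mA$ under uniform play, obtain gaps of at least $\gamma/m \ge \gamma^{(1)}$ for every action other than $1$, and apply \Cref{lem:regret_gap} with $\eta^{(1)}=1$. Your version is slightly tidier than the paper's: you get the clean per-action bound $1/(4m^3)$, and you explicitly check from the recursive definition of $\mB_{m-1,2}$ that $a_1(3)=a_2(3)=a_2(4)=1$, which the paper uses without comment.
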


\begin{proof}
From \cref{lem:uniform_minimizer_regularizer}, the \ftrl update at round $1$ dictates that both players choose the uniform strategy over their action sets, \textit{i.e.},
\[
\vx_1^{(1)} = \vx_2^{(1)} = \mathrm{Uniform}(m).
\]  

Although the effect of \eqref{eq:matrixA} on the utility vectors $\vu_1^{(1)}, \vu_2^{(1)}$ is apparent, we present a detailed derivation for completeness. We describe Player~1's utility vector; Player~2's is analogous.

\begin{enumerate}

\item \textbf{Row $a_1 = 1$}  
\begin{align*}
\vu_{1}^{(1)}[1]
&\defeq 
\sum_{a_2=1}^{m} \vx_{2}^{(1)}[a_2] \, \mA[1, a_2] \\
&
= 
\frac{1}{m} \left( \sum_{a_2=1}^{m-1} \mA[1, a_2] + \mA[1, m] \right) 
\\
&
= 
\frac{1}{m} \left( \sum_{a_2=1}^{m} \mB[1, a_2] + \left( - \sum_{a_2'=1}^{m} \mB[1,a_2'] \right) \right) 
\\
&
= 0.
\end{align*}

\item \textbf{Rows $2 \le a_1 \le m-1$}  
\begin{align*}
\vu_{1}^{(1)}[a_1]
&\defeq
\sum_{a_2=1}^{m} \vx_{2}^{(1)}[a_2] \, \mA[a_1, a_2] 
\\
&= 
\frac{1}{m} \left( \sum_{a_2=1}^{m-1} \mA[a_1, a_2] 
+
\mA[a_1, m]  
\right) \\
&= 
\frac{1}{m} \left( \sum_{a_2=1}^{m-1} \mB[a_1, a_2] 
+
\left(
- \gamma - \sum_{a_2'=1}^{m-1} \mB[a_1,a_2'] \right) \right) \\
&= - \frac{\gamma}{m}.
\end{align*}

\item \textbf{Row $a_1 = m$}  
\begin{align*}
\vu_{1}^{(1)}[m]
&\defeq
\sum_{a_2=1}^{m} 
\vx_{2}^{(1)}[a_2] \, \mA[m, a_2]
\\
&= \frac{1}{m} 
\Bigg(
\mA[m, 1] 
+ 
\sum_{a_2=2}^{m-1} \mA[m, a_2]
+ 
\mA[m, m] 
\Bigg) \\
&= 
\frac{1}{m} 
\Bigg( 
- \sum_{a_1'=1}^{m-1} \mB[a_1',1] 
+
\sum_{a_2=2}^{m-1} 
\big(
- \gamma - \sum_{a_1'=1}^{m-1} \mB[a_1',a_2] 
\big) - 2 \gamma 
\Bigg)
\\
&= 
\frac{1}{m} 
\Bigg( 
-
(m-2) \gamma
 - 2 \gamma 
- \sum_{a_1'=1}^{m-1} \mB[a_1',1] 
-
\sum_{a_2=2}^{m-1} 
\sum_{a_1'=1}^{m-1} \mB[a_1',a_2] 
\Bigg)
\\
&= 
\frac{1}{m} 
\Bigg( - m \gamma - \sum_{a_1=1}^{m-1} \sum_{a_2=1}^{m-1} \mB[a_1,a_2] \Bigg) \\
&= - \gamma - \frac{1}{m} \sum_{a_1=1}^{m-1} \sum_{a_2=1}^{m-1} \mB[a_1,a_2].
\end{align*}
\end{enumerate}

Only action $1$ has non-negative utility at round $1$, while all other actions have utility gaps of order 
\(
\Omega\left(\gamma / m \right)
\), as shown below.
Analogously, the same holds for Player~2's utility vector $\vu_2^{(1)}$. 
Since $\gamma$ is a free parameter that can be chosen arbitrarily large, \cref{lem:regret_gap} and this construction ensure that under uniform initialization, the action profile $(1,1)$ will be played at round $2$ with probability nearly $1$.
\begin{align*}
\gap{1}{1}{a_1}
&=
\begin{cases}
0
&
a_1 = 1,
\\[2pt]
\frac{\gamma}{m}
&
2 \le a_1 \le m-1,
\\[2pt]
\gamma
+
\frac{
\sum_{a_1, a_2}
\mB[a_1, a_2]
}{m}
& a_1 = m.
\end{cases}
\\
\\
\gap{2}{1}{a_2}
&=
\begin{cases}
0
&
a_2 = 1,
\\[2pt]
\frac{\gamma}{m}
&
2 \le a_2 \le m-1,
\\[2pt]
\gamma
+
\frac{
\sum_{a_1, a_2}
\mB[a_1, a_2]
}{m}
& 
a_2 = m.
\end{cases}
\end{align*}


Setting \(\gamma^{(1)} \defeq \frac{\gamma}{m}\), it follows that \(\gap{i}{1}{a} \geq \gamma^{(1)}\) for every \(a\neq 1\).
Now, by \cref{lem:regret_gap} and setting
\(
\gamma^{(1)}
= \frac{4 m^3 \regcon}{\etacon}
\)
it follows that for all \(a_1 \neq 1\),
\[
\vx_1^{(2)}[a_1] 
\;\leq\; 
\frac{\regcon}{\eta^{(1)} \gap{1}{1}{a_1}}
\;\leq\; 
\frac{\regcon}{
\left(\frac{\etacon}{1^\alpha}\right)
\left(\frac{4 m^3 \regcon}{\etacon}
\cdot 
\frac{1}{m}\right)}
\;=\; 
\frac{1}{4 m^2}
\;\leq\;
\frac{1}{4m}
\frac{1}{m}
=
\frac{\delta}{m}.
\]
and therefore
\[
\vx_1^{(2)}[1] 
\;=\;
1 
-
\sum_{a_1 = 2}^{m}
\vx_1^{(2)}[a_1] 
\;\geq\; 1 - \frac{m}{4 m^2} 
\;\geq\; 1 - \frac{1}{4 m}
\;=\; 1 - \delta.
\]

An analogous argument applies to Player~2's strategy $\vx_2^{(1)}$.
Therefore, in accordance with \cref{def:transition_period}, period $k=3$ starts at round $1$ and terminates immediately, so period $4$ begins at round $2$. Indeed, at round $2$ the defining conditions hold:
\[
\vx_1^{(2)}[a_1(3)] + \vx_1^{(2)}[a_1(4)] \ge 1-\delta,
\qquad
\vx_2^{(2)}[a_2(4)] \ge 1-\delta.
\]
Finally, setting
\[
\gamma^{(1)}
\;=\;
\max\Bigl\{  
\frac{4 m^3 \,\regcon}{\etacon},
2
\left(
\frac{64 \regcon m}{\delta}
\right)^{\frac{1}{1-\alpha}},
\left(
\frac{\regcon m}{\delta}
\right)^{\frac{1}{1-\alpha}}
\alpha^{\frac{\alpha}{1-\alpha}}
(1-\alpha)
\Bigr\},
\quad 
\text{where}
\quad
\delta = \frac{1}{4m}.
\]
ensures that \cref{asmpt:initial_conditions} is satisfied.
\end{proof}

\begin{remark}[Normalized payoff matrix $\mA$]
\label{rem:bounded_payoffs}
Our lower bound construction can be implemented with payoffs in $[-1,1]$.
Although \ftrl is not scale invariant---rescaling payoffs changes the updates---we can modify the construction so that it does not rely on large-magnitude entries.

Specifically, assume henceforth that payoffs must lie in $[-1,1]$.
We start from the recursive matrix $\mB_{m-1,2}$ in \cref{def:payoffB}. By \cref{prop:matrixB_prop1} in \cref{lem:matrixB_props}, its maximum entry equals $2m-1$.
Define the rescaled matrix
\(
\mB \;\coloneqq\;\frac{1}{2m-1}\,\mB_{m-1,2},
\)
so that $\mB\in[0,1]^{(m-1)\times(m-1)}$. We then construct our payoff matrix $\mA$ by embedding $\mB$, as in the original construction, and padding it with $\ceil{\gamma^{(1)}}$ additional \emph{rows} and \emph{columns}, where $\gamma^{(1)}$ is as in \eqref{eq:gamma_init}.
All newly introduced entries are set to $-1$, except that in each added row and each added column we set the first entry to $0$.
This should be contrasted with the initial construction in \eqref{eq:matrixA}, which adds only one row and one column but uses entries of magnitude $\Theta(\gamma^{(1)})$.
Since $\ceil{\gamma^{(1)}}=\poly(m)$, the resulting game has polynomial dimension and all payoffs lie in $[-1,1]$. It is easy to verify that this modification does not affect the subsequent arguments; consequently, the same lower bound holds: \ftrl still requires $2^{\Omega(\poly(m))}$ rounds.
\end{remark}


\subsection{Growth of the cumulative utility gaps}

As shown in~\cref{lem:regret_gap}, to control the probability assigned to an action it suffices to track its \emph{utility gap} over time, rather than the full cumulative utility vector.


\begin{lemma}
\label{lem:regret_gap_increase_actions_1}
Let $k \ge 4$ be even. For any action $a_1 \in \cA_1 \setminus \{ a_1(k), a_1(k-2) \}$, the utility gap $\gap{1}{t}{a_1}$ increases by at least $(1-\delta)(k-1) - \delta (2m-1)$ at every round $t \in \period{k}$.  
Similarly, if $k \ge 5$ is odd, then for any action $a_2 \in \cA_2 \setminus \{ a_2(k), a_2(k-2) \}$, the utility gap $\gap{2}{t}{a_2}$ increases by at least $(1-\delta)(k-1) - \delta (2m-1)$ at every round $t \in \period{k}$.
\end{lemma}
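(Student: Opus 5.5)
The plan is a one-step comparison against the current leader; we treat the even case, and the odd case follows by swapping the roles of the players. Write $S^{(t)}[a] \defeq \sum_{\tau=1}^{t} \vu_1^{(\tau)}[a]$, and let $a^\star \in \argmax_{a} S^{(t-1)}[a]$ be the leader before round $t$. Since $\max_{a'} S^{(t)}[a'] \ge S^{(t-1)}[a^\star] + \vu_1^{(t)}[a^\star]$, we get
\[
\gap{1}{t}{a_1} - \gap{1}{t-1}{a_1} \;\ge\; \vu_1^{(t)}[a^\star] - \vu_1^{(t)}[a_1].
\]
It therefore suffices to show two things for every round $t \in \period{k}$. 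First, $\vu_1^{(t)}[a^\star] \ge (1-\delta)(k-1)$, up to the negligible terms discussed below. Second, $\vu_1^{(t)}[a_1] \le \delta(2m-1)$ for every $a_1 \notin \{a_1(k), a_1(k-2)\}$.

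\textbf{Locating the leader.} The strategy $\vx_1^{(t)}$ is exactly the $\ftrl$ output computed from $S^{(t-1)}$. Because $t$ lies in period $k$, we have $\vx_1^{(t)}[a_1(k-1)] + \vx_1^{(t)}[a_1(k)] \ge 1-\delta$. One of these two actions therefore receives mass at least $(1-\delta)/2$. Every other action receives mass at most $\delta$, and $\delta < (1-\delta)/2$ since $\delta = 1/(4m)$. By \Cref{lem:order_preservation}, the action with the largest probability also has the largest cumulative utility, so $a^\star \in \{a_1(k-1), a_1(k)\}$. By \Cref{lem:matrixA_props}\ref{prop:matrixA_prop2_even}, applied to the even index $k-2$, we have $a_1(k-1) = a_1(k-2)$. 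This set is thus exactly the excluded set in the statement.

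\textbf{Bounding utilities via the column of $a_2(k)$.} During period $k$ we have $\vx_2^{(t)}[a_2(k)] \ge 1-\delta$, and $a_2(k-1) = a_2(k)$ by \Cref{lem:matrixA_props}\ref{prop:matrixA_prop3_odd}. Hence column $a_2(k)$ contains the entries $k$ at row $a_1(k)$ and $k-1$ at row $a_1(k-1)$. It follows that $\vu_1^{(t)}[a^\star] \ge (1-\delta)(k-1) + (\text{contribution of the other columns})$. For any other row $a_1$, I will use the spiral structure of $\mB_{m-1,2}$ from \cref{sec:matrix_props}. There, each positive value appears once (\Cref{lem:matrixA_props}\ref{prop:matrixA_prop4_unique}), and within a column the positive entries are exactly the consecutive pair $\{k-1,k\}$. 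So $\mA[a_1, a_2(k)] \le 0$, and the remaining mass $\delta$ contributes at most $\delta \max \mA = \delta(2m-1)$ by \Cref{lem:matrixA_props}\ref{prop:matrixA_prop1_max}. This gives $\vu_1^{(t)}[a_1] \le \delta(2m-1)$, and subtracting the two bounds yields the claim.

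\textbf{Main obstacle.} The delicate point is the lower bound on $\vu_1^{(t)}[a^\star]$. The off-column mass $\delta$ of player 2 could sit on the padding column $m$, whose entries are of order $-\gamma$. To handle this, I would first note that all entries of $\mB$ are nonnegative, so the only danger is column $m$. By \Cref{asmpt:initial_conditions}, the gap of column $m$ is at least $\gamma^{(1)}$ from round $1$ on, and it keeps growing during the induction. Then \Cref{lem:regret_gap}, with the choice of $\gamma^{(1)}$ in \eqref{eq:gamma_init}, should make $\vx_2^{(t)}[m]\cdot|\mA[a^\star,m]|$ negligible, or at least absorbable into the slack between $\delta(2m-1)$ and the stated bound. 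If that bookkeeping turns out to be too tight, I would instead carry this mass explicitly as an inductive hypothesis inside the proof of \Cref{property:main}.
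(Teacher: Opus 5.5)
Your main line is the paper's proof. You locate the leader at time $t-1$ in $\{a_1(k-1),a_1(k)\}$ via \Cref{lem:order_preservation}. You use that column $a_2(k)=a_2(k-1)$ carries $k-1$ and $k$, which gives $(1-\delta)(k-1)$ for the leader and at most $\delta(2m-1)$ for every other row. You then identify $a_1(k-1)=a_1(k-2)$. Two details are cleaner than in the paper. First, your inequality $\max_{a'}S^{(t)}[a']\ge S^{(t-1)}[a^\star]+\vu_1^{(t)}[a^\star]$ is the correct form of what the paper takes from \Cref{lem:aux_max_min_increment}, whose written proof derives $M_t-M_{t-1}\ge\vu_i^{(t)}[a_i^{(t)}]$ from a nonpositive term, i.e., backwards. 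Second, $\mA[a_1,a_2(k)]\le 0$ rather than $=0$ is right, because the padding row $m$ is negative there. (For $k=4$, $a_1(2)$ is undefined and \Cref{lem:matrixA_props} does not cover index $2$; read the excluded set as $\{a_1(4),a_1(3)\}$, which is what both arguments use.)

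The obstacle you flag is real, and the paper's proof skips it: it asserts $\vu_1^{(t)}[a_1(k-1)]\ge\vx_2^{(t)}[a_2(k-1)](k-1)$ as if that row were nonnegative. However, your remedy does not work as stated. For $a^\star\in\{2,\dots,m-1\}$ one has $|\mA[a^\star,m]|\ge\gamma=m\gamma^{(1)}$, so the entry scales with the same parameter as the initial gap. \Cref{lem:regret_gap} then only gives $\vx_2^{(t)}[m]\,|\mA[a^\star,m]|\le\regcon\,(t-1)^{\alpha}|\mA[a^\star,m]|/\gap{2}{t-1}{m}$. This is of order $\regcon\,t^{\alpha}$ as long as the gap is only known to be $\gamma+O(t)$, for instance right after the round $\hat{t}_4$ at which row $m-1=a_1(4)$ overtakes row $1=a_1(3)$. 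Enlarging $\gamma^{(1)}$ cannot help.

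What does absorb the deficit is the slack in $\vu_1^{(t)}[a_1]\le\delta(2m-1)$, and only row by row. Bound the increment jointly as $\sum_{a_2}\vx_2^{(t)}[a_2]\bigl(\mA[a^\star,a_2]-\mA[a_1,a_2]\bigr)$. For $a^\star,a_1\in\{2,\dots,m-1\}$ the $-\gamma$ cancels in column $m$, leaving a harmless $O(m)\,\vx_2^{(t)}[m]$. For $a_1=m$ or $a^\star=1$, the column-$m$ term has the favorable sign.

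The exception is $a_1=1$, whose column-$m$ entry is only $-3$. It is a non-excluded action for $k\ge 6$, and there you genuinely need $\vx_2^{(t)}[m]\,\gamma$ small. This holds because from $\hat{t}_4$ on player~1 puts mass at least $(1-\delta)/2$ on rows $2,\dots,m-1$. Hence $\vu_2^{(t)}[m]\le-(1-\delta)\gamma/2$, and the gap of column $m$ grows at rate $\Omega(\gamma)$. Period $5$ alone lasts $\Omega(\hat{t}_4)$ rounds, so that gap is $\Omega(\gamma t)$ in periods $k\ge 6$, and \Cref{lem:regret_gap} yields $\vx_2^{(t)}[m]\,\gamma=O(\regcon\,t^{\alpha-1})$. (Odd $k$ is symmetric, with row $m$ and column $1$.)

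What you obtain is the stated bound up to a small additive error. That suffices, because the lemma is only ever used together with \Cref{lemma:aux_decrease_by_constant}, whose slack $1-\delta(k+2m-2)\ge 3/(4m)$ absorbs it.
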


\begin{proof}
Let $a_1 \in \cA_1 \setminus \{ a_1(k), a_1(k-1) \}$ for even $k$.
By the definition of period~$k$, we have
\[
\vx_{1}^{(t)}[a_1(k-1)] + \vx_{1}^{(t)}[a_1(k)] \ge 1-\delta,
\qquad
\vx_{2}^{(t)}[a_2(k)] \ge 1-\delta .
\]

Since during period~$k$ Player~1 primarily mixes between the actions $a_1(k-1)$ and $a_1(k)$, it follows from \cref{lemma:aux_prob_comparison} that at either
$\vx_{1}^{(t)}[a_1(k-1)]$ or $\vx_{1}^{(t)}[a_1(k)]$ is maximal among the components of $\vx_{1}^{(t)}$.
By \cref{lem:order_preservation}, the same action also maximizes the cumulative utility at rounds $t-1$ and $t$.
Consequently,
\[
\{ a_1(k-1), a_1(k) \}
\supseteq
\argmax_{a_1' \in \cA_1}
\left\{
\sum_{\tau=1}^{t-1} \vu_1^{(\tau)}[a_1']
\right\}
\neq \emptyset
.
\]

Since $k$ is even, \cref{prop:matrixA_prop3_odd} in \cref{lem:matrixA_props} implies $a_2(k-1)=a_2(k)$, and so the utilities of actions $a_1(k-1)$ and $a_1(k)$ at round $t$ satisfy
\begin{equation}
\vu_1^{(t)}[a_1(k-1)]
\;\ge\;
\vx_{2}^{(t)}[a_2(k-1)]\,(k-1)
\;\ge\;
(1-\delta)\,(k-1),
\qquad
\vu_1^{(t)}[a_1(k)]
\;\ge\;
\vx_{2}^{(t)}[a_2(k)]\,k
\;\ge\;
(1-\delta)\,k
.
\label{eq:evol_gaps_eq3}
\end{equation}

Next, consider the one-step change in the gap of action $a_1$ after round~$t$:
\begin{align}
\Delta \gap{1}{t}{a_1}
&\;\defeq\;
\gap{1}{t}{a_1} - \gap{1}{t-1}{a_1}
\notag\\
&=
\left(
\max_{a_1' \in \cA_1}
\sum_{\tau=1}^{t} \vu_1^{(\tau)}[a_1']
-
\sum_{\tau=1}^{t} \vu_1^{(\tau)}[a_1]
\right)
-
\left(
\max_{a_1' \in \cA_1}
\sum_{\tau=1}^{t-1} \vu_1^{(\tau)}[a_1']
-
\sum_{\tau=1}^{t-1} \vu_1^{(\tau)}[a_1]
\right)
\notag\\
&=
\left(
\max_{a_1' \in \cA_1}
\sum_{\tau=1}^{t} \vu_1^{(\tau)}[a_1']
-
\max_{a_1' \in \cA_1}
\sum_{\tau=1}^{t-1} \vu_1^{(\tau)}[a_1']
\right)
-
\vu_1^{(t)}[a_1]
.
\label{eq:evol_gaps_eq1}
\end{align}

Even though (during period~$k$) one of $a_1(k-1)$ and $a_1(k)$ attains the maximal cumulative utility at rounds $t-1$ and~$t$, the maximizer could in principle change from one round to the next.
Fortunately, \cref{lem:aux_max_min_increment} avoids a case distinction.
\begin{equation}
\max_{a_1' \in \cA_1}
\sum_{\tau=1}^{t} \vu_1^{(\tau)}[a_1']
-
\max_{a_1' \in \cA_1}
\sum_{\tau=1}^{t-1} \vu_1^{(\tau)}[a_1']
\;\ge\;
\min\!\left\{
\vu_1^{(t)}[a_1(k-1)],
\vu_1^{(t)}[a_1(k)]
\right\}.
\label{eq:evol_gaps_eq2}
\end{equation}

Combining \eqref{eq:evol_gaps_eq1}, \eqref{eq:evol_gaps_eq2}, and \eqref{eq:evol_gaps_eq3}, we obtain
\[
\Delta \gap{1}{t}{a_1}
\;\ge\;
(1-\delta)(k-1) - \vu_1^{(t)}[a_1].
\]

For any $a_1 \in \cA_1 \setminus \{a_1(k),a_1(k-1)\}$, the identity $a_2(k-1)=a_2(k)$ (by \cref{prop:matrixA_prop3_odd} in \cref{lem:matrixA_props}) implies $\mA[a_1,a_2(k)]=0$; equivalently, row $a_1$ has no nonzero entry in column $a_2(k)$.
Moreover, by the definition of period~$k$, Player~2 assigns total probability at most $\delta$ to columns other than $a_2(k)$. Finally, \cref{prop:matrixA_prop1_max} in \cref{lem:matrixA_props} implies that every payoff outside column $a_2(k)$ is at most $2m-1$. 
Hence,
\[
\vu_1^{(t)}[a_1] \;\le\; \delta(2m-1).
\]
Therefore, due to $a_1(k-2)=a_1(k-1)$ from \cref{prop:matrixA_prop2_even} in \cref{lem:matrixA_props}, it follows that for any
$a_1 \in \cA_1 \setminus \{a_1(k),a_1(k-1)\}
=
\cA_1 \setminus \{a_1(k),a_1(k-2)\}$,
\[
\Delta \gap{1}{t}{a_1}
\;\ge\;
(1-\delta)(k-1) - \delta(2m-1).
\]
The proof for odd $k$ and for Player~2 is analogous.

\end{proof}

\begin{lemma}
\label{lem:regret_gap_increase_actions_2}
Let $k \ge 4$ be even. For any action $a_2 \in \cA_2 \setminus \{a_2(k), a_2(k+2)\}$, the utility gap $\gap{2}{t}{a_2}$ increases by at least $(1-\delta)(k-1) - \delta(2m-1)$ at every round $t \in [\tlow{k}, \thigh{k}]$. 
Similarly, if $k \ge 5$ is odd, then for any action $a_1 \in \cA_1 \setminus \{a_1(k), a_1(k+2)\}$, the utility gap $\gap{1}{t}{a_1}$ increases by at least $(1-\delta)(k-1) - \delta(2m-1)$ at every round $t \in [\tlow{k}, \thigh{k}]$.
\end{lemma}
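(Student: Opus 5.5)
We follow the template of \cref{lem:regret_gap_increase_actions_1}, now from the perspective of the player who is (nearly) pure during period~$k$. Take $k\ge 4$ even. By \cref{def:transition_period}, for every $t\in\period{k}$ we have $\vx_2^{(t)}[a_2(k)]\ge 1-\delta$ and $\vx_1^{(t)}[a_1(k-1)]+\vx_1^{(t)}[a_1(k)]\ge 1-\delta$. Here Player~2 concentrates on a single action, so \cref{lemma:aux_prob_comparison} (with a single dominant coordinate) gives that $a_2(k)$ is the largest coordinate of $\vx_2^{(t)}$. By \cref{lem:order_preservation}, it therefore attains the maximal cumulative utility at the relevant rounds. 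We then use the decomposition \eqref{eq:evol_gaps_eq1} for Player~2. The increment of the maximum is bounded below, through \cref{lem:aux_max_min_increment}, by the utility of the maximizer, so the increment is at least $\vu_2^{(t)}[a_2(k)]$. It remains to show
\[
\Delta\gap{2}{t}{a_2}\;\ge\;\vu_2^{(t)}[a_2(k)]-\vu_2^{(t)}[a_2]\;\ge\;(1-\delta)(k-1)-\delta(2m-1).
\]

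For the lower bound on $\vu_2^{(t)}[a_2(k)]$: since $k-1$ is odd, \cref{prop:matrixA_prop3_odd} gives $a_2(k-1)=a_2(k)$. Hence column $a_2(k)$ contains both entries $k-1$ (in row $a_1(k-1)$) and $k$ (in row $a_1(k)$). As all other entries of $\mA$ are at least the negative ones in the padding row, and Player~1 puts mass at least $1-\delta$ on these two rows, we get $\vu_2^{(t)}[a_2(k)]\ge (1-\delta)(k-1)$, provided the residual mass contributes nonnegatively. This is the point needing care, since column $a_2(k)$ may contain the large negative padding entry in row $m$. We handle it by noting that $\vx_1^{(t)}[m]$ is tiny: its gap is at least $\gamma^{(1)}$, and this gap only grows by the argument of \cref{lem:regret_gap_increase_actions_1}. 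Then \cref{lem:regret_gap} makes its contribution negligible, and it can be absorbed into the $\delta(2m-1)$ slack, or more simply into $\delta(k-1)$. Alternatively we check that the entries of column $a_2(k)$ for $k\ge4$ in the embedded $\mB$ are nonnegative, and that the padding entry only enters with the controlled probability.

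For the upper bound on $\vu_2^{(t)}[a_2]$ with $a_2\notin\{a_2(k),a_2(k+2)\}$: we use the spiral structure. Since $k$ is even, \cref{prop:matrixA_prop2_even} gives $a_1(k)=a_1(k+1)$, so row $a_1(k)$ carries its nonzero entries $k$ and $k+1$ in columns $a_2(k)$ and $a_2(k+1)=a_2(k+2)$ (the latter by \cref{prop:matrixA_prop3_odd}). Similarly, row $a_1(k-1)=a_1(k-2)$ has nonzero entries only in columns $a_2(k-1)=a_2(k)$ and $a_2(k-2)$. This is the main obstacle. The column $a_2(k-2)$ is not excluded by the hypothesis, so we must use the structure of $\mB$ (\cref{sec:matrix_props}) to argue that the payoff at $(a_1(k-1),a_2(k-2))$ is bounded. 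Concretely, that entry is $k-2$, and what we need is $\vu_2^{(t)}[a_2(k-2)]\le$ (mass on $a_1(k-1)$)$\cdot(k-2)+\delta(2m-1)$. This is at most $\vu_2^{(t)}[a_2(k)]$ minus the corresponding amount, so pairing terms still yields a gap increment of at least $(1-\delta)(k-1)-\delta(2m-1)$ after a careful splitting by the masses $p=\vx_1^{(t)}[a_1(k-1)]$ and $q=\vx_1^{(t)}[a_1(k)]$. Indeed $p(k-1)+qk-p(k-2)\ge (p+q)\cdot 1$, so this splitting may fail to give the stated constant. If it does, we fall back on the intended reading that row entries of $a_1(k-1)$ vanish outside $\{a_2(k-1),a_2(k)\}$ in the $\mA$ construction. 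With that, every other column $a_2$ gets utility at most $\delta(2m-1)$ from the residual rows, and the bound follows exactly as before. The odd case is symmetric, with the roles of the players swapped.
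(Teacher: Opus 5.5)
Your template is the paper's: since $\vx_2^{(t)}[a_2(k)]\ge 1-\delta$, \cref{lem:order_preservation} makes $a_2(k)$ the cumulative-utility leader. The increment is then at least $\vu_2^{(t)}[a_2(k)]-\vu_2^{(t)}[a_2]$, with the first term at least $(1-\delta)(k-1)$ and the second claimed to be at most $\delta(2m-1)$. The obstacle you flagged is real, and your fallback does not close it, because the ``intended reading'' is false. For even $k\ge 6$, \cref{lem:matrixA_props} gives $a_1(k-2)=a_1(k-1)$ (as $k-2$ is even), so $\mA[a_1(k-1),a_2(k-2)]=k-2\neq 0$. Also $a_2(k-2)=a_2(k-3)$ (as $k-3$ is odd), a column different from $a_2(k)=a_2(k-1)$ and $a_2(k+2)=a_2(k+1)$, so it is not excluded by the hypothesis.

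Your own computation is the right one. Write $p=\vx_1^{(t)}[a_1(k-1)]$ and $q=\vx_1^{(t)}[a_1(k)]$. In any round where $a_2(k)$ stays the leader (e.g.\ $t<\thigh{k}$), the increment for $a_2=a_2(k-2)$ is at most $p+qk$; the other rows, including the padding row $m$, only lower it. Whenever $p\ge q$, as in the early rounds of the period, this is at most $\tfrac{k+1}{2}<k-2\le(1-\delta)(k-1)-\delta(2m-1)$ once $k\ge 6$. Concretely, in $\mB_{4,2}$ with $k=6$, the row $a_1(5)=a_1(4)$ holds $4$ in column $a_2(4)$ and $5$ in column $a_2(6)$. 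At the start of period $6$ the gap of $a_2(4)$ therefore grows by at most $p+6q\approx 1$ per round, while the lemma asserts more than $4$. So for $k\ge 6$ the stated bound is false for $a_2(k-2)$, and no finer splitting can prove it. Symmetrically, it fails for $a_1(k-2)$ with odd $k$, already at $k=5$ via $a_1(3)$.

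The paper's proof has exactly the same defect: it asserts $\mA[a_1(k-1),a_2]=0$ for every $a_2\notin\{a_2(k),a_2(k+2)\}$, which fails at $a_2(k-2)$. What the common argument can establish is the statement with $a_2(k-2)$ (resp.\ $a_1(k-2)$) also excluded; for that action one only gets an increment of order $1$ per round. A secondary point: you rightly notice the negative padding entry $\mA[m,a_2(k)]$, which the paper's proof silently drops, but your justification does not suffice. That entry has magnitude about $\gamma=m\gamma^{(1)}$. A gap of order $\gamma^{(1)}$ growing by $O(k)$ per round, fed into \cref{lem:regret_gap}, does not make $\vx_1^{(t)}[m]\,\gamma$ small. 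You would need a bound like $\vx_1^{(t)}[m]\lesssim\delta/\gamma$, the one asserted in \cref{lem:beneficial_deviation}. Even then the correction is $O(\delta)$ and is not automatically absorbed into the exact constant.
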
 

\begin{proof}
Let $a_2 \in \cA_2 \setminus \{a_2(k),a_2(k+2)\}$ and suppose that $k$ is even.
By the definition of transition period~$k$, for every $t \in [\tlow{k},\thigh{k}]$,
\[
\vx_{1}^{(t)}[a_1(k-1)] + \vx_{1}^{(t)}[a_1(k)] \ge 1-\delta,
\qquad
\vx_{2}^{(t)}[a_2(k)] \ge 1-\delta .
\]
Since $\vx_{2}^{(t)}[a_2(k)] \ge 1-\delta$, action $a_2(k)$ has maximal probability among actions in $\cA_2$, and thus \cref{lem:order_preservation} implies that it also attains the maximal cumulative utility up to time $t-1$, \textit{i.e.},
\[
a_2(k) \in \arg\max_{a_2' \in \cA_2} \sum_{\tau=1}^{t-1} \vu_2^{(\tau)}[a_2'].
\]
Consequently, the one-step change in the gap of $a_2$ at round $t$ satisfies
\begin{align}
\Delta \gap{2}{t}{a_2}
&\defeq
\gap{2}{t}{a_2} - \gap{2}{t-1}{a_2}
\nonumber
\\
&=
\left(
\max_{a_2' \in \cA_2}\sum_{\tau=1}^{t}\vu_2^{(\tau)}[a_2']
-
\max_{a_2' \in \cA_2}\sum_{\tau=1}^{t-1}\vu_2^{(\tau)}[a_2']
\right)
-
\vu_2^{(t)}[a_2]
\nonumber
\\
&\ge
\left(
\sum_{\tau=1}^{t}\vu_2^{(\tau)}[a_2(k)]
-
\sum_{\tau=1}^{t-1}\vu_2^{(\tau)}[a_2(k)]
\right)
-
\vu_2^{(t)}[a_2]
\nonumber
\\
&=
\vu_2^{(t)}[a_2(k)]-\vu_2^{(t)}[a_2].
\label{eq:gap_increase_eq1}
\end{align}

For the first term, the defining condition of period~$k$ yields
\(
\vu_2^{(t)}[a_2(k)]
=
\vx_{1}^{(t)}[a_1(k)]\,k
+
\vx_{1}^{(t)}[a_1(k-1)]\,(k-1)
\ge
(1-\delta)(k-1).
\)
For the second term, the identities $a_2(k-1)=a_2(k)$ and $a_2(k+1)=a_2(k+2)$ (by \cref{prop:matrixA_prop3_odd} in \cref{lem:matrixA_props}) imply that, for any
$a_2 \in \cA_2 \setminus \{a_2(k),a_2(k+2)\}$,
\(
\mA[a_1(k-1),a_2]=0
\)
and
\(
\mA[a_1(k),a_2]=0.
\)
Moreover, by the definition of period~$k$, Player~1 assigns total probability at most $\delta$ to actions other than $a_1(k-1)$ and $a_1(k)$, and \cref{prop:matrixA_prop1_max} in \cref{lem:matrixA_props} implies that every payoff outside these two rows is at most $2m-1$.
Therefore,
\begin{align}
\vu_2^{(t)}[a_2] \le \delta(2m-1).
\label{eq:gap_increase_eq3}
\end{align}
Combining \eqref{eq:gap_increase_eq1} and \eqref{eq:gap_increase_eq3}, we conclude that for any
$a_2 \in \cA_2 \setminus \{a_2(k),a_2(k+2)\}$,
\[
\Delta \gap{2}{t}{a_2}
=
\vu_2^{(t)}[a_2(k)]-\vu_2^{(t)}[a_2]
\;\ge\;
(1-\delta)(k-1)-\delta(2m-1).
\]
This establishes the claim. The proof for odd $k$ (and the corresponding case for Player~1) is analogous.
\end{proof}

\subsection{Lower bounds on period $k$ duration}

So far, we have characterized how the gap of an action $a_i\in\cA_i$ evolves over the rounds within period $k$. We now use this characterization to bound the duration of period $k$.

\begin{lemma}
\label{lem:lower_bound_period_k}
Let $k \ge 4$ be even. Then
\[
T_k \;\ge\; \frac{1}{1 + \delta(2m-1)} \, \gap{1}{\thigh{k-1}}{a_1(k)}.
\]
Similarly, if $k \ge 5$ is odd, then
\[
T_k \;\ge\; \frac{1}{1 + \delta(2m-1)} \, \gap{2}{\thigh{k-1}}{a_2(k)}.
\]
\end{lemma}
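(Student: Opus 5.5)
Take $k$ even; the odd case is symmetric with the roles of the players exchanged. The plan is to show that during period $k$ the gap of $a_1(k)$ can shrink by at most $1+\delta(2m-1)$ per round. Period $k$ can only end once row $a_1(k)$ has overtaken $a_1(k-1)$ in cumulative utility, i.e., once its gap is zero. Dividing the initial gap by the per-round decrease then gives the bound.

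\textbf{Per-round decrease.} Fix $t\in[\tlow{k},\thigh{k}]$ and use the decomposition~\eqref{eq:evol_gaps_eq1}:
\[
\Delta\gap{1}{t}{a_1(k)} = \Bigl(\max_{a'}\textstyle\sum_{\tau\le t}\vu_1^{(\tau)}[a'] - \max_{a'}\sum_{\tau\le t-1}\vu_1^{(\tau)}[a']\Bigr) - \vu_1^{(t)}[a_1(k)].
\]
As in the proof of \cref{lem:regret_gap_increase_actions_1}, the maximizer at time $t-1$ lies in $\{a_1(k-1),a_1(k)\}$ (this uses \cref{lemma:aux_prob_comparison} and \cref{lem:order_preservation}). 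So the first bracket is at least $\vu_1^{(t)}[a_1(k-1)]$ when the maximizer is $a_1(k-1)$, and at least $\vu_1^{(t)}[a_1(k)]$ otherwise.

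\begin{itemize}
\item If the maximizer is $a_1(k)$, its gap is already $0$ and stays $\ge 0$, so the phase of interest is over.
\item If the maximizer is $a_1(k-1)$, then $\Delta\gap{1}{t}{a_1(k)}\ge \vu_1^{(t)}[a_1(k-1)]-\vu_1^{(t)}[a_1(k)]$.
\end{itemize}

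In the second case I bound the difference using $a_2(k-1)=a_2(k)$ (\cref{lem:matrixA_props}). In column $a_2(k)$ the difference of the two rows is $(k-1)-k=-1$, weighted by $\vx_2^{(t)}[a_2(k)]\le 1$. All other columns together carry mass at most $\delta$, and there the entries of the two rows differ by at most $2m-1$. Entries of row $a_1(k-1)$ outside the spiral are $\ge 0$ for rows of $\mB$, and all entries are $\le 2m-1$; the negative padding only affects row/column $m$, which are not these rows. Hence $\Delta\gap{1}{t}{a_1(k)}\ge -1-\delta(2m-1)$.

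\textbf{Gap must reach zero.} Next I argue that at the last round of period $k$ the gap of $a_1(k)$ has reached zero. By \cref{property:main} the next period is $k+1$, and period $k+1$ (odd) requires $\vx_1[a_1(k+1)]=\vx_1[a_1(k)]\ge 1-\delta$. Then $a_1(k)$ has the maximal probability, and \cref{lem:order_preservation} makes its cumulative utility maximal, i.e., its gap is zero by round $\tlow{k+1}-1=\thigh{k}$.

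\textbf{Conclusion.} Starting from $\gap{1}{\thigh{k-1}}{a_1(k)}$ and decreasing by at most $1+\delta(2m-1)$ per round of period $k$, reaching zero requires $T_k(1+\delta(2m-1))\ge \gap{1}{\thigh{k-1}}{a_1(k)}$.

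\textbf{Main obstacle.} The delicate step is the per-round bound on the utility difference. It requires checking that the entries of rows $a_1(k-1)$ and $a_1(k)$ outside column $a_2(k)$ differ by at most $2m-1$. This relies on the payoffs of $\mB$ being nonnegative with maximum $2m-1$, and on the negative padding lying only in row/column $m$; the weight player 2 puts on column $m$ is at most $\delta$, and its entries are of the same sign pattern for both rows. If column $m$ entries spoil this, I would instead invoke \cref{lem:regret_gap} to show that player 2's mass on column $m$ is negligibly small, and absorb the resulting term.
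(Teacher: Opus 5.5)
Your proof is correct and follows the paper's route. The gap of $a_1(k)$ drops by at most $1+\delta(2m-1)$ per round while $a_1(k-1)$ leads in cumulative utility, and $a_1(k)$ must become the leader before period $k$ ends. Your derivation of that second fact from the period-$(k+1)$ condition $\vx_1[a_1(k)]\ge 1-\delta$ together with \cref{lem:order_preservation} is a more explicit version of the paper's appeal to \cref{property:main}.

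Your column-$m$ worry needs no fallback. The signed difference $\mA[a_1(k-1),m]-\mA[a_1(k),m]$ is positive: it equals $\gamma+6$ for $k=4$, and $4$ for $k\ge 6$, because row $a_1(k)$ of $\mB$ holds $\{k,k+1\}$ and row $a_1(k-1)$ holds $\{k-2,k-1\}$. So that column only helps, a point the paper's own expansion also passes over silently.
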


\begin{proof}
Let $k$ be even (the odd-$k$ case is symmetric).
By definition, period~$k$ ends at the round $t$ when $\vecx{1}{t} [a_1(k)]$ reaches the threshold $1-\delta$ (and, for odd~$k$, once the probability mass on $\vecx{2}{t} [a_2(k)]$ reaches $1-\delta$).
Before this happen, the action $a_1(k)$ needs first to become a maximizer of the cumulative gap over $\cA_1$ (as shown in \cref{property:main})
; this occurs at round
\(
\hat t_k \in [\tlow{k},\thigh{k}].
\)

For every $t \in [\tlow{k}, \hat t_k-1]$, the one-step change in the utility gap of $a_1(k)$ is lower bounded by
\begin{align*}
\Delta \gap{1}{t}{a_1(k)}
&\;=\;
\left(
\max_{a_1' \in \cA_1}\sum_{\tau=1}^{t}\vu_{1}^{(\tau)}[a_1']
-
\max_{a_1' \in \cA_1}\sum_{\tau=1}^{t-1}\vu_{1}^{(\tau)}[a_1']
\right)
- 
\vu_1^{(t)}[a_1(k)]
\\
&\;=\;
\left(
\sum_{\tau=1}^{t}\vu_{1}^{(\tau)}[a_2(k-1)]
-
\sum_{\tau=1}^{t-1}\vu_{1}^{(\tau)}[a_2(k-1)]
\right)
- 
\vu_1^{(t)}[a_1(k)]
\\
&\;=\;
\vu_1^{(t)}[a_1(k-1)] - \vu_1^{(t)}[a_1(k)]
\\
&\;\ge\;
\vx_{2}^{(t)}[a_2(k-1)] (k-1)
-
\vx_{2}^{(t)}[a_2(k)]\,k
-
\vx_{2}^{(t)}[a_2(k+1)] (k+1).
\end{align*}
\cref{prop:matrixA_prop3_odd} in \cref{lem:matrixA_props} implies that $a_2(k-1) = a_2(k)$ for $k$ even, and so 
\begin{align*}
\Delta \gap{1}{t}{a_1(k)}
&\;\ge\;
-
\vx_{2}^{(t)}[a_2(k)]
-
\vx_{2}^{(t)}[a_2(k+1)] (k+1)
\\
&\;=\;
-\vx_{2}^{(t)}[a_2(k)] 
-
\vx_{2}^{(t)}[a_2(k+1)]
(2m-1)
\\
&\;\ge\;
-1 - \delta (2m-1),
\end{align*}
where the last inequality uses $\vx_{2}^{(t)}[a_2(k+1)] \le \delta$ and $\vx_{2}^{(t)}[a_2(k)] \le 1$.
Equivalently, over period $k$ the gap $\gap{1}{t}{a_1(k)}$ can decrease by at most $1+\delta(2m+1)$ per round.
Moreover, since period~$k$ lasts at least until $\hat t_k$, we have $T_k \ge (\hat t_k - 1) - \tlow{k} + 1$.
Because the gap at the beginning of the period is finite, it follows that
\[
T_k 
\;\ge\;
\frac{1}{1+\delta(2m-1)} \,\gap{1}{\thigh{k-1}}{a_1(k)},
\]
which establishes the claim.
The odd-$k$ case follows analogously.

\end{proof}



\gaplb*

\begin{proof}
The claim follows by repeatedly applying \cref{lem:regret_gap_increase_actions_1}.
We present the argument for even $k \ge 6$; the odd-$k$ case is analogous.

For any period index $\ell \in \{4,\dots,k-2\}$, applying \cref{lem:regret_gap_increase_actions_1} within period~$\ell$ yields that the regret gap of action $a_1(k)$ increases by at least
\[
\Bigl[(1-\delta)(\ell-1) - \delta(2m-1)\Bigr]
\]
per round throughout that period.
Therefore, over the entire duration $T_\ell$ of period~$\ell$, we obtain the increment bound
\[
\gap{1}{\thigh{\ell}}{a_1(k)}
\;\ge\;
\gap{1}{\thigh{\ell-1}}{a_1(k)}
\;+\;
\Bigl[(1-\delta)(\ell-1) - \delta(2m-1)\Bigr]\, T_{\ell}.
\]
Summing the above inequality over $\ell = 4,\dots,k-2$ and telescoping gives
\[
\gap{1}{\thigh{k-2}}{a_1(k)}
\;\ge\;
\gap{1}{\thigh{3}}{a_1(k)}
\;+\;
\sum_{\ell = 4}^{k-2}
\Bigl[(1-\delta)(\ell-1) - \delta(2m-1)\Bigr]\, T_{\ell}.
\]
Dropping the nonnegative term $\gap{1}{\thigh{3}}{a_1(k)} = \gap{1}{1}{a_1(k)} \ge 0$ results to the stated lower bound:
\[
\gap{1}{\thigh{k-2}}{a_1(k)}
\;\ge\;
\sum_{\ell = 4}^{k-2}
\Bigl[(1-\delta)(\ell-1) - \delta(2m-1)\Bigr]\, T_{\ell}.
\]
The odd-$k$ case follows analogously.
\end{proof}

Although \cref{lem:lower_bound_period_k,lem:lower_bound_regret_gap} appear to admit a recursive application, a mismatch in their reference points prevents this directly. Specifically, \cref{lem:lower_bound_regret_gap} provides a lower bound on the regret gap only up to the end of period \(k-2\), whereas \cref{lem:lower_bound_period_k} concerns the subsequent period \(k-1\). To bridge this gap, we require an auxiliary lemma that tracks the evolution of the regret gap of \(a_1(k)\) from the end of \(k-2\) through the end of \(k-1\).

\begin{lemma}
\label{lem:regret_gap_increase_prev_action}
Let $k \ge 4$ be even. Then,
\[
\gap{1}{\thigh{k-1}}{a_1(k)}
\;\geq\;
\gap{1}{\thigh{k-2}}{a_1(k)} 
-
\left(
1 + \delta(2m-1)
\right)
T_{k-1}
.
\]
Similarly, if $k \ge 5$ is odd, then 
\[
\gap{2}{\thigh{k-1}}{a_2(k)}
\;\geq\;
\gap{2}{\thigh{k-2}}{a_2(k)} 
-
\left(
1 + \delta(2m-1)
\right)
T_{k-1}
.
\]
\end{lemma}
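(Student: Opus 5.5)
The plan is to lower bound the per-round change of the gap of $a_1(k)$ during period $k-1$ and then sum over its $T_{k-1}$ rounds. We treat even $k$; the odd case is symmetric with the roles of the players swapped. Since $k-1$ is odd, during every round $t \in [\tlow{k-1},\thigh{k-1}]$ we have $\vx_1^{(t)}[a_1(k-1)] \ge 1-\delta$ and $\vx_2^{(t)}[a_2(k-2)] + \vx_2^{(t)}[a_2(k-1)] \ge 1-\delta$. As in the proofs of \cref{lem:regret_gap_increase_actions_1,lem:regret_gap_increase_prev_action}, write the one-step change as
\[
\Delta \gap{1}{t}{a_1(k)} = \Bigl(\max_{a_1'}\sum_{\tau=1}^{t}\vu_1^{(\tau)}[a_1'] - \max_{a_1'}\sum_{\tau=1}^{t-1}\vu_1^{(\tau)}[a_1']\Bigr) - \vu_1^{(t)}[a_1(k)].
\]

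First, bound the increment of the maximum cumulative utility. Since $\vx_1^{(t)}[a_1(k-1)] \ge 1-\delta > 1/2$, this action has maximal probability. By \cref{lem:order_preservation}, it is therefore a maximizer of $\sum_{\tau=1}^{t-1}\vu_1^{(\tau)}$. The increment of the max is then at least $\vu_1^{(t)}[a_1(k-1)]$, which gives
\[
\Delta \gap{1}{t}{a_1(k)} \ge \vu_1^{(t)}[a_1(k-1)] - \vu_1^{(t)}[a_1(k)].
\]

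Second, compare the two rows. By \cref{lem:matrixA_props}, since $k-2$ is even, $a_1(k-2) = a_1(k-1)$. Hence row $a_1(k-1)$ contains the entries $k-2$ at column $a_2(k-2)$ and $k-1$ at column $a_2(k-1)$. Since $k-1$ is odd, $a_2(k-1) = a_2(k)$, so row $a_1(k)$ contains the entry $k$ at column $a_2(k-1)$ and $k+1$ at column $a_2(k+1)$. The column $a_2(k-2)$ is zero in row $a_1(k)$, and all other entries are at most $2m-1$. Let $p = \vx_2^{(t)}[a_2(k-1)]$ and $q = \vx_2^{(t)}[a_2(k-2)]$, so that $p+q \ge 1-\delta$ and the residual mass is at most $\delta$. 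Then, up to handling the residual mass,
\[
\vu_1^{(t)}[a_1(k-1)] - \vu_1^{(t)}[a_1(k)] \ge q(k-2) + p(k-1) - pk - \delta(2m-1) \ge -p - \delta(2m-1) \ge -1-\delta(2m-1).
\]
One must check that the residual mass in row $a_1(k-1)$ contributes nonnegatively or is absorbed. Entries of $\mA$ within the $\mB$ block are nonnegative, while the extra row and column are negative. I would argue that those extra actions carry negligible mass, using \cref{lem:regret_gap} and their large gaps. Alternatively, I would bound the residual contribution crudely by $\delta(2m-1)$ as the paper does.

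Summing this per-round bound over the $T_{k-1}$ rounds of period $k-1$ and telescoping from $\thigh{k-2}$ to $\thigh{k-1}$ yields the claim. I expect the main obstacle to be bounding the residual-mass terms cleanly, in particular controlling the negative entries of $\mA$ in the padding row and column. I would first try to mirror the accounting of \cref{lem:lower_bound_period_k}, where $\vu_1^{(t)}[a_1(k-1)] \ge 0$ up to negligible terms. If that fails, I would invoke \cref{asmpt:initial_conditions} together with \cref{lem:regret_gap} to show that the padding actions have mass small enough to be absorbed into the $\delta(2m-1)$ slack.
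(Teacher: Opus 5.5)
Your proposal follows the paper's proof step for step. Order preservation makes $a_1(k-1)$ a maximizer of cumulative utility at round $t-1$. The identities $a_1(k-2)=a_1(k-1)$ and $a_2(k-1)=a_2(k)$ then reduce the per-round change to $\vu_1^{(t)}[a_1(k-1)]-\vu_1^{(t)}[a_1(k)]\ge -1-\delta(2m-1)$, and summing over the $T_{k-1}$ rounds of period $k-1$ gives the claim. Your one-sided bound on the increment of the maximum is slightly cleaner than the paper's equality, which tacitly also needs $a_1(k-1)$ to be a maximizer at round $t$, including $t=\thigh{k-1}$.

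The padding term you flag is a real omission: the paper's expansion of the utilities simply drops column $m$. However, neither of your fallbacks closes it.

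\begin{itemize}
\item The crude $\delta(2m-1)$ bound fails because $\mA[a_1(k-1),m]=-\gamma-(2k-3)$ has magnitude of order $\gamma$, not $2m-1$.
\item The ``negligible mass'' route would need $\vx_2^{(t)}[m]=O(\delta/\gamma)$, since at $k=2m-2$ the available slack is only about $\delta$. That is far stronger than the $\delta/m$ supplied by \cref{thm:uniform_small_prob_fixed_action}.
\end{itemize}

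The missing observation is that you should keep the column-$m$ term of \emph{both} rows. Every row $2\le a_1\le m-1$ carries the same offset $-\gamma$ in column $m$, so these cancel in the difference. The net contribution is $\vx_2^{(t)}[m]\bigl(\mA[a_1(k-1),m]-\mA[a_1(k),m]\bigr)$, which equals $4\,\vx_2^{(t)}[m]\ge 0$ for $k\ge 6$ (row sums $2k-3$ versus $2k+1$). For $k=4$ it equals $(\gamma+6)\,\vx_2^{(t)}[m]\ge 0$, since row $a_1(3)=1$ has padding entry $-3$. So no bound on the mass of action $m$ is needed at all.

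The odd case is not perfectly symmetric at the boundary $k=2m-1$. There, column $a_2(2m-1)$ contains only the entry $2m-1$, so the net padding term is $-(2m-4)\,\vx_1^{(t)}[m]\ge-\delta(2m-4)$. Because there is no $a_1(k+1)$ term at that boundary, the per-round bound $-1-\delta(2m-1)$ still holds. For all other odd $k$, the same cancellation leaves a nonnegative padding term.
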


\begin{proof}
We track the evolution of $\gap{1}{t}{a_1(k)}$ during period~$k-1$ for even $k$.
Fix any $t \in [\tlow{k-1},\thigh{k-1}]$.
By the definition of transition period~$k-1$, it holds that
\[
\vx_{2}^{(t)}[a_2(k-2)] + \vx_{2}^{(t)}[a_2(k-1)] \ge 1-\delta
\quad
\text{and}
\quad
\vx_{1}^{(t)}[a_1(k-1)] \ge 1-\delta.
\]

Since Player~1 assigns probability at least $1-\delta$ to $a_1(k-1)$, \cref{lem:order_preservation} implies that $a_1(k-1)$ attains the maximal cumulative utility in $\cA_1$ throughout period $k-1$.
Therefore, the maximizer in the definition of the gap is $a_1(k-1)$ and
\begin{align*}
\Delta \gap{1}{t}{a_1(k)}
&
\;\defeq\;
\gap{1}{t}{a_1(k)}-\gap{1}{t-1}{a_1(k)}
\\
&
\;=\;
\left(
\max_{a_1' \in \cA_1}\sum_{\tau=1}^{t}\vu_1^{(\tau)}[a_1']
-
\max_{a_1' \in \cA_1}\sum_{\tau=1}^{t-1}\vu_1^{(\tau)}[a_1']
\right)
-
\vu_1^{(t)}[a_1(k)]
\\
&
\;=\;
\vu_1^{(t)}[a_1(k-1)]-\vu_1^{(t)}[a_1(k)].
\end{align*}
Expanding the utilities gives
\[
\Delta \gap{1}{t}{a_1(k)}
=
\vx_2^{(t)}[a_2(k-1)] (k-1)
+
\vx_2^{(t)}[a_2(k-2)] (k-2)
-
\vx_2^{(t)}[a_2(k)]\,k
-
\vx_2^{(t)}[a_2(k+1)] (k+1).
\]
Since $k$ is even, \cref{prop:matrixA_prop3_odd} in \cref{lem:matrixA_props} implies $a_2(k-1)=a_2(k)$.
We now derive a crude but sufficient bound.
The terms satisfy $\vx_2^{(t)}[a_2(k-2)]\ge 0$ and $\vx_2^{(t)}[a_2(k-1)]\le 1$.
Moreover, by the definition of the period, Player~2 assigns probability at most $\delta$ to any action other than $a_2(k-2), a_2(k-1)$, hence $\vx_2^{(t)}[a_2(k+1)]\le \delta$.
Substituting these bounds yields
\[
\Delta \gap{1}{t}{a_1(k)}
\ge
(k-1)-k-\delta(k+1)
=
-1-\delta(k+1)
\ge
-1-\delta(2m-1),
\]
where the last inequality uses $k+1\le 2m-1$.
Thus, throughout period~$k-1$, the gap $\gap{1}{t}{a_1(k)}$ can decrease by at most $1+\delta(2m-1)$ per round, and therefore
\[
\gap{1}{\thigh{k-1}}{a_1(k)}
\ge
\gap{1}{\thigh{k-2}}{a_1(k)}
-
\bigl(1+\delta(2m-1)\bigr)\,T_{k-1}.
\]
The odd-$k$ case follows by an analogous argument.
\end{proof}


\subsection{Exponential lower bound}

Here we combine the lemmas from the preceding subsection to obtain an exponential lower bound. In particular, we show that the combined duration of periods $2m-4$ and $2m-3$ is already exponential in $m$.

\begin{lemma}[Lower bound on $T_4$]
\label{lem:period_2_lower_bound}
Under \cref{asmpt:initial_conditions}, it holds that
\[
T_4 \;\ge\; \frac{\gamma_1}{2}.
\]
\end{lemma}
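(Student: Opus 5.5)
Here $\gamma_1$ denotes $\gamma^{(1)}$ from \cref{asmpt:initial_conditions}. The plan is to show that period $4$ cannot end before the gap of the action that must take over in period $5$ has been wiped out. That gap starts at least $\gamma^{(1)}$ after round $1$ and shrinks by at most a bounded amount per round, which forces $T_4 \gtrsim \gamma^{(1)}$.

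\textbf{Step 1 (which action ends period 4).} Period $4$ starts at round $2$ by \cref{asmpt:initial_conditions}. By \cref{property:main}, the period can only be left by entering period $5$, which by \cref{def:transition_period} requires $\vx_1^{(t)}[a_1(5)] \ge 1-\delta$. Since $4$ is even, \cref{prop:matrixA_prop2_even} of \cref{lem:matrixA_props} gives $a_1(4)=a_1(5)$. So the relevant action is $a_1(4)$, and it is not the initial action $1=a_1(3)$, because the entries $3$ and $4$ lie in different rows of the spiral. Hence $\gap{1}{1}{a_1(4)} \ge \gamma^{(1)}$.

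\textbf{Step 2 (the gap must first fall below a threshold).} For $\vx_1^{(t+1)}[a_1(4)] \ge 1-\delta > \delta$ to hold, \cref{lem:regret_gap} requires
\[
\gap{1}{t}{a_1(4)} < \frac{\regcon}{\eta^{(t)}\delta} = \frac{\regcon\, t^\alpha}{\etacon\,\delta}.
\]
So before the period ends, the gap must fall from at least $\gamma^{(1)}$ to below $\regcon t^\alpha/\delta$.

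\textbf{Step 3 (the gap shrinks slowly).} During period $4$, player $2$ puts mass at least $1-\delta$ on $a_2(4)=a_2(3)$, and player $1$ mixes between $a_1(3)$ and $a_1(4)$. I would copy the computation in \cref{lem:lower_bound_period_k} or \cref{lem:regret_gap_increase_prev_action}. The increment of the maximal cumulative utility is at least $\min\{\vu_1^{(t)}[a_1(3)],\vu_1^{(t)}[a_1(4)]\}$, which follows from \cref{lem:order_preservation} and the argmax argument used in \cref{lem:regret_gap_increase_actions_1}. From this, each round decreases $\gap{1}{t}{a_1(4)}$ by at most $1+\delta(2m-1)\le 2$. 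Therefore, if period $4$ ends at round $t=1+T_4$,
\[
\gamma^{(1)} - 2T_4 \le \gap{1}{t}{a_1(4)} < \frac{\regcon (1+T_4)^\alpha}{\delta}.
\]

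\textbf{Step 4 (conclude, main obstacle).} Now suppose $T_4 < \gamma^{(1)}/2$ and aim for a contradiction. The two-sided bound above then requires
\[
\gamma^{(1)} < 2T_4 + \frac{\regcon(1+T_4)^\alpha}{\delta}.
\]
The main difficulty is that $2T_4$ by itself nearly matches $\gamma^{(1)}$, so the naive bound only gives $T_4 \ge \gamma^{(1)}/2 - O(\regcon(\gamma^{(1)})^\alpha/\delta)$.

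To close this, I would tighten the per-round decrease. The decrease $\vu_1[a_1(4)]-\vu_1[a_1(3)]$ equals $1$ only when player $2$ is fully on $a_2(3)$. More importantly, early in the period player $1$ is essentially on $a_1(3)$, and the gap then decreases by exactly $\vu_1[a_1(4)]-\vu_1[a_1(3)] \approx 1$.

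So I would instead use the specific thresholds built into $\gamma^{(1)}$ in \eqref{eq:gamma_init}, namely the terms $2(64\regcon m/\delta)^{1/(1-\alpha)}$ and $(\regcon m/\delta)^{1/(1-\alpha)}\alpha^{\alpha/(1-\alpha)}(1-\alpha)$. These ensure that $\regcon(1+t)^\alpha/\delta$ is a small fraction of $\gamma^{(1)}$ for all $t \le \gamma^{(1)}$, which is where the exponent $1/(1-\alpha)$ comes from. I would also absorb the constant factor by accepting the bound $T_4 \ge \gamma^{(1)}/(2(1+\delta(2m-1))) - o(\gamma^{(1)})$, and then check that, with $\delta(2m-1) < 1/2$ and the slack in $\gamma^{(1)}$ (for instance, the initial gap of $a_1(4)$ may be $\gamma/m$ or larger), this yields the stated $\gamma_1/2$. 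If that slack proves insufficient, the fallback is to note that every later use of this lemma only needs $T_4 = \Omega(\gamma^{(1)})$.
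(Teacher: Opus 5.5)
Your Steps 1--3 are fine, but Step 4 never proves $T_4 \ge \gamma^{(1)}/2$. It ends with a plan (``tighten'', ``absorb the constant'', ``check the slack'') and a fallback that replaces the statement by $T_4=\Omega(\gamma^{(1)})$. The bound you propose to ``accept'', $\gamma^{(1)}/(2(1+\delta(2m-1)))-o(\gamma^{(1)})$, is strictly weaker than the claim. The root cause is Step 2. There, \cref{lem:regret_gap} only tells you that the gap of $a_1(4)$ must fall below $\regcon t^\alpha/\delta$, and this creates the additive error you then cannot remove.

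The missing idea is that the threshold is exactly $0$. Suppose $\vx_1^{(\thigh{4}+1)}[a_1(4)]\ge 1-\delta$. Then every other coordinate is at most $\delta<1-\delta$, so $a_1(4)$ is a maximal coordinate. By \cref{lem:order_preservation} it maximizes cumulative utility through round $\thigh{4}$, i.e.\ $\gap{1}{\thigh{4}}{a_1(4)}=0$. The gap starts at $\gap{1}{1}{a_1(4)}\ge\gamma^{(1)}$ and drops by at most $1+\delta(2m-1)\le 2$ in each of the $T_4$ rounds $2,\dots,\thigh{4}$. Since it must reach $0$, you get $T_4\ge\gamma^{(1)}/2$ with no error term. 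This is precisely \cref{lem:lower_bound_period_k}, whose computation you already copy in Step 3; the paper's proof just applies that lemma with $k=4$ and $\thigh{3}=1$.

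Your own route can also be closed. Keep the per-round bound as $1+\delta(2m-1)<3/2$ rather than rounding it up to $2$. Then the assumption $T_4<\gamma^{(1)}/2$ yields $\gamma^{(1)}/4<(\regcon/\delta)(1+\gamma^{(1)}/2)^\alpha\le c\,(\gamma^{(1)})^\alpha$ with $c=\regcon m/\delta$, valid for $\gamma^{(1)}\ge 2$; otherwise the claim is trivial since $T_4\ge 1$. This says $(\gamma^{(1)})^{1-\alpha}<4c$, contradicting $\gamma^{(1)}\ge 2(64c)^{1/(1-\alpha)}$ from \eqref{eq:gamma_init}.
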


\begin{proof}
By \cref{lem:lower_bound_period_k}, for $k=4$ (even) we have
\(
T_4 \ge \frac{1}{1+\delta(2m+1)}\,\gap{1}{\thigh{3}}{a_1(4)}.
\)
By \cref{asmpt:initial_conditions}, period~$3$ is the single round~$1$, so
\(
\gap{1}{\thigh{3}}{a_1(4)}=\gap{1}{1}{a_1(4)}\ge \gamma^{(1)}.
\)
Substituting yields
\[
T_4 \ge \frac{\gamma_1}{1+\delta(2m+1)}.
\]
Finally, since $\delta=\frac{1}{4m}$, it follows that
\(
1+\delta(2m+1)\le 1+\frac{2m+1}{4m}\le 2,
\)
and therefore \(T_4 \ge \gamma_1/2\), as claimed.
\end{proof}

\recurs*

\begin{proof}
We present the argument for even~$k$; the case of odd~$k$ is analogous.
By \cref{lem:lower_bound_period_k}, the length of period~$k$ satisfies
\begin{equation}
T_k
\;\ge\;
\frac{1}{1 + \delta(2m-1)}\,
\gap{1}{\thigh{k-1}}{a_1(k)}.
\label{eq:rec1}
\end{equation}
To lower bound the gap at time $\thigh{k-1}$, we first invoke \cref{lem:lower_bound_regret_gap}, which yields
\begin{equation}
\gap{1}{\thigh{k-2}}{a_1(k)}
\;\ge\;
\sum_{\ell=4}^{k-2}
\Bigl[
(1-\delta)(\ell-1)
-
\delta(2m-1)
\Bigr]
\,
T_{\ell}.
\label{eq:rec2}
\end{equation}
Next, to relate the gap at times $\thigh{k-2}$ and $\thigh{k-1}$, we apply
\cref{lem:regret_gap_increase_prev_action}, obtaining
\begin{equation}
\gap{1}{\thigh{k-1}}{a_1(k)}
\;\ge\;
\gap{1}{\thigh{k-2}}{a_1(k)}
-
\bigl(1 + \delta(2m-1)\bigr)\, T_{k-1}.
\label{eq:rec3}
\end{equation}
Substituting \eqref{eq:rec2} into \eqref{eq:rec3} and then into \eqref{eq:rec1} gives
\begin{align}
T_k
&\ge
\frac{
\sum_{\ell=4}^{k-2}
\Bigl[
(1-\delta)(\ell-1)
-
\delta(2m-1)
\Bigr]
\,
T_{\ell}
}{
1 + \delta(2m-1)
}
-
T_{k-1}.
\label{eq:rec4}
\end{align}
Rearranging \eqref{eq:rec4} yields
\begin{equation}
T_k + T_{k-1}
\;\ge\;
\frac{
\sum_{\ell=4}^{k-2}
\Bigl[
(1-\delta)(\ell-1)
-
\delta(2m-1)
\Bigr]
\,
T_{\ell}}{1 + \delta(2m-1)}
.
\label{eq:rec5}
\end{equation}

By \cref{lemma:aux_decrease_by_constant}, for $\ell \ge 4$ we have
\[
(1-\delta)(\ell-1) - \delta(2m-1) \;\ge\; \ell-2.
\]
Moreover, since $\delta = \frac{1}{4m}$ (from \cref{asmpt:initial_conditions}), we have
\[
1 + \delta(2m-1) \;\le\; 1 + \frac{2m-1}{4m} \;\le\; 2.
\]
Plugging these bounds into \eqref{eq:rec5} proves that claim
\[
T_k + T_{k-1}
\;\ge\;
\frac{1}{2}
\sum_{\ell=4}^{k-2}
(\ell - 2)
\,
T_{\ell}.
\]

This completes the proof for even~$k$; the odd-$k$ case follows analogously.
\end{proof}

\begin{lemma}[Exponential lower bound]
\label{lem:exponential_lower_bound}
Under \cref{asmpt:initial_conditions}, it holds that
\[
T_{2m-3}+T_{2m-4}
\;\ge\;
\frac{\gamma^{(1)}}{4}\left(\frac{m-3}{e}\right)^{m-3}.
\]
\end{lemma}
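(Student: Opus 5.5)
The plan is to unroll the recurrence of \cref{lem:lower_bound_k_k1} using paired sums. Define $S_j \defeq T_{2j} + T_{2j-1}$ for $j \ge 3$, so that $S_j$ covers periods $2j-1$ and $2j$, and set $T_3$ to be the single round. Since $\ell - 2 \ge 1$ and all $T_\ell \ge 0$, I will discard odd-indexed terms as needed. The recurrence then gives $S_j = T_{2j}+T_{2j-1} \ge \frac{1}{2}\sum_{\ell=4}^{2j-2} (\ell-2) T_\ell$. Grouping $\ell$ into the pairs $(2i-1, 2i)$ for $i = 3,\dots,j-1$ and using $(\ell-2) \ge 2i-3$ on each pair, this becomes
\[
S_j \;\ge\; \frac{1}{2}\Bigl[2T_4 + \sum_{i=3}^{j-1}(2i-3) S_i\Bigr] \;\ge\; \frac{1}{2}\sum_{i=2}^{j-1} (2i-3) S_i ,
\]
where $S_2 \defeq T_4$ absorbs the leftover term $\ell=4$. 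The base case is $S_2 = T_4 \ge \gamma^{(1)}/2$ by \cref{lem:period_2_lower_bound}, and $S_3 \ge \frac12 \cdot 2T_4 \ge T_4$. A cleaner alternative is to apply the recurrence directly with the quantity $U_k \defeq T_k + T_{k-1}$, which is what the statement targets with $k = 2m-3$.

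Next I convert the sum recurrence into a product recurrence. Write $\Sigma_j \defeq \sum_{i\le j-1} c_i S_i$ with weights $c_i \approx i$. Then $S_j \ge \frac12 \Sigma_j$ gives $\Sigma_{j+1} = \Sigma_j + c_j S_j \ge \Sigma_j(1 + c_j/2)$, so $\Sigma_j \ge \Sigma_{j_0}\prod (1 + c_i/2)$. With $c_i$ linear in $i$, this product grows like a factorial, roughly $\prod_{i} i/1 \approx (j)!$ up to constants. Using $n! \ge (n/e)^n$, a product of about $m-3$ factors each at least $1$ and growing linearly yields $\bigl((m-3)/e\bigr)^{m-3}$, and the prefactor $\gamma^{(1)}/4$ comes from $T_4 \ge \gamma^{(1)}/2$ together with one more factor of $\tfrac12$ from $S_j \ge \frac12\Sigma_j$.

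I expect the main obstacle to be the index bookkeeping: matching the number of factors to exactly $m-3$ and the base to $(m-3)/e$. Working with pairs halves the number of steps, which would naively give only about $m/2$ factors. To get $m-3$ factors I would instead try to apply \cref{lem:lower_bound_k_k1} at every $k$, both even and odd, to $U_k = T_k+T_{k-1}$. The estimate $\sum_{\ell \le k-2}(\ell-2)T_\ell \ge \frac12\sum_{\ell\le k-2}(\ell-3)\,U_\ell$ follows because each $T_\ell$ appears in $U_\ell$ and $U_{\ell+1}$. This gives $U_k \ge \frac14 \sum_{\ell\le k-2} (\ell-3) U_\ell$, and hence the product $\prod_\ell (1+(\ell-3)/4)$ over $k$ steps. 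If the constants do not line up with $(m-3)/e$ exactly, I would check whether the $\tfrac12$ losses can be absorbed using the slack in \cref{lemma:aux_decrease_by_constant}, since $(1-\delta)(\ell-1)-\delta(2m-1)$ is nearly $\ell-1$. Failing that, I would settle for a bound of the form $(c\,m)^{\Omega(m)}$, which still gives $2^{\Omega(m\log m)}$.
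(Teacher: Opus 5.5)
Your overall strategy matches the paper's: pair consecutive periods, turn the cumulative recurrence of \cref{lem:lower_bound_k_k1} into a multiplicative one via partial sums, and finish with $T_4 \ge \gamma^{(1)}/2$ and Stirling. The inequalities you derive for the paired sums are valid. The gap is in the endgame. You flag it as the main obstacle but leave it unresolved, and your diagnosis of it is off.

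Pairing does not halve anything. There are about $2m$ periods, hence about $m$ pairs, and the coefficient $(\ell-2)$ inside the $j$th pair is about $2j$, which exactly cancels the $\tfrac12$ in the recurrence. The real issue is alignment. The target $T_{2m-3}+T_{2m-4}$ pairs an even index with the following odd one, so it is never one of your $S_j = T_{2j}+T_{2j-1}$. The paper pairs the other way, $S_k \defeq T_{2k-1}+T_{2k-2}$. Then the target is $S_{m-1}$, and the smallest coefficient in pair $k$ is $(2k-2)-2 = 2(k-2)$. Applying the recurrence at index $2k-1$ gives $S_k \ge \sum_{i=3}^{k-1}(i-2)S_i \eqqcolon P_{k-1}$, hence $P_k = P_{k-1}+(k-2)S_k \ge (k-1)P_{k-1}$ and $P_{m-2} \ge \tfrac{(m-3)!}{2}S_3$. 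This yields $T_{2m-3}+T_{2m-4} = S_{m-1} \ge P_{m-2} \ge \tfrac{(m-3)!}{2}T_4 \ge \tfrac{\gamma^{(1)}}{4}(m-3)!$. Only at this point does the crude bound $n! \ge (n/e)^n$ suffice.

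With your own pairing, you get $\Sigma_{j+1} \ge (j-\tfrac12)\Sigma_j$. Because of the misalignment you also need one extra application of the recurrence at $k=2m-3$, which gives $T_{2m-3}+T_{2m-4} \ge \tfrac12\Sigma_{m-2} \ge \tfrac{\gamma^{(1)}}{4}\prod_{i=3}^{m-3}(i-\tfrac12)$. That product equals $\Gamma(m-\tfrac52)/\Gamma(\tfrac52)$, which is $(m-3)!$ divided by a factor of order $\sqrt{m}$. So $n! \ge (n/e)^n$, as you invoke it, falls short: rounding $i-\tfrac12$ down to $i-1$ gives only $((m-4)/e)^{m-4}$. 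You would need the $\sqrt{2\pi n}$ term of Stirling's formula, with which the stated bound does survive with a constant factor to spare.

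Your fallback via $U_k$ does not work as claimed. The bound on $U_k$ involves only $U_\ell$ with $\ell \le k-2$. Setting $V_k \defeq \sum_{\ell \le k-2}(\ell-3)U_\ell$ therefore gives $V_{k+1} = V_k + (k-4)U_{k-1} \ge V_k + \tfrac{k-4}{4}V_{k-1}$, not $V_{k+1} \ge (1+\tfrac{k-4}{4})V_k$. This gains a factor of order $k/4$ only every two steps, roughly $(m/(2e))^m$ overall, which misses the stated bound by a factor $2^{\Theta(m)}$. Settling for $(cm)^{\Omega(m)}$ would still suffice for \cref{theorem:negative_formal}, but it does not prove this lemma.
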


\begin{proof}
Define, for $3\le k\le m$,
\[
S_k \;\defeq\; T_{2k-1}+T_{2k-2},
\qquad
P_k \;\defeq\; \sum_{\ell'=3}^{k}(\ell'-2)\,S_{\ell'}.
\]
Applying \cref{lem:lower_bound_k_k1} with index $2k-1$ gives
\[
S_k
=
T_{2k-1}+T_{2k-2}
\;\ge\;
\frac{1}{2}
\sum_{\ell=4}^{2k-3}
(\ell-2)
\,
T_{\ell}.
\]
Grouping the terms $(T_{2\ell-2},T_{2\ell-1})$ for $\ell=4,\dots,k-1$ and using that all coefficients are nonnegative yields a lower bound
\[
\frac{1}{2}
\sum_{\ell=4}^{2k-3}
(\ell-2)
\,
T_{\ell}
\;\ge\;
\frac{1}{2}
\sum_{\ell=3}^{k-1}
((2\ell-2)-2)
\left(
T_{2\ell-1}
+
T_{2\ell-2}
\right)
=
\sum_{\ell'=3}^{k-1}(\ell'-2)\,S_{\ell'}
=
P_{k-1}.
\]
Therefore, for all $k\ge 4$,
\begin{equation}
\label{eq:Sk_ge_Pkminus1}
S_k \;\ge\; P_{k-1}.
\end{equation}
It follows that for every $k\ge 4$,
\[
P_k
=
P_{k-1}+(k-2)S_k
\;\ge\;
P_{k-1}+(k-2)P_{k-1}
=
(k-1)P_{k-1},
\]
where we used \eqref{eq:Sk_ge_Pkminus1}.
Iterating from $k=4$ up to $k=m-2$ yields
\[
P_{m-1}
\;\ge\;
\left(\prod_{j=4}^{m-2}(j-1)\right)P_3
=
\frac{(m-3)!}{2} \,P_3
=
\frac{(m-3)!}{2} \,S_3,
\]
since $P_3=(3-2)S_3=S_3$.
Finally, \eqref{eq:Sk_ge_Pkminus1} with $k=m-1$ gives $S_{m-1}\ge P_{m-2}$, and $S_{m-1}=T_{2m-3}+T_{2m-4}$ by definition, so
\[
T_{2m-3}+T_{2m-4}
=
S_{m-1}
\;\ge\;
P_{m-2}
\;\ge\;
\frac{(m-3)!}{2}
\,
S_3.
\]
Moreover, $S_3=T_5+T_4\ge T_4$, and \cref{lem:period_2_lower_bound} gives $T_4\ge \gamma^{(1)}/2$, proving the factorial bound.

Applying Stirling's lower bound $(m-3)!\ge \bigl(\frac{m-3}{e}\bigr)^{m-3}$ yields
\[
T_{2m-3}+T_{2m-4}
\;\ge\;
\frac{\gamma^{(1)}}{4}
\left(
\frac{m-3}{e}
\right)^{m-3}
.
\]
\end{proof}

\subsection{Period consistency}
\label{sec:valid_period}

Here, we justify the definition of a \emph{period} (\cref{def:transition_period}) and establish its consistency.
To this end, we prove two lemmas.
Throughout, we focus on the case where $k$ is even; the case of odd~$k$ follows by an entirely analogous argument.

We show that the \ftrl dynamics are consistent with the definition of a period.
In particular, if round~$t$ belongs to period~$k$, then round~$t+1$ should belong only to period~$k$, unless
\(
\vx_{1}^{(t+1)}[a_1(k)] \ge 1-\delta,
\)
in which case period~$k$ terminates at round~$t$.
Hence, we have to rule out \emph{period mixing}: since \ftrl induces mixed strategies, it is not a priori clear that the update cannot move directly from period~$k$ to some period other than~$k+1$.
To exclude this, we proceed in two steps.
First, \cref{thm:uniform_small_prob_fixed_action} shows that any action whose gap grows linearly in~$t$ receives at most $\delta/m$ probability mass under the next \ftrl update.
We then leverage this fact in \cref{property:main} to preclude transitions to periods other than $k+1$.

\begin{lemma}[Uniformly small probability under linear gap growth]
\label{thm:uniform_small_prob_fixed_action}
Let $a_{i} \in \cA_i$ and suppose that action a increases its gap by at least $1$ in each round, i.e., $\gap{i}{t+1}{a_{i}} \ge \gap{i}{t}{a_{i}} + 1$ for $t \geq 1$. Then, it holds
\begin{equation}
\vx_{i}^{(t+1)}[a_{i}]
\le
\frac{\regcon}{\eta^{(t)}\,\gap{i}{t}{a_{i}}}
\le
\frac{\delta}{m}, 
\quad
\text{
for every $t\ge 1$
}.
\end{equation}
\end{lemma}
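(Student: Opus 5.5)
The first inequality is exactly \Cref{lem:regret_gap}, applied with $\gamma = \gap{i}{t}{a_i}$ (which is positive for $t \ge 1$ under the hypothesis, as shown below). So the plan reduces to proving the second inequality, $\regcon/(\eta^{(t)}\gap{i}{t}{a_i}) \le \delta/m$, uniformly in $t \ge 1$. Here $\eta^{(t)} = \etacon/t^\alpha$ with $\etacon = 1$.

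\textbf{Step 1: a linear lower bound on the gap.} Telescoping the hypothesis $\gap{i}{t+1}{a_i} \ge \gap{i}{t}{a_i} + 1$ from round $1$ gives
\[
\gap{i}{t}{a_i} \ge \gap{i}{1}{a_i} + (t-1).
\]
The intended use is for actions other than the initially favoured one. For those, \Cref{asmpt:initial_conditions} supplies $\gap{i}{1}{a_i} \ge \gamma^{(1)}$, so $\gap{i}{t}{a_i} \ge \gamma^{(1)} + t - 1$. I will state this base value explicitly as an implicit hypothesis of the lemma, since without it the claim fails at $t=1$.

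\textbf{Step 2: reduction to a scalar inequality.} It now suffices to show
\[
\frac{\regcon\, t^\alpha}{\gamma^{(1)} + t - 1} \le \frac{\delta}{m} \qquad \text{for all } t \ge 1.
\]
Let $g(t) \defeq t^\alpha/(\gamma^{(1)} + t - 1)$ and bound $\sup_{t \ge 1} g(t)$. For $\alpha = 0$ the supremum is at $t = 1$ and equals $1/\gamma^{(1)}$, which suffices since $\gamma^{(1)} \ge 4m^3\regcon \ge \regcon m/\delta = 4m^2\regcon$.

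\textbf{Step 3: maximizing $g$ for $\alpha \in (0,1)$ (the main difficulty).} Two routes, matching the terms in \eqref{eq:gamma_init}:
\begin{enumerate}[label=(\alph*)]
\item \emph{Split at $s \defeq \gamma^{(1)} - 1$.} For $t \le s$, $g(t) \le t^\alpha/s \le s^{\alpha-1}$. For $t \ge s$, $g(t) \le t^{\alpha-1} \le s^{\alpha-1}$. So it is enough that $(\gamma^{(1)}-1)^{1-\alpha} \ge \regcon m/\delta$. This follows from $\gamma^{(1)} \ge 2(64\regcon m/\delta)^{1/(1-\alpha)}$ together with $\gamma^{(1)} \ge 2$, giving $\gamma^{(1)} - 1 \ge \gamma^{(1)}/2$.
\item \emph{Calculus.} Setting the derivative to zero gives $t^\star = \alpha(\gamma^{(1)}-1)/(1-\alpha)$ and
\[
g(t^\star) = \alpha^\alpha (1-\alpha)^{1-\alpha}(\gamma^{(1)}-1)^{\alpha-1}.
\]
The third term of \eqref{eq:gamma_init} is designed for exactly this expression.
\end{enumerate}
I would present route (a), since it needs only monotonicity. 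The fiddly part is the constant bookkeeping, namely the shift by $1$ and the factor $2$ in $\gamma^{(1)}$; the large slack (the factor $64$) absorbs these.

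Combining Steps 1–3 with \Cref{lem:regret_gap} yields $\vx_i^{(t+1)}[a_i] \le \delta/m$ for every $t \ge 1$.
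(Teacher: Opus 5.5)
Your proof is correct and follows the paper's argument. You use \Cref{lem:regret_gap} for the first inequality, then telescope to $\gap{i}{t}{a_i} \ge \gamma^{(1)} + t - 1$. This step relies on the initial gap from \Cref{asmpt:initial_conditions}, which the paper's proof also invokes without stating it as a hypothesis. You then finish with a uniform scalar bound.

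The only difference is in that final scalar step. The paper maximizes $c\,t^\alpha - t$ by calculus and appeals to the third term of \eqref{eq:gamma_init}. Your route (a) instead splits at $\gamma^{(1)}-1$ and uses the second term, whose factor $2$ correctly absorbs the $-1$ shift. The paper's ``$t-1\le t$'' reduction gets that shift in the wrong direction (harmlessly, given the slack in $\gamma^{(1)}$).
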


\begin{proof}
The proof relies only on \cref{lem:regret_gap}, which implies that for every $t\ge 1$,
\[
\vx_{i}^{(t+1)}[a_{i}] \;\le\; \frac{\regcon}{\eta^{(t)}\,\gap{i}{t}{a_{i}}}.
\]
By \eqref{eq:gamma_init} in \cref{asmpt:initial_conditions}, the initial gap $\gamma^{(1)}$ satisfies
$\gap{i}{1}{a_{i}} = \gamma^{(1)} \ge \frac{\regcon m}{\delta}$, which ensures the desired bound on $\vecx{i}{2}[a_i]$ at round $2$.
To control all subsequent rounds under $\eta^{(t)}=t^{-\alpha}$, it is enough to guarantee that
\begin{equation}
\label{eq:gap_lb_eq1}
\gap{i}{t}{a_i} \;\ge\; \frac{\regcon m}{\delta}\,t^\alpha
\qquad \forall\, t\ge 1,
\end{equation}
since this immediately implies $\vx^{(t+1)}[a_i] \le \delta/m$ for all $t\ge 1$.
To establish \eqref{eq:gap_lb_eq1}, we use the assumed linear growth of the gap:
\(
\gap{i}{t}{a_i}
\ge
\gap{i}{1}{a_i} + (t-1)
=
\gamma^{(1)} + (t-1)
\text{for all } t\ge 1.
\)
Thus, a sufficient condition for \eqref{eq:gap_lb_eq1} is
\(
\gamma^{(1)} + (t-1) \ge \frac{\regcon m}{\delta}\,t^\alpha
\;
\)
for all $t\ge 1$.
Although the linear term on the left eventually dominates the sublinear term on the right, we need the bound to hold uniformly from the start; this is ensured by $\gamma^{(1)}$ from \eqref{eq:gamma_init} in \cref{asmpt:initial_conditions}. Since $t-1\le t$ for all $t\ge 1$, it suffices to require
\begin{equation}
\label{eq:uniform_sufficient}
\gamma^{(1)} \;\ge\; \frac{\regcon m}{\delta}\,t^\alpha - t
\qquad \forall\, t\ge 1.
\end{equation}
The tightest sufficient condition in \eqref{eq:uniform_sufficient} is obtained by maximizing the right-hand side over $t\ge 1$.
Let $c \defeq \regcon m/\delta$ and define $f(t)\defeq c t^\alpha - t$ for $t\ge 1$.
We view $t$ as a continuous variable and upper bound $\sup_{t\ge 1} f(t)$ by maximizing $f$ over $t\in[1,\infty)$.
Since $\alpha\in [0,1)$, $f$ is concave on $(0,\infty)$ and hence attains its maximum at a unique critical point.
Differentiating yields $f'(t)=c\alpha t^{\alpha-1}-1$, so the maximizer satisfies
\(
t^\star = (c\alpha)^{\frac{1}{1-\alpha}}.
\)
Evaluating $f$ at $t^\star$, we obtain
\[
f(t^\star)
=
c(t^\star)^\alpha - t^\star
=
c (c\alpha)^{\frac{\alpha}{1-\alpha}}
-
(c\alpha)^{\frac{1}{1-\alpha}}
=
\left(
c^{\frac{1}{1-\alpha}}
\right)
\alpha^{\frac{\alpha}{1-\alpha}}
(1-\alpha)
\]
Define $\gamma^\star \defeq f(t^\star)$. Then, by \eqref{eq:gamma_init} in \cref{asmpt:initial_conditions}, we have $\gamma^{(1)} \ge \gamma^\star$, and therefore \eqref{eq:uniform_sufficient} holds. This implies \eqref{eq:gap_lb_eq1}, and consequently
\[
\vx_{i}^{(t+1)}[a_i] \le \frac{\delta}{m}
\qquad\text{for all } t\ge 1.
\]


\end{proof}



\MainProp*

\begin{proof}

Assume that $k$ is even. We prove by induction on the period index that the \ftrl dynamics satisfy the requirements of \cref{def:transition_period}.

By \cref{def:transition_period}, period~$3$ consists of the single round~$1$, \textit{i.e.}, $\tlow{3}=\thigh{3}=1$. This case is already verified and is included purely by convention. At round~$2$, \cref{asmpt:initial_conditions} shows us 
\begin{equation}
\vx_{1}^{(2)}[a_1(3)]
+ 
\vx_{1}^{(2)}[a_1(4)] 
\;\ge\;
1-\delta,
\qquad
\vx_{2}^{(2)}[a_2(4)] 
\;\ge\;
1-\delta.
\label{eq:period3_base}
\end{equation}
Hence, period~$4$ begins at round $\tlow{4}=2$, and the defining conditions are satisfied at its start.

Assume now that periods $3,4,\dots,k-1$ satisfy \cref{def:transition_period}, and consider the initial round $\tlow{k}$ of period~$k$.
By \cref{def:transition_period}, we have
\begin{align}
\vx_{2}^{(\tlow{k})}[a_2(k)] \;\ge\; 1-\delta,
\label{eq:periodk_start2}
\\
\vx_{1}^{(\tlow{k})}[a_1(k-1)] + \vx_{1}^{(\tlow{k})}[a_1(k)] \;\ge\; 1-\delta.
\label{eq:periodk_start3}
\end{align}
It therefore suffices to show that the period-$k$ conditions continue to hold at every round
$t\in\{\tlow{k},\tlow{k}+1,\dots,\thigh{k}\}$.
Our main tools are \cref{thm:uniform_small_prob_fixed_action} and the gap-to-probability bound \cref{lem:regret_gap}.

\smallskip

\noindent\textbf{Player~1.}
By \cref{lem:regret_gap_increase_actions_1} and \cref{lemma:aux_decrease_by_constant}, every action $a_1 \in \cA_1(k+2)$ increases its gap by at least $(k-2)$ at each round $t \in [1,\tlow{k}]$.
Consequently, \cref{thm:uniform_small_prob_fixed_action} implies that every $a_1 \in \cA_1(k+2)$ receives negligible probability mass at the next update:
\[
\vx_1^{(\tlow{k}+1)}[a_1] \;\le\; \frac{\delta}{m}.
\]

Next, consider actions $a_1 \in [m]\setminus \cA_1(k-2)$.
Although these actions may have been active in earlier periods, they have remained \emph{inactive} (\textit{i.e.}, assigned probability at most $\delta/m$) throughout the last two periods, $(k-2)$ and $(k-1)$.
By \cref{lem:regret_gap_increase_actions_1} and \cref{lemma:aux_decrease_by_constant}, each such $a_1$ increases its gap by at least $(k-2)$ per round over the interval $[\tlow{k-2},\thigh{k-1}]$.
Moreover, \cref{lem:lower_bound_period_k} ensures that the length of this interval is linear in the elapsed horizon:
\[
T_{k-1} + T_{k-2}
\;\ge\;
\frac{1}{4} \sum_{\ell=3}^{k-1} T_{\ell}
=
\frac{1}{4}\,\thigh{k-1}.
\]
Therefore, the cumulative gap of any $a_1 \in [m]\setminus \cA_1(k-2)$ is still linear in time, and another application of \cref{thm:uniform_small_prob_fixed_action} yields the uniform bound
\[
\vx_1^{(\tlow{k}+1)}[a_1] \;\le\; \frac{\delta}{m}.
\]


It remains to control the evolution of the competing actions $a_1(k-1)$ and $a_1(k)$.
By the definition of the gap, we have
\begin{align}
\Delta \gap{1}{\tlow{k}}{a_1(k)}
&\defeq
\vu_1^{(\tlow{k})}[a_1(k-1)] - \vu_1^{(\tlow{k})}[a_1(k)]
\nonumber\\
&=
\left(
\vx_2^{(\tlow{k})}[a_2(k-1)] (k-1)
+
\vx_2^{(\tlow{k})}[a_2(k-2)] (k-2)
\right)
\notag
\\
&\quad
-
\left(
\vx_2^{(\tlow{k})}[a_2(k)] k
+
\vx_2^{(\tlow{k})}[a_2(k+1)] (k+1)
\right)
.
\label{eq:gap_update_raw}
\end{align}
Since $k$ is even, \cref{prop:matrixA_prop3_odd} in \cref{lem:matrixA_props} implies $a_2(k-1) = a_2(k)$, and therefore \eqref{eq:gap_update_raw} simplifies to
\begin{equation}
\Delta \gap{1}{\tlow{k}}{a_1(k)}
=
-\vx_2^{(\tlow{k})}[a_2(k)]
-\vx_2^{(\tlow{k})}[a_2(k+1)] (k+1)
+\vx_2^{(\tlow{k})}[a_2(k-2)] (k-2).
\label{eq:gap_update_simplified}
\end{equation}
By \eqref{eq:periodk_start2}, $\vx_2^{(\tlow{k})}[a_2(k)] \ge 1-\delta$, so the total probability mass on columns other than $a_2(k)$ is at most $\delta$.
Using \eqref{eq:gap_update_simplified} and dropping the negative term $-\vx_2^{(\tlow{k})}[a_2(k+1)](k+1)$, we obtain
\begin{align}
\Delta \gap{1}{\tlow{k}}{a_1(k)}
&\le
-\vx_2^{(\tlow{k})}[a_2(k)] + \delta (k-2)
\nonumber\\
&\le
-(1-\delta) + \delta(k-2)
=
-1 + \delta(k-1).
\nonumber
\end{align}
Under the bounds $k \le 2m+1$ and $\delta = \frac{1}{4m}$, we have $\delta(k-1)\le \frac12$, and therefore
\begin{equation}
\Delta \gap{1}{\tlow{k}}{a_1(k)} \;\le\; -\frac12.
\label{eq:val_eq4}
\end{equation}
In particular, as long as $\vx_2^{(\tlow{k})}[a_2(k)] \ge 1-\delta$, the gap of $a_1(k)$ decreases by at least $1/2$ per round at time $\tlow{k}$.

\smallskip

\noindent\textbf{Player~2.}
The same reasoning applies symmetrically to Player~2.
In particular, every action in $\cA_2 \setminus \{a_2(k),a_2(k+1)\}$ continues to have probability at most $\delta/m$ by another application of \cref{thm:uniform_small_prob_fixed_action}.
The argument is identical to the one for Player~1 after splitting the actions into (i) the \emph{future} actions $\cA_2(k+2)$ and (ii) the actions \emph{active} in \emph{earlier} periods, namely $[m] \setminus  \cA_2(k-2)$.

The remaining nontrivial case concerns the competing actions $a_2(k)$ and $a_2(k+1)$.
Since $a_2(k)$ has the largest probability mass, \cref{lem:order_preservation} implies that it also attains the maximal cumulative utility among actions in $\cA_2$.
Thus, the one-step update of $\Gap^{(\tlow{k})}[a_2(k+1)]$ is
\begin{align*}
\Delta \Gap^{(\tlow{k})}[a_2(k+1)]
&\;\defeq\;
\vu_2^{(\tlow{k})}[a_2(k)]
-
\vu_2^{(\tlow{k})}[a_2(k+1)]
\\
&\;=\;
\vx_1^{(\tlow{k})}[a_1(k)] k
+
\vx_1^{(\tlow{k})}[a_1(k-1)] (k-1)
\\
&\quad
-
\vx_1^{(\tlow{k})}[a_1(k+1)] (k+1)
-
\vx_1^{(\tlow{k})}[a_1(k+2)] (k+2).
\end{align*}
Since $k$ is even, \cref{prop:matrixA_prop2_even} in \cref{lem:matrixA_props} implies $a_1(k)=a_1(k+1)$, and therefore
\begin{equation}
\Delta \Gap^{(\tlow{k})}[a_2(k+1)]
=
- 
\vx_1^{(\tlow{k})}[a_1(k)]
+
\vx_1^{(\tlow{k})}[a_1(k-1)] (k-1)
-
\vx_1^{(\tlow{k})}[a_1(k+2)] (k+2).
\label{eq:val_player2eq}
\end{equation}

The sign of $\Delta \Gap^{(\tlow{k})}[a_2(k+1)]$ is not immediate, as it depends on the relative magnitudes of $\vx_1^{(\tlow{k})}[a_1(k)]$ and $\vx_1^{(\tlow{k})}[a_1(k-1)]$, while $\vx_1^{(\tlow{k})}[a_1(k+2)]$ is uniformly small.
Nevertheless, we can extract a meaningful lower bound.
As long as $\vx_1^{(\tlow{k})}[a_1(k-1)] \ge \vx_1^{(\tlow{k})}[a_1(k)]$ (which holds at round $\tlow{k}$ by the definition of the period), \eqref{eq:periodk_start3} implies
\[
\vx_1^{(\tlow{k})}[a_1(k-1)]
\;\ge\;
\frac{1}{2}(1-\delta).
\]
Using this in \eqref{eq:val_player2eq} and the bound $\vx_1^{(\tlow{k})}[a_1(k+2)] \le \delta$ yields
\begin{equation*}
\Delta \Gap^{(\tlow{k})}[a_2(k+1)]
\;\ge\;
\frac{1}{2}(1-\delta) \, (k-2)
-
\delta (k+2).
\end{equation*}
Under our standing choice of parameters and \cref{lemma:aux_decrease_by_constant}, the right-hand side is at least $(k-3)/2$.
Hence, since $\vx_1^{(\tlow{k})}[a_1(k-1)] \ge \vx_1^{(\tlow{k})}[a_1(k)]$, the gap increases by at least the constant $(k-3)/2$, \textit{i.e.},
\begin{equation}
\Delta \Gap^{(\tlow{k})}[a_2(k+1)]
\;\ge\;
\frac{(k-3)}{2}.
\label{eq:val_eq5}
\end{equation}



\paragraph{Probability mass transfer from $a_1(k-1)$ to $a_1(k)$.}
As in \eqref{eq:val_eq5}, we can show that as long as
$\vx_1^{(t)}[a_1(k-1)] \ge \vx_1^{(t)}[a_1(k)]$, the gap of
$a_2(k+1)$ increases by at least $\frac{k-3}{2}$. This inequality, however,
cannot hold for the entire period. Indeed, by \eqref{eq:val_eq4}, the gap of
$a_1(k)$ decreases by at least $\frac12$ in every round in which $\vx_2^{(t)}[a_2(k)] \ge 1-\delta$,
and therefore the probability mass on $a_1(k)$ increases over time. As a result,
there exists a first round $\hat{t}_k$ at which the cumulative utility of $a_1(k)$ overtakes
that of $a_1(k-1)$, \textit{i.e.},
\[
\sum_{t=1}^{\hat{t}_k} \vu_1^{(t)}[a_1(k)]
\;\ge\;
\sum_{t=1}^{\hat{t}_k} \vu_1^{(t)}[a_1(k-1)].
\]

After $\hat{t}_k$,  an identical argument to \eqref{eq:val_eq4} implies that the gap $\gap{1}{t}{a_1(k-1)}$ increases by at least $1/2$ per round. Our goal is to show that when $\vx_1^{(\tlow{k+1})}[a_1(k)] \ge 1-\delta$, we still have $\vx_2^{(\tlow{k+1})}[a_2(k)] \ge 1-\delta$.
Starting from round $\hat{t}_k$, consider $\tau$ additional rounds until round $\hat{t}_k + \tau$.
To track the evolution, we outline the timeline of the gaps for $a_1(k-1), a_1(k)$; see \cref{fig:gaps}.

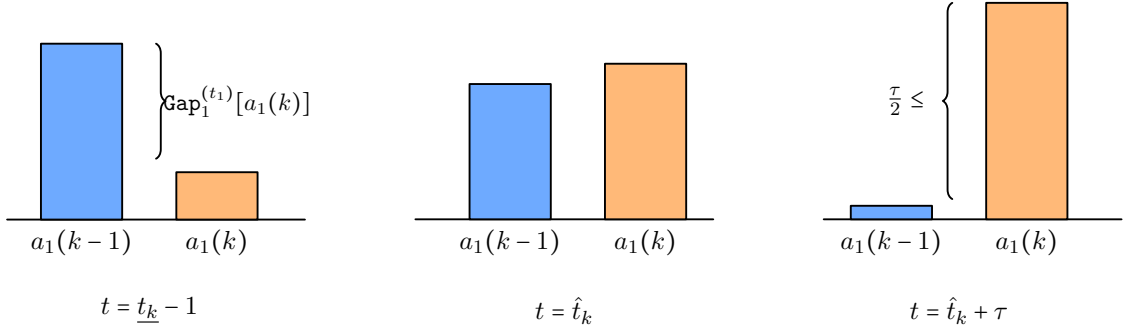
\begin{figure*}[t]
  \centering
  \begin{subfigure}{0.22\linewidth}\centering
    \scalebox{0.9}{
\definecolor{LeftBar}{RGB}{110,170,255}   
\definecolor{RightBar}{RGB}{255,185,120}  
\tikzset{
  leftbar/.style={draw=black, fill=LeftBar},
  rightbar/.style={draw=black, fill=RightBar},
}

\begin{tikzpicture}[x=1cm,y=1cm, line cap=round, line join=round, thick]
  \draw (-0.2,0) -- (4.2,0);

  \draw[leftbar]  (0.3,0) rectangle (1.5,2.6);
  \draw[rightbar] (2.3,0) rectangle (3.5,0.7);

  \draw[decorate,decoration={brace,amplitude=4pt,mirror}] (2,0.9) -- (2,2.6)
    node[midway,xshift=1.2cm] {\small{$\texttt{Gap}_{1}^{(t_1)}[a_1(k)]$}};

  \node[below] at (0.9,0) {$a_{1}(k-1)$};
  \node[below] at (2.9,0) {$a_{1}(k)$};

  \node[below] at (1.9,-1.0) {$t = \underline{t_k}-1$};
\end{tikzpicture}}
  \end{subfigure}
  \hspace{4em}
  \begin{subfigure}{0.22\linewidth}\centering
    \scalebox{0.9}{
\definecolor{LeftBar}{RGB}{110,170,255}   
\definecolor{RightBar}{RGB}{255,185,120}  
\tikzset{
  leftbar/.style={draw=black, fill=LeftBar},
  rightbar/.style={draw=black, fill=RightBar},
} 

\begin{tikzpicture}[x=1cm,y=1cm, line cap=round, line join=round, thick]
  \draw (-0.2,0) -- (4.2,0);

  \draw[leftbar]  (0.6,0) rectangle (1.8,2);
  \draw[rightbar] (2.6,0) rectangle (3.8,2.3);

  \node[below] at (1.2,0) {$a_{1}(k-1)$};
  \node[below] at (3.2,0) {$a_{1}(k)$};

  \node[below] at (2.0,-1.0) {$t = \hat{t}_k$};
\end{tikzpicture}}
  \end{subfigure}
  \hspace{4em}
  \begin{subfigure}{0.22\linewidth}\centering
    \scalebox{0.9}{
\definecolor{LeftBar}{RGB}{110,170,255}   
\definecolor{RightBar}{RGB}{255,185,120}  
\tikzset{
  leftbar/.style={draw=black, fill=LeftBar},
  rightbar/.style={draw=black, fill=RightBar},
}

\begin{tikzpicture}[x=1cm,y=1cm, line cap=round, line join=round, thick]
  \draw (-0.2,0) -- (4.2,0);

  \draw[leftbar]  (0.2,0) rectangle (1.4,0.2);
  \draw[rightbar] (2.2,0) rectangle (3.4,3.2);

  \node[below] at (0.8,0) {$a_{1}(k-1)$};
  \node[below] at (2.8,0) {$a_{1}(k)$};

  \draw[decorate,decoration={brace,amplitude=4pt}] (1.7,0.3) -- (1.7,3.2)
    node[midway,xshift=-0.5cm] {$ \frac{\tau}{2} \leq \ \ \ $};

  \node[below] at (1.8,-1.0) {$t = \hat{t}_k + \tau$};
\end{tikzpicture}}
  \end{subfigure}
  \caption{Probability mass shift between the two competing actions. Transitioning from $a_1(k-1)$ to $a_1(k)$ takes a long time when the initial gap is large. Eventually $a_2(k)$ will surpass $a_1(k)$, and this will trigger a shift in the other player's cumulative utility. The consistency of the argument requires bounding the time it takes for $a_2(k)$ to be played with high probability.}
  \label{fig:gaps}
\end{figure*}
\begin{enumerate}
\item \textbf{Time $\tlow{k}-1$.} The cumulative utility of $a_1(k)$
is smaller than that of $a_1(k-1)$.

\item \textbf{Time $\hat{t}_k$.} The cumulative utility of $a_1(k)$
first exceeds that of $a_1(k-1)$, so $a_1(k)$ becomes the action of maximal
cumulative utility.

\item \textbf{Time $\hat{t}_k + \tau$.} The gap of $a_1(k-1)$ has increased to at least $\frac{\tau}{2}$.
\end{enumerate}
Then,
\[
\gap{1}{\hat{t}_k + \tau}{a_1(k-1)}
\;\ge\;
\gap{1}{\hat{t}_k}{a_1(k-1)} + \frac{\tau}{2}
\;\ge\;
\frac{\tau}{2}.
\]
We next bound the smallest $\tau$ such that, after $\tau$ additional rounds, $a_1(k-1)$ becomes inactive, \textit{i.e.},
$\vx_1^{(t_2+\tau+1)}[a_1(k-1)] \le \frac{\delta}{m}$.
By \cref{lem:regret_gap}, it suffices to require
\begin{equation}
\vx_1^{(\hat{t}_k+\tau+1)}[a_1(k-1)]
\;\le\;
\frac{\regcon}{\eta^{(\hat{t}_k+\tau)}\,\gap{1}{\hat{t}_k}{a_1(k-1)}}
\;\le\;
\frac{\regcon}{\eta^{(\hat{t}_k+\tau)}\,(\tau/2)}
\;\le\;
\frac{\delta}{m}.
\label{eq:prob_small_sufficient}
\end{equation}
Setting $c \defeq \frac{\regcon m}{\delta}$ and using $\eta^{(t)}=t^{-\alpha}$, \eqref{eq:prob_small_sufficient} is equivalent to
\begin{equation}
\frac{(\hat{t}_k+\tau)^\alpha}{\tau} \;\le\; \frac{1}{2c}.
\label{eq:period_val_eq3}
\end{equation}
A sufficient condition is obtained by bounding
\(
(\hat{t}_k+\tau)^\alpha
\le
2^\alpha
\left(
\hat{t}_k^\alpha+\tau^\alpha
\right)
\le
2
\left(
\hat{t}_k^\alpha+\tau^\alpha
\right),
\)
for $\alpha \in [0,1)$.
Substituting into \eqref{eq:period_val_eq3} concludes the sufficient requirement
\(
(\hat{t}_k^\alpha / \tau)
+
\tau^{\alpha-1}
\le
(1/4c).
\)
To simplify the algebra, we impose the stronger pair of inequalities
$\hat{t}_k^\alpha / \tau \le 1/(8c)$ and $\tau^{\alpha-1} \le 1/(8c)$.
These are equivalent to
\[
\tau \ge 8c \, \hat{t}_k^\alpha
\qquad
\text{and}
\qquad
\tau \ge (8c)^{\frac{1}{1-\alpha}},
\]
since $\alpha-1<0$ implies $\tau^{1-\alpha}\ge 8c$.
Therefore,
\[
\tau 
\;=\;
\max\!\left\{
(8c)\,\hat{t}_k^\alpha,\; (8c)^{\frac{1}{1-\alpha}}
\right\}
.
\]
Moreover, \cref{lem:period_2_lower_bound} together with \eqref{eq:gamma_init} implies
$\hat{t}_k \ge \tlow{k} \ge T_4 \ge \gamma^{(1)} / 2 \ge (64c)^{\frac{1}{1-\alpha}}$.
Hence, the maximum is attained by the first term.
In particular, this shows that it takes at most $\tau$ rounds for $\vx_1^{(t)}[a_1(k-1)]$ to become inactive, where
\begin{equation}
\tau 
\defeq
(8c)\,\hat{t}_k^\alpha
=
\left(\frac{8\,\regcon m}{\delta}\right)\hat{t}_k^\alpha.
\label{eq:tau_def}
\end{equation}

It remains to show that, even after these $\tau$ rounds, Player~2 still assigns probability at least $1-\delta$ to action $a_2(k)$.
As argued previously for Player~2, every action other than $a_2(k+1)$ accumulates a gap that is linear in $\hat{t}_k + \tau$; hence, by \cref{thm:uniform_small_prob_fixed_action}, each such action remains assigned probability at most $\delta/m$.
Therefore, if $\vx_2^{(\hat{t}_k + \tau)}[a_2(k)]$ were to drop below $1-\delta$, then only $a_2(k+1)$ could have increased its probability.

To this end, we lower bound the gap of the competing action $a_2(k+1)$.
As noted above, the implication in \eqref{eq:val_eq5} does not hold after round $\hat{t}_k$. We therefore bound the worst-case one-step change in the gap of $a_2(k+1)$. From \eqref{eq:val_player2eq}, for any $t\ge \hat{t}_k$,
\begin{align}
\Delta \Gap^{(t)}[a_2(k+1)]
&=
- 
\vx_1^{(t)}[a_1(k)]
+
\vx_1^{(t)}[a_1(k-1)] (k-1)
-
\vx_1^{(t)}[a_1(k+2)] (k+2)
\notag
\\
&\geq
- 
\vx_1^{(t)}[a_1(k)]
-
\vx_1^{(t)}[a_1(k+2)] (k+2)
\notag
\\
& 
\geq
-
1
- (\delta / m) (k+2)
\geq -2
.
\label{eq:val_eq6}
\end{align}
By \cref{lem:lower_bound_regret_gap}, \eqref{eq:val_eq5} and \eqref{eq:val_eq6}, we obtain
\begin{align}
\gap{2}{\hat{t}_k + \tau}{a_2(k+1)}
&=
\gap{2}{\hat{t}_k}{a_2(k+1)} 
+
\sum_{t=\hat{t}_k+1}^{\hat{t}_k + \tau}
\Delta \Gap^{(t)}[a_2(k+1)]
\notag\\
&\ge
\gap{2}{\tlow{k}-1}{a_2(k+1)} 
+
\sum_{t=\tlow{k}}^{\hat{t}_k}
\Delta \Gap^{(t)}[a_2(k+1)]
+
\sum_{t=\hat{t}_k+1}^{\hat{t}_k + \tau}
(-2)
& (\text {From \eqref{eq:val_eq6}})
\notag\\
&\ge
\gap{2}{\tlow{k}-1}{a_2(k+1)}
+
\sum_{t=\tlow{k}}^{\hat{t}_k}\frac{(k-3)}{2}
+
\tau (-2)
& (\text {From \eqref{eq:val_eq5}})
\notag\\
&\ge
\sum_{\ell=4}^{k-1} (\ell-2) T_{\ell}
+
\frac{(\hat{t}_k - \tlow{k} + 1)}{2}
-2 \tau
& (\text {From \eqref{lem:lower_bound_regret_gap}})
\notag\\
&\ge
\sum_{\ell=4}^{k-1} 
T_{\ell}
+
\frac{(\hat{t}_k - \tlow{k} + 1)}{2}
-
2 \tau
\notag\\
&\geq
\frac{\tlow{k}-1}{2}
+
\frac{(\hat{t}_k - \tlow{k} + 1)}{2}
-
2 \tau
\notag\\
&=
\frac{\hat{t}_k}{2}
-
(16 c ) \,\hat{t}_k^{\alpha},
\label{eq:gap_lower_t2_minus_sublinear}
\end{align}
where in the last step we used $\tau = (8 c) \,\hat{t}_k^\alpha$ from \eqref{eq:tau_def}.

The correction term $(8c)\,\hat{t}_k^\alpha$ is sublinear in $\hat{t}_k$ for $\alpha \in [0,1)$, so after a particular round the linear term should dominates the second term. It suffices to find $t$ such as the following is true
\begin{equation}
\frac{t}{2} - (16c)\,t^\alpha \;\ge\; \frac{t}{4},
\label{eq:val_eq7}
\end{equation}
which is equivalent to
\(
\frac{t}{4} 
\ge
(16c)\,t^\alpha
\Leftrightarrow
t 
\ge
(64c)^{\frac{1}{1-\alpha}}.
\)
Therefore, \eqref{eq:val_eq7} holds whenever $\hat{t}_k \ge (64c)^{\frac{1}{1-\alpha}}$. Moreover, due to \cref{lem:period_2_lower_bound} and \eqref{eq:gamma_init}
\begin{equation}
\hat{t}_k 
\geq
\sum_{\ell=3}^{k-1} T_{\ell} 
\geq
T_4 
\geq
\frac{\gamma^{(1)}}{2} 
\geq
(64c)^{\frac{1}{1-\alpha}},
\label{eq:val_eq8}
\end{equation}
so from \eqref{eq:gap_lower_t2_minus_sublinear} we conclude
\begin{equation}
\gap{2}{\hat{t}_k + \tau}{a_2(k+1)} \;\ge\; \frac{\hat{t}_k}{4}.
\label{eq:val_eq10}
\end{equation}
%
The final step is to verify that $\vx_2^{(\hat{t}_k + \tau)}[a_2(k+1)] \le \delta/m$. This ensures that the conditions of \cref{def:transition_period} are satisfied, and hence periods do not blend despite the propensity of \ftrl to mix actions; in particular, period $k$ can only be followed by period $k+1$. To this end, we invoke \cref{lem:regret_gap} together with \eqref{eq:val_eq10}.
\begin{align}
\vecx{2}{\hat{t}_k + \tau}[a_2(k+1)] 
&\;\le\;
\frac{\regcon}{\eta^{(\hat{t}_k + \tau)} \gap{2}{\hat{t}_k + \tau}{a_2(k+1)}}
\;\le\;
\frac{\regcon}{(\hat{t}_k + \tau)^{-\alpha} (\hat{t}_k/4)}
=
\frac{4R \; (\hat{t}_k + \tau)^{\alpha}}{\hat{t}_k}
\leq 
\frac{\delta}{m}
\label{eq:val_eq9}
\end{align}

Hence, it suffices to show that
\(
\frac{(\hat{t}_k+\tau)^{\alpha}}{\hat{t}_k} \le \frac{1}{4c},
\) where $c = \frac{\regcon m}{\delta}$.
From \eqref{eq:period_val_eq3}, we already have $\frac{(\hat{t}_k+\tau)^\alpha}{\tau} \le \frac{1}{2c}$. Therefore,
\[
\frac{(\hat{t}_k+\tau)^{\alpha}}{\hat{t}_k}
=
\frac{(\hat{t}_k+\tau)^{\alpha}}{\tau}
\frac{\tau}{\hat{t}_k}
\le
\frac{1}{2c}
\frac{\tau}{\hat{t}_k}.
\]
Thus, it remains to ensure $\tau/\hat{t}_k \le 1/2$, which, in turn, implies \eqref{eq:val_eq9} and concludes the proof.
By \eqref{eq:tau_def}, $\tau = (8c)\,\hat{t}_k^{\alpha}$, and hence
\(
\tau/\hat{t}_k = 8c\,\hat{t}_k^{-(1-\alpha)} = 8c/\hat{t}_k^{\,1-\alpha}.
\)
Therefore, the condition $\tau/\hat{t}_k \le 1/2$ is equivalent to
\[
\frac{\tau}{\hat{t}_k}
\le
\frac12
\;\Longleftrightarrow\;
\frac{8c}{\hat{t}_k^{\,1-\alpha}}\le \frac12
\;\Longleftrightarrow\;
\hat{t}_k^{\,1-\alpha}\ge 16c
\;\Longleftrightarrow\;
\hat{t}_k \ge (16c)^{\frac{1}{1-\alpha}},
\]
where the last step uses $1-\alpha>0$. Since \eqref{eq:val_eq8} guarantees \(\hat{t}_k \ge (64c)^{\frac{1}{1-\alpha}}\), the above condition is satisfied; hence \eqref{eq:val_eq9} holds as well, completing the proof.
\end{proof}

\subsection{Proof of \cref{theorem:negative}}

Having established in \cref{property:main} that the \ftrl iterates traverse the periods sequentially (from $k=3$ up to $k=2m-1$), we can now invoke the preceding results to prove the main theorem of \cref{sec:lowerbounds}, namely \cref{theorem:negative}. We begin with a lemma showing that, in every period except the last, the \ftrl iterates do not reach an $\epsilon$-Nash equilibrium for any $\epsilon \le \frac{1}{8m}$.

Throughout, we work with the identical-payoff game $(\mA,\mA)$ defined in \cref{subsec:init}, and we fix $\delta$ and $\gamma^{(1)}$ as in \cref{asmpt:initial_conditions}.

\begin{lemma}
\label{lem:beneficial_deviation}
For every $k\in\{3,\dots,2m-2\}$, throughout period $k$ the \ftrl iterates $(\vx_1^{(t)},\vx_2^{(t)})$ are not an $\epsilon$-Nash equilibrium for $\epsilon=\frac{1}{8m}$.
\end{lemma}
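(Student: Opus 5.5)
Fix $k\in\{3,\dots,2m-2\}$ and a round $t$ in period $k$. By symmetry, take $k$ even; the odd case swaps the roles of the players. We exhibit a unilateral deviation whose gain exceeds $\epsilon=\frac{1}{8m}$. Period $k$ for even $k$ has player~1 mixing between $a_1(k-1)$ and $a_1(k)$ while player~2 concentrates on $a_2(k)=a_2(k-1)$.

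The natural deviation is player~2 moving to $a_2(k+1)$. By \cref{lem:matrixA_props}, $a_1(k)=a_1(k+1)$, so $a_2(k+1)$ earns $k+1$ against $a_1(k)$. Write $p\defeq\vx_1^{(t)}[a_1(k)]$ and $q\defeq\vx_1^{(t)}[a_1(k-1)]$, with $p+q\ge 1-\delta$. The relevant utilities are:
\begin{itemize}
\item the current utility of player~2 is at most $(1-\delta)(pk+q(k-1))+\delta(2m-1)$;
\item deviating to the pure strategy $a_2(k+1)$ yields at least $p(k+1)$.
\end{itemize}
Whether this deviation is profitable depends on $p$, so we split into two cases.

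\textbf{Case $p$ large, say $p\ge 1/2$.} We compare player~2's payoff directly. Comparing column $a_2(k+1)$ with column $a_2(k)$ row by row, the row $a_1(k)$ contributes a gain of $p$ and the row $a_1(k-1)$ contributes a loss of $q(k-1)$. All other rows carry mass at most $\delta$ and entries in $[-\text{const},2m-1]$. This alone is not enough, because $q(k-1)$ may still dominate $p$. We therefore use player~1 as well.

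\textbf{Case $q$ non-negligible.} Player~1 can deviate from $a_1(k-1)$ to $a_1(k)$. Against column $a_2(k)$ this yields $k$ versus $k-1$, a gain of at least $q(1-\delta)-O(\delta m)$. The argument thus splits as follows.
\begin{itemize}
\item If $q\ge 1/2$, player~1's deviation of all mass to $a_1(k)$ gains about $q(1-\delta)-\delta(2m-1)\cdot$(small). The subtracted terms come only from columns other than $a_2(k)$, which carry mass at most $\delta$. Because the total mass off the main column is $\le\delta=\frac1{4m}$, this loss is bounded by roughly $\delta\cdot 2m\cdot$const.
\item If $q<1/2$, then $p\ge\frac12-\delta$. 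Player~2's deviation to $a_2(k+1)$ gains $p(k+1)-pk-q(k-1)-O(\delta m)$, which is not obviously positive. A better route is to note that the case threshold can be refined. Choose the split at $q\le \frac{1}{2k}$ instead. If $q>\frac{1}{2k}$, player~1's gain is at least $q(1-\delta)-\text{small}\ge\frac{1}{4m}-\text{small}$. Otherwise player~2's gain is at least $p-q(k-1)-\text{small}\ge\frac12-\text{small}$.
\end{itemize}

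The "small" terms must be controlled more finely than the crude $\delta(2m-1)$ bound. For this we use \cref{thm:uniform_small_prob_fixed_action} and the proof of \cref{property:main}. These show that every inactive action has probability at most $\delta/m$ individually. Hence the total contribution of inactive rows or columns to a payoff difference is at most $m\cdot\frac{\delta}{m}\cdot O(m)$, where the $O(m)$ absorbs payoffs bounded by $2m-1$. The negative gadget entries only hurt the non-deviating side, or also appear with probability at most $\delta/m$.

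\textbf{Main obstacle.} The bookkeeping of these residual terms is the hard step. We need them to come out below about $\frac{1}{8m}$ in the case where player~1 is the deviator. If $\delta/m$ per action is not small enough, we would instead tighten to the linear-gap bound $\vx\le \regcon/(\eta^{(t)}\gap{}{}{})$, which gives vanishing mass as $t$ grows, and handle early rounds via $\gamma^{(1)}$. Finally, $k=3$ (round~1, uniform play) is immediate: deviating to action~1 is strictly better by $\Omega(\gamma/m)$.
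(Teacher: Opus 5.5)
Your architecture matches the paper's. In an even period $k$ you use player~1's deviation to $a_1(k)$ when $a_1(k-1)$ still carries non-negligible mass, and player~2's deviation to $a_2(k+1)$ otherwise. The paper splits on $1-\vx_1^{(t)}[a_1(k)]$ versus $\frac{1}{k+1}$; you split on $q$ versus $\frac{1}{2k}$. Your separate handling of $k=3$ is fine.

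\textbf{The gap.} The genuine gap is the step you flag as the main obstacle, and it carries the entire quantitative content of the lemma. The resolution you sketch fails for two reasons.

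First, charging each inactive action (mass)$\times$(largest payoff) gives $m\cdot\frac{\delta}{m}\cdot O(m)=\Theta(1)$, because $\delta=\frac{1}{4m}$. The cruder term $\delta(2m-1)\approx\frac12$ you actually write cancels your player-2 gain exactly: $\frac12-\delta-\delta(2m-1)=0$. It also swamps the player-1 gain, which can be as small as about $\frac1{4m}$. (In that bullet the coefficient should also be $\vx_2^{(t)}[a_2(k)]\le 1$, not $1-\delta$.) No per-action bound can rescue this. Even the single realized-payoff term $p(k+1)\,\vx_2^{(t)}[a_2(k+1)]$, with $\vx_2^{(t)}[a_2(k+1)]\le\delta/m$, is only bounded by about $\frac{1}{2m}$.

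Second, your claim that the gadget entries only hurt the non-deviating side is false. The deviating row itself has $\mA[a_1(k),m]=-\gamma-\sum_{a_2'}\mB[a_1(k),a_2']$ with $\gamma=m\gamma^{(1)}$. In period $4$, for instance, $\mA[a_1(3),m]=\mA[1,m]=-3$ but $\mA[a_1(4),m]=-\gamma-9$. So shifting the mass $q$ from $a_1(3)$ to $a_1(4)$ costs about $q\gamma\,\vx_2^{(t)}[m]$, against a gain of about $q$. With $\vx_2^{(t)}[m]\le\delta/m$ that cost is of order $q\gamma^{(1)}/(4m)\gg q$. The paper instead invokes $\vx_2^{(t)}[m]\le\frac{\delta}{\gamma^{(1)}m}$, arguing that the gap of action $m$ grows $\gamma^{(1)}$ times faster than the linearly growing gaps. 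It uses the symmetric bound for $\vx_1^{(t)}[m]$.

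\textbf{What closes the non-gadget terms.} The missing idea is to use the sign and sparsity structure of $\mA$ instead of crude mass bounds: each spiral row and column has exactly two nonzero entries, with consecutive values. For player~1's deviation, set $r=1-p-q\le\delta$ and go column by column.
\begin{itemize}
\item Column $a_2(k)$ contributes at least $\vx_2^{(t)}[a_2(k)](q+kr)$.
\item Column $a_2(k+1)$ costs at most $\delta r$. Since $a_1(k)=a_1(k+1)$, the deviation collects the entry $k+1$ with coefficient $1\ge p$.
\item Column $a_2(k-2)$ costs at most $\vx_2^{(t)}[a_2(k-2)]\bigl((k-2)q+(k-3)r\bigr)$.
\item The remaining non-gadget columns cost at most $\delta r(2m-1)$.
\end{itemize}
The $r$-terms are absorbed by $(1-\delta)kr$. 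The $q$-terms leave $q\bigl(1-(k-1)\delta\bigr)\ge q/2$. Similarly, player~2's deviation gains at least $p(1-\delta)-q(k-1)-\delta^2(2m-1)$, up to the gadget row.

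So the only surviving non-gadget residual is a product of both players' off-support masses, at most $\delta^2(2m-1)\approx\frac1{8m}$. This is precisely the error term in the paper's bound on the realized payoff, and the case threshold must be chosen against it. The paper's threshold yields a player-1 gain of $\frac{1-\delta}{k+1}\gtrsim\frac{1}{2m}$, and the arithmetic closes as $\frac{6m-9}{16m^2}\ge\frac1{8m}$. Your threshold $\frac1{2k}$ gives only about $\frac{1}{4m}$, which leaves essentially no slack against an error of about $\frac{1}{8m}$.
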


\begin{proof}
Fix a round $t$ in period $k$, where $3\le k\le 2m-2$ and $k$ is even.
By \cref{def:transition_period},
\[
\vx_{1}^{(t)}[a_1(k-1)] + \vx_{1}^{(t)}[a_1(k)] \ge 1-\delta,
\qquad
\vx_{2}^{(t)}[a_2(k)] \ge 1-\delta .
\]
Then,
\[
\sum_{a_1\notin\{a_1(k-1),a_1(k)\}} \vx_{1}^{(t)}[a_1]\le \delta,
\qquad
\sum_{a_2\neq a_2(k)} \vx_{2}^{(t)}[a_2]\le \delta .
\]
Using the uniform bound $\mA[a_1,a_2]\le 2m-1$ from \cref{prop:matrixA_prop1_max} in \cref{lem:matrixA_props}, we upper bound the realized payoff as
\begin{align}
\langle \vx_1^{(t)}, \mA \vx_2^{(t)} \rangle
&=
\sum_{a_1,a_2}\vx_{1}^{(t)}[a_1]\vx_{2}^{(t)}[a_2]\mA[a_1,a_2]
\nonumber
\\
&\le \delta^2(2m-1)
+ \vx_2^{(t)}[a_2(k)]
\Bigl(
    \vx_1^{(t)}[a_1(k-1)]\,\mA[a_1(k-1),a_2(k)] 
\nonumber
\\
&\hphantom{\le \delta^2(2m-1)
+ \vx_2^{(t)}[a_2(k)]\Bigl(}
  + \vx_1^{(t)}[a_1(k)]\,\mA[a_1(k),a_2(k)]
\Bigr).
\label{eq:payoff-upper-clean-pre}
\end{align}
Since $k$ is even, \cref{prop:matrixA_prop3_odd} in \cref{lem:matrixA_props} implies $a_2(k-1)=a_2(k)$, and hence
\[
\mA[a_1(k-1),a_2(k)]
=
\mA[a_1(k-1),a_2(k-1)]
=
k-1.
\]
Substituting this identity and $\mA[a_1(k),a_2(k)]=k$ into \eqref{eq:payoff-upper-clean-pre} yields
\begin{align}
\langle \vx_1^{(t)}, \mA \vx_2^{(t)} \rangle
&\le
\delta^2(2m-1)
+
\vx_{2}^{(t)}[a_2(k)]
\left(
\vx_{1}^{(t)}[a_1(k-1)] \, (k-1)
+
\vx_{1}^{(t)}[a_1(k)] \, k
\right).
\label{eq:payoff-upper-clean}
\end{align}

Consider Player~1 deviating to the pure action $a_1(k)$. By definition,
$\mA[a_1(k),a_2(k)]=k$ and all other entries, but the $m$th entry, are nonnegative, hence
\begin{equation}
\innerprod{\ve_{a_1(k)}}{\mA \vecx{2}{t}}
\ge 
\vecx{2}{t}[a_2(k)] k 
+
\vecx{2}{t}[m] \, 
\left(
- 2 \gamma^{(1)}
\right).
\label{eq:p1-dev-clean2}
\end{equation}

The probability assigned to the $m$th action can easily be bounded. The key observation is that the gap of action $m$ grows $\gamma^{(1)}$ times faster than the gap of any action whose gap increases linearly with time. In particular,
\[
\gap{2}{t-1}{m}
\;\ge\;
\gamma^{(1)}\,\gap{2}{t-1}{a_2(2m-1)}.
\]
By \cref{lem:regret_gap}, we have
\begin{align}
\vx_2^{(t)}[m]
&\le
\frac{\regcon}{\eta^{(t-1)}\,\gap{2}{t-1}{m}}
\le
\frac{\regcon}{\eta^{(t-1)}\,\gamma^{(1)}\,\gap{2}{t-1}{a_2(2m-1)}}
\le
\frac{1}{\gamma^{(1)}}\cdot \frac{\delta}{m},
\label{eq:p1-dev-clean3}
\end{align}
where the last inequality uses \cref{thm:uniform_small_prob_fixed_action} applied to $a_2(2m-1)$, whose gap is linear in $t-1$.

Combining this bound with \eqref{eq:p1-dev-clean2} yields
\begin{equation}
\langle \ve_{a_1(k)}, \mA \vx_2^{(t)} \rangle
\;\ge\;
\vx_2^{(t)}[a_2(k)] \, k
-
\frac{2\delta}{m}.
\label{eq:p1-dev-clean}
\end{equation}

Since $\vecx{1}{t}[a_1(k-1)] + \vecx{1}{t}[a_1(k)] \le 1$,
\eqref{eq:payoff-upper-clean} can be rewritten as
\begin{equation*}
\langle \vx_1^{(t)}, \mA \vx_2^{(t)} \rangle
\le \delta^2(2m-1)
+ \vx_{2}^{(t)}[a_2(k)]
\Bigl((k-1) + \vx_{1}^{(t)}[a_1(k)]\Bigr).
\end{equation*}

Subtracting it
from \eqref{eq:p1-dev-clean} yields
\begin{align}
\innerprod{\ve_{a_1(k)}}{\mA \vecx{2}{t}}
-
\innerprod{\vecx{1}{t}}{\mA \vecx{2}{t}}
&\ge
-
\delta^2(2m-1)
-
\frac{2\delta}{m}
+
\vecx{2}{t}[a_2(k)]
\left(
1
-
\vecx{1}{t}[a_1(k)]
\right).
\label{eq:p1-gain-master}
\end{align}

If $\vecx{1}{t}[a_1(k)] \le 1-\frac{1}{k+1}$, then $1-\vecx{1}{t}[a_1(k)] \ge \frac{1}{k+1}$ and $\vecx{2}{t}[a_2(k)] \ge 1-\delta$, so
\begin{equation}
\innerprod{\ve_{a_1(k)}}{\mA \vecx{2}{t}}
-
\innerprod{\vecx{1}{t}}{\mA \vecx{2}{t}}
\ge
-\delta^2(2m-1)
-
\frac{2\delta}{m}
+
(1-\delta)\frac{1}{k+1}.
\label{eq:case1-const22}
\end{equation}
Substituting $\delta=\frac{1}{4m}$ and using $k+1\le 2m$ yields
\[
(1-\delta)\frac{1}{k+1}\;\ge\;(1-\delta)\frac{1}{2m}.
\]
Hence
\begin{align*}
-\delta^2(2m-1)-\frac{2\delta}{m}+(1-\delta)\frac{1}{k+1}
&\;\ge\;
-\delta^2(2m-1)-\frac{2\delta}{m}+(1-\delta)\frac{1}{2m}
\nonumber
\\
&\;=\;
-\frac{2m-1}{16m^2}-\frac{1}{2m^2}+\Bigl(1-\frac{1}{4m}\Bigr)\frac{1}{2m}
\nonumber
\\
&\;=\;
-\frac{2m-1}{16m^2}-\frac{1}{2m^2}+\frac{1}{2m}-\frac{1}{8m^2}
\nonumber
\\
&\;=\;
\frac{3}{8m}-\frac{9}{16m^2}
=
\frac{6m-9}{16m^2}
\geq 
\frac{1}{8m}, & \text{(for $m >2$)}
\end{align*}

\begin{equation}
\innerprod{\ve_{a_1(k)}}{\mA \vecx{2}{t}}
-
\innerprod{\vecx{1}{t}}{\mA \vecx{2}{t}}
\ge
\frac{1}{8m}.
\label{eq:case1-const}
\end{equation}

Otherwise, $\vx_1^{(t)}[a_1(k)] > 1-\frac{1}{k+1}$, and therefore
\[
\vx_1^{(t)}[a_1(k-1)]
\;\le\;
1-\vx_1^{(t)}[a_1(k)]
\;<\;
\frac{1}{k+1}.
\]
Consider player~2 deviating to $a_2(k+1)$. Since $k$ is even, \cref{prop:matrixA_prop2_even} (in \cref{lem:matrixA_props}) implies $a_1(k)=a_1(k+1)$. Hence
\[
\mA[a_1(k),a_2(k+1)]
=
\mA[a_1(k+1),a_2(k+1)]
=
k+1.
\]
Moreover, all payoff entries appearing in the deviation comparison are nonnegative, except the $m$th entry, so
\begin{equation}
\innerprod{\vecx{1}{t}}{\mA \ve_{a_2(k+1)}}
\ge 
\vecx{1}{t}[a_1(k)]
\,
(k+1)
+
\vecx{1}{t}[m]
\, 
\left( 
-2 \gamma^{(1)}
\right)
.
\label{eq:p2-dev-clean}
\end{equation}
As before, we can bound $\vx_1^{(t)}[m]\le \delta/(\gamma^{(1)}m)$ (analogously to the bound on $\vx_2^{(t)}[m]$). Using this fact, subtracting \eqref{eq:payoff-upper-clean} from $\langle \vx_1^{(t)}, \mA \ve_{a_2(k+1)}\rangle$, and using $\vx_2^{(t)}[a_2(k)]\le 1$, we obtain
\begin{align}
\langle \vx_1^{(t)}, \mA \ve_{a_2(k+1)} \rangle
-
\langle \vx_1^{(t)}, \mA \vx_2^{(t)} \rangle
&\ge
\vx_1^{(t)}[a_1(k)](k+1)
-
\frac{2\delta}{m}
-
\delta^2(2m-1)
\nonumber\\
&\quad
-
\Bigl(
\vx_1^{(t)}[a_1(k-1)](k-1)
+
\vx_1^{(t)}[a_1(k)]k
\Bigr)
\nonumber\\
&=
-\delta^2(2m-1)
-
\frac{2\delta}{m}
+
\vx_1^{(t)}[a_1(k)]
-
(k-1)\vx_1^{(t)}[a_1(k-1)].
\label{eq:p2-gain-master}
\end{align}
Using $\vx_1^{(t)}[a_1(k)] > 1-\frac{1}{k+1}$ and $\vx_1^{(t)}[a_1(k-1)] \le \frac{1}{k+1}$ yields
$\vx_1^{(t)}[a_1(k)] - (k-1) \vx_1^{(t)}[a_1(k-1)] \ge \frac{1}{k+1}$, and hence from \eqref{eq:p2-gain-master}
\begin{align}
\innerprod{\vecx{1}{t}}{\mA \ve_{a_2(k+1)}}
-
\innerprod{\vecx{1}{t}}{\mA \vecx{2}{t}}
&\ge
-
\delta^2(2m-1)
-
\frac{2\delta}{m}
+
\frac{1}{k+1}
\nonumber
\\
&
\ge
-
\delta^2(2m-1)
-
\frac{2\delta}{m}
+
\frac{(1-\delta)}{k+1}.
\label{eq:case2-const2}
\end{align}

In \eqref{eq:case1-const22} and \eqref{eq:case1-const} we lower bounded the right-hand side by $1/(8m)$. The same bound applies to \eqref{eq:case2-const2}, yielding
\begin{equation}
\innerprod{\vecx{1}{t}}{\mA \ve_{a_2(k+1)}}
-
\innerprod{\vecx{1}{t}}{\mA \vecx{2}{t}}
\;\ge\;
\frac{1}{8m}.
\label{eq:case2-const}
\end{equation}

Combining \eqref{eq:case1-const} and \eqref{eq:case2-const}, every round $t$ of every even period
$k\le 2m-2$ admits a unilateral deviation that improves the payoff by at least $1/(8m)$.
The odd-$k$ case is analogous (with the roles of the players swapped), so no iterate that lies in a period
$k\le 2m-2$ can be an $\epsilon$-Nash equilibrium for any $\epsilon \le \frac{1}{8m}$.
\end{proof}

We now prove the formal version of \cref{theorem:negative} stated in \cref{sec:introduction}.

\begin{theorem}
\label{theorem:negative_formal}
Consider the identical-payoff game $(\mA,\mA)$, and suppose both players run \ftrl with a permutation-invariant regularizer and learning rate $\eta^{(t)}=t^{-\alpha}$ for $\alpha\in[0,1)$. Then, for any $\epsilon \le \frac{1}{8m}$, the \ftrl iterates do not constitute an $\epsilon$-Nash equilibrium for at least $2^{\Omega(m\log m)}$ rounds.
\end{theorem}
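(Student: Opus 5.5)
The plan is to combine the structural results already in place: \cref{property:main} (period consistency), \cref{lem:beneficial_deviation} (no $\epsilon$-Nash equilibrium inside any period $k\le 2m-2$), and \cref{lem:exponential_lower_bound} (the combined length of periods $2m-4$ and $2m-3$ is factorial in $m$).

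First, I would fix the game $(\mA,\mA)$ with the parameter $\gamma$ chosen so that \cref{asmpt:initial_conditions} holds. The lemma on uniform initialization shows this is possible for any permutation-invariant regularizer, via \cref{lem:uniform_minimizer_regularizer}. Then period $3$ is round $1$ and period $4$ starts at round $2$.

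By \cref{property:main}, applied inductively, every round $t$ lies in some period $k$. Moreover, the period index is nondecreasing and advances by at most one per round. Hence the rounds $1,\dots,\thigh{2m-2}$ are exactly the disjoint union of periods $3,\dots,2m-2$, and
\[
\thigh{2m-2}=\sum_{\ell=3}^{2m-2}T_\ell .
\]
By \cref{lem:beneficial_deviation}, at every such round some player has a unilateral deviation gaining at least $\frac{1}{8m}$. Therefore the iterates are not an $\epsilon$-Nash equilibrium for any $\epsilon\le\frac{1}{8m}$ throughout these rounds.

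Next, I would bound the number of rounds from below by the periods $2m-4$ and $2m-3$, both of which are at most $2m-2$:
\[
\thigh{2m-2}\ \ge\ T_{2m-4}+T_{2m-3}\ \ge\ \frac{\gamma^{(1)}}{4}\Bigl(\frac{m-3}{e}\Bigr)^{m-3},
\]
using \cref{lem:exponential_lower_bound}. Since $\gamma^{(1)}\ge 1$ (indeed it is polynomially large by \eqref{eq:gamma_init}), the right-hand side equals $\exp\bigl((m-3)\log\frac{m-3}{e}\bigr)=2^{\Omega(m\log m)}$ for $m$ larger than a constant. Small $m$ is absorbed into the $\Omega$.

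The main obstacle is justifying that the dynamics truly pass through all periods up to $2m-2$ before anything else can happen, that is, that the periods tile the time axis. I need to rule out the iterate leaving the period structure entirely: some round belonging to no period, or ties at the threshold $1-\delta$. For this I would rely on \cref{property:main}, which guarantees that round $t+1$ lies in period $k$ or $k+1$. I would also check that all the $T_\ell$ are finite, so that the recursion of \cref{lem:lower_bound_k_k1} is meaningful. If some period never ends, the claim holds trivially, since the iterates then stay non-equilibrium forever.

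A secondary point is that the learning rate in the statement is $t^{-\alpha}$ with constant $1$. This matches the $\etacon$ normalization in \eqref{eq:gamma_init}, so no rescaling is needed. The case $\alpha=0$ (constant step) is covered, since all exponents $\frac{1}{1-\alpha}$ remain finite.
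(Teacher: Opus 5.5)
Your proposal is correct and matches the paper's proof. Both combine period consistency (\cref{property:main}), which makes the periods tile the time axis; \cref{lem:beneficial_deviation}, which gives a profitable deviation of size at least $\frac{1}{8m}$ in every period $k\le 2m-2$; and \cref{lem:exponential_lower_bound}, which gives $T_{2m-4}+T_{2m-3}\ge\frac{\gamma^{(1)}}{4}\bigl(\frac{m-3}{e}\bigr)^{m-3}=2^{\Omega(m\log m)}$. Counting through period $2m-2$ instead of $2m-3$ makes no difference. Your explicit handling of a period that never ends, and of $\gamma^{(1)}\ge 1$ (which the paper assumes implicitly when it drops the factor $\gamma^{(1)}/4$), only makes the same argument more careful.
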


\begin{proof}
The claim follows by combining \cref{property:main}, \cref{lem:beneficial_deviation}, and \cref{lem:exponential_lower_bound}.
By \cref{property:main}, the \ftrl iterates sequentially pass through periods $k=3,\dots,2m-3$.
For every round $t$ occurring in these periods, \cref{lem:beneficial_deviation} guarantees the existence of a unilateral deviation that increases payoff by at least $1/(8m)$.
Hence, at every such round the iterate fails to be an $\epsilon$-Nash equilibrium for any $\epsilon\le 1/(8m)$.

It remains to lower bound the time elapsed by the end of period $2m-3$.
By \cref{lem:exponential_lower_bound}, the combined duration of periods $2m-4$ and $2m-3$ satisfies
\[
T_{2m-4}+T_{2m-3}
\;\ge\;
\frac{\gamma^{(1)}}{4}
\left(\frac{m-3}{e}\right)^{m-3}.
\]

To express the dominant term in base~$2$, note that
\[
\left(\frac{m-3}{e}\right)^{m-3}
=
2^{(m-3)\log_2\!\left(\frac{m-3}{e}\right)}
=
2^{(m-3)\left(\log_2(m-3)-\log_2 e\right)}.
\]
For $m\ge 10$, we have $m-3\ge m/2$ and
\[
\log_2(m-3)\;\ge\;\log_2(m/2)\;=\;\log_2 m-1.
\]
Therefore,
\begin{align*}
(m-3)\left(\log_2(m-3)-\log_2 e\right)
&\ge
\frac{m}{2}\left((\log_2 m-1)-\log_2 e\right) \\
&=
\frac{m}{2}\left(\log_2 m - (1+\log_2 e)\right).
\end{align*}
Let $C \coloneqq 1+\log_2 e$. For all $m \ge 2^{2C} (< 10)$ we have $\log_2 m - C \ge \tfrac12 \log_2 m$, and hence
\[
(m-3)\left(\log_2(m-3)-\log_2 e\right)
\;\ge\;
\frac{m}{4}\log_2 m.
\]
Plugging this back yields
\[
\left(\frac{m-3}{e}\right)^{m-3}
\;\ge\;
2^{\frac{m}{4}\log_2 m}
=
2^{\Omega(m\log m)}.
\]
\end{proof}

\section{Properties of $\mB_{r,n}$}
\label{sec:matrix_props}

In this section, we introduce the class of games introduced by~\citet{Panageas23:Exponential}.
Specifically, for $m = 4,6,\ldots$ and $r \in \N$, we define the matrix $\mB_{m,r}$ as follows.
\begin{definition}[Recursive payoff matrix $\mB_{m,r}$]
\label{def:payoffB}
Let $m\ge 2$ be even and $r\in\mathbb N$. Define $\mB_{m,r}\in\mathbb R^{m\times m}$ recursively by
\begin{equation}
\mB_{m,r}
\defeq
\begin{bNiceMatrix}
(r+1) & 0   & \cdots & 0   & 0 \\
0   &     &        &     & (r+4) \\
\vdots &  & \mB_{m-2,r+4} & & \vdots \\
0   &     &        &     & 0 \\
(r+2) & 0   & \cdots & 0   & (r+3)
\end{bNiceMatrix},
\qquad
\text{with base case }
\mB_{2,r}
\defeq
\begin{bNiceMatrix}
r+1 & 0 \\
r+2 & r+3
\end{bNiceMatrix}.
\label{mat:payoffB}
\end{equation}
\end{definition}


\begin{lemma}[Support entries of $\mB_{m,r}$]
\label{lem:support_matrixB}
Let $m\ge 2$ and $r\in\mathbb{N}$ be even.
Then the set of nonzero entries of $\mB_{m,r}$ is exactly
\(
\{r+1,r+2,\dots,r+(2m-1)\},
\)
and each value in this set appears exactly once.
\end{lemma}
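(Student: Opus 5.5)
The plan is to argue by strong induction on the (even) dimension $m$, following the recursive definition in~\eqref{mat:payoffB}. For the base case $m=2$, $\mB_{2,r}$ has nonzero entries $r+1,r+2,r+3$. These are exactly $\{r+1,\dots,r+(2\cdot 2-1)\}$, each appearing once, and they are nonzero because $r\ge 0$.

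For the inductive step, fix an even $m\ge 4$ and assume the claim for $\mB_{m-2,r'}$ for every admissible $r'$, in particular $r'=r+4$. The nonzero entries of $\mB_{m,r}$ fall into two disjoint groups:
\begin{enumerate}[label=(\roman*)]
\item the border entries, namely $r+1$ at position $(1,1)$, $r+4$ at $(2,m)$, $r+2$ at $(m,1)$ and $r+3$ at $(m,m)$, all other border entries being $0$;
\item the entries of the embedded block $\mB_{m-2,r+4}$, which sits in rows $2,\dots,m-1$ and columns $1,\dots,m-1$ as drawn.
\end{enumerate}
Before counting, I would check from the display that the block does not overlap the border positions carrying $r+1,\dots,r+4$. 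In particular, $(2,m)$ lies in the last column, which the block does not occupy. By the induction hypothesis, the block contributes exactly the values $\{r+5,\dots,r+4+(2(m-2)-1)\}=\{r+5,\dots,r+2m-1\}$, each once. Together with $\{r+1,\dots,r+4\}$ from the border, this gives $\{r+1,\dots,r+2m-1\}$. The two sets are disjoint, so each value still appears exactly once.

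I expect the only real obstacle to be the bookkeeping of positions in the recursive display. One has to confirm that the block $\mB_{m-2,r+4}$ really occupies the $(m-2)\times(m-2)$ interior shifted so that it avoids $(2,m)$, and that every remaining border entry is zero. The hypothesis that $r$ is even is not actually used beyond $r\ge 0$ (it is preserved under $r\mapsto r+4$ anyway). The point that matters is nonnegativity, which guarantees that all listed values are nonzero.
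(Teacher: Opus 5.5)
Your proof is correct and follows the paper's argument: induction on $m$, with the four border entries $r+1,\dots,r+4$ and the inner block $\mB_{m-2,r+4}$ contributing $\{r+5,\dots,r+2m-1\}$ by the induction hypothesis, and disjointness of both values and positions giving uniqueness. One small slip: the inner block occupies rows and columns $2,\dots,m-1$ (column $1$ carries border zeros in those rows), not columns $1,\dots,m-1$, but this does not affect the argument.
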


\begin{proof}
We proceed by induction on $m$.
For the base case $m=2$, the matrix $\mB_{2,r}$ has exactly three nonzero entries, namely
$\{r+1,\,r+2,\,r+(2\cdot 2-1)\}$, each appearing once.
Assume the claim holds for $m-2$, and consider $\mB_{m,r}$ with $m\ge 4$.
By construction, the only nonzero entries on the outer frame are
\[
r+1 \text{ at } (1,1),\qquad
r+2 \text{ at } (m,1),\qquad
r+3 \text{ at } (m,m),\qquad
r+4 \text{ at } (2,m),
\]
and all other outer entries are zero.
The inner block is $\mB_{m-2,r+4}$, supported on rows and columns in $\{2,\dots,m-1\}$.
By the induction hypothesis, its nonzero entries are exactly
\[
\{(r+4)+1,\dots,(r+4)+(2(m-2)-1)\}
=
\{r+5,\dots,r+(2m-1)\},
\]
each appearing exactly once.
Since $\{r+1,r+2,r+3,r+4\}$ is disjoint from $\{r+5,\dots,r+(2m-1)\}$, all nonzero entries of $\mB_{m,r}$ are distinct, and their union is precisely $\{r+1,\dots,r+(2m-1)\}$.
\end{proof}

\begin{definition}[Locator functions $a_1,a_2$]
Let $m\ge 2$ be even and let $r\in\mathbb{N}$.
For each value $s\in\{r+1,\dots,r+(2m-1)\}$ that appears as an entry of $\mB_{m,r}$, define $(a_1(s),a_2(s)) \in [m]\times[m]$ as the unique pair of indices (guaranteed by \Cref{lem:support_matrixB}) such that
\[
\mB_{m,r}\big[a_1(s),a_2(s)\big]=s.
\]
\end{definition}

\begin{proposition}[Structural properties of $\mB_{m,r}$]
\label{lem:matrixB_props}
Let $m\ge 2$ and $r\in\mathbb{N}$ be even. The following properties hold for $\mB_{m,r}$:
\begin{enumerate}[label=(\roman*)]
\item $\max_{i,j} \mB_{m,r}[i,j] = r+(2m-1)$.
\label{prop:matrixB_prop1}

\item For even $k \in \{r+1,\dots,r+(2m-2)\}$,
\(
a_1(k) = a_1(k+1).
\)
\label{prop:matrixB_prop2}

\item For odd $k \in \{r+1,\dots,r+(2m-2)\}$,
\(
a_2(k) = a_2(k+1).
\)
\label{prop:matrixB_prop3}
\end{enumerate}
\end{proposition}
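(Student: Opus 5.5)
Each of the three claims follows by induction on $m$ (in steps of $2$), running in parallel with the proof of \cref{lem:support_matrixB} and using the recursive definition \eqref{mat:payoffB}. The base case $m=2$ is read off from $\mB_{2,r}$ directly. The maximum is $r+3=r+(2\cdot 2-1)$. The value $r+1$ is odd, since $r$ is even, and it sits at $(1,1)$ while $r+2$ sits at $(2,1)$; they share column $1$, giving (iii) for $k=r+1$. The value $r+2$ is even, and $r+3$ sits at $(2,2)$, in the same row as $r+2$, giving (ii) for $k=r+2$. These are the only values of $k$ in the range $\{r+1,r+2\}$.

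For the inductive step, take $m\ge 4$ and assume the claims for $\mB_{m-2,r+4}$; note that $r+4$ is still even, so the hypothesis applies. For (i), \cref{lem:support_matrixB} says the nonzero entries are exactly $\{r+1,\dots,r+(2m-1)\}$, and all other entries are $0\le r+1$. So the maximum is $r+(2m-1)$; alternatively, it comes from the inner block via the induction hypothesis, $(r+4)+(2(m-2)-1)=r+2m-1$.

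For (ii) and (iii), I split $k$ into the frame transitions $k\in\{r+1,r+2,r+3,r+4\}$ and the inner transitions $k\ge r+5$. For the frame I use the locations $r+1$ at $(1,1)$, $r+2$ at $(m,1)$, $r+3$ at $(m,m)$ and $r+4$ at $(2,m)$.
\begin{itemize}
\item $k=r+1$ is odd, and $r+1,r+2$ share column $1$.
\item $k=r+2$ is even, and $r+2,r+3$ share row $m$.
\item $k=r+3$ is odd, and $r+3,r+4$ share column $m$.
\item $k=r+4$ is even, and this is the crucial glue: $r+5$ is the $(1,1)$ entry of the inner block $\mB_{m-2,r+4}$, which sits at global position $(2,2)$. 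So $r+4$ and $r+5$ share row $2$.
\end{itemize}
For $k\in\{r+5,\dots,r+(2m-2)\}$, both $k$ and $k+1$ lie in the inner block's value range $\{(r+4)+1,\dots,(r+4)+(2(m-2)-1)\}$. The parity of $k$ is the same whether it is viewed relative to $r$ or to $r+4$, and the embedding only shifts row and column indices by $+1$. So equality of rows (or columns) is preserved, and the induction hypothesis gives the claim.

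The main obstacle is bookkeeping rather than substance, and there are two points to get right. First, I must fix the embedding of the inner block precisely, at rows and columns $2,\dots,m-1$, so that its $(1,1)$ entry lands at $(2,2)$; this is what makes the $k=r+4$ transition share row $2$ with $(2,m)$. Second, I must check that the parity conventions match the alternation. Odd values $r+1, r+3$ and their successors share columns, while even values $r+2, r+4$ and their successors share rows, and this pattern is consistent with the inner block because $r+4\equiv r \pmod 2$. I would state explicitly that this is where the evenness of $r$ is used.
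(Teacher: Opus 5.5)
Your proposal is correct and follows essentially the same route as the paper. Part (i) comes from \cref{lem:support_matrixB}. Parts (ii)--(iii) go by induction on $m$: the base case is read off $\mB_{2,r}$, the frame transitions $k\in\{r+1,\dots,r+4\}$ are checked by inspection (with $r+5$ at $(2,2)$ gluing to $r+4$ at $(2,m)$), and $k\ge r+5$ is inherited from the $(+1,+1)$-shifted inner block $\mB_{m-2,r+4}$, using that $r+4$ is still even.
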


\begin{proof}
\Cref{prop:matrixB_prop1} is immediate. By \Cref{lem:support_matrixB}, the nonzero entries of $\mB_{m,r}$ are exactly
$\{r+1,r+2,\dots,r+(2m-1)\}$, and hence the maximum is $r+(2m-1)$.

We prove \Cref{prop:matrixB_prop2,prop:matrixB_prop3} simultaneously by induction on $m$.
For $m=2$, and since $r$ is even, we have
$a_2(r+1)=a_2(r+2)=1$ and $a_1(r+2)=a_1(r+3)=2$, establishing the claim.

Assume the claim holds for $m-2$ (with $m\ge 4$), and consider $\mB_{m,r}$.
By construction,
\[
r+1 \text{ is at }(1,1),\quad
r+2 \text{ is at }(m,1),\quad
r+3 \text{ is at }(m,m),\quad
r+4 \text{ is at }(2,m),
\]
and the submatrix on rows/cols $\{2,\dots,m-1\}$ is $\mB_{m-2,r+4}$.

Fix $k \in \{r+1,\dots,r+(2m-2)\}$.
\begin{itemize}
\item If $k \in \{r+1,r+2,r+3\}$, the conclusions follow by inspection:
\[
a_2(r+1)=a_2(r+2)=1,\qquad
a_1(r+2)=a_1(r+3)=m,\qquad
a_2(r+3)=a_2(r+4)=m.
\]
Thus, for odd $k \in \{r+1,r+3\}$ we have $a_2(k)=a_2(k+1)$, and for even $k=r+2$ we have $a_1(k)=a_1(k+1)$.

\item If $k=r+4$, then $r+4$ is at $(2,m)$ and $r+5$ is the top-left entry of the inner block, hence at $(2,2)$.
Therefore $a_1(r+4)=a_1(r+5)=2$, which is exactly \Cref{prop:matrixB_prop2} (since $r+4$ is even).

\item If $k \ge r+5$, then $k$ and $k+1$ are inside the inner block $\mB_{m-2,r+4}$.
Let $(\tilde a_1(\cdot),\tilde a_2(\cdot))$ denote the location functions within that $(m-2)\times(m-2)$ block.
Because the inner block is embedded with a $(+1,+1)$ index shift,
\[
a_1(t)=\tilde a_1(t)+1,\qquad a_2(t)=\tilde a_2(t)+1
\qquad\text{for all } \tilde{r} \in\{r+5,\dots,r+(2m-1)\}.
\]
Applying the induction hypothesis to $\mB_{m-2,r+4}$ yields
$\tilde a_1(k)=\tilde a_1(k+1)$ when $k$ is even, and
$\tilde a_2(k)=\tilde a_2(k+1)$ when $k$ is odd.
Shifting by $+1$ gives the desired equalities for $a_1,a_2$ in $\mB_{m,r}$.
\end{itemize}
This completes the induction and proves \Cref{prop:matrixB_prop2,prop:matrixB_prop3}.
\end{proof}

\section{Auxiliary lemmas}
\label{sec:aux}

\begin{lemma}
\label{lemma:aux_prob_comparison}
Let $\vx_i \in \Delta(\cA_i)$. Suppose there exist $a_i,a_i'\in \cA_i$ such that
\(
\vx_i[a_i]+\vx_i[a_i'] \ge 1-\delta
\)
for $\delta = \frac{1}{4m}$. Then
\[
\max\{\vx_i[a_i],\vx_i[a_i']\} \ge \vx_i[\tilde{a}_i] \quad \text{for all } \tilde{a}_i\in \cA_i.
\]
In particular, either $\vx_i[a_i]$ or $\vx_i[a_i']$ is a maximal coordinate of $\vx_i$.
\end{lemma}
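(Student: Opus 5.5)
The plan is a direct pigeonhole argument on the mass left over for the remaining coordinates. Write $M \defeq \max\{\vx_i[a_i],\vx_i[a_i']\}$. Since $\vx_i[a_i]+\vx_i[a_i'] \ge 1-\delta$, averaging gives
\[
M \;\ge\; \frac{1-\delta}{2}.
\]
Because $\vx_i$ lies in the simplex, every other action $\tilde a_i \notin \{a_i,a_i'\}$ satisfies
\[
\vx_i[\tilde a_i] \;\le\; 1-\bigl(\vx_i[a_i]+\vx_i[a_i']\bigr) \;\le\; \delta.
\]

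The two bounds are then compared. With $\delta=\frac{1}{4m}$ and $m\ge 1$ we have $\delta\le\frac14$, so
\[
\frac{1-\delta}{2} \;\ge\; \frac{3}{8} \;>\; \frac14 \;\ge\; \delta.
\]
Thus $M \ge \delta \ge \vx_i[\tilde a_i]$ for every $\tilde a_i\notin\{a_i,a_i'\}$. For $\tilde a_i\in\{a_i,a_i'\}$ the inequality $M\ge \vx_i[\tilde a_i]$ holds by definition of the maximum. This gives the claim, and the ``in particular'' statement follows immediately, since the coordinate attaining $M$ is then a maximal coordinate of $\vx_i$.

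No step poses a real obstacle. The only point to check is that the threshold $\delta=\frac{1}{4m}$ is small enough that $(1-\delta)/2\ge\delta$, that is, $\delta\le\frac13$. This holds for every $m\ge1$, so no case distinction on $m$ is needed. The degenerate case $a_i=a_i'$ is also harmless: then $\vx_i[a_i]\ge 1-\delta$ and the same comparison applies.
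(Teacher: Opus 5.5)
Your argument is correct for distinct $a_i \neq a_i'$, and it is essentially the paper's proof. You bound every other coordinate by the leftover mass $\delta$, bound $\max\{\vx_i[a_i],\vx_i[a_i']\}$ below by $(1-\delta)/2$ via averaging, and compare the two using $\delta = \frac{1}{4m}$.

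The one thing to fix is your closing remark on the degenerate case $a_i = a_i'$. Read literally, the hypothesis then says $2\vx_i[a_i] \ge 1-\delta$. That gives only $\vx_i[a_i] \ge (1-\delta)/2$, not $\vx_i[a_i] \ge 1-\delta$. The conclusion can in fact fail: take $m=2$, $\delta = \frac{1}{8}$, $\vx_i = (\frac{7}{16},\frac{9}{16})$, and $a_i=a_i'=1$.

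So this case is not harmless. You should instead note that the lemma tacitly assumes $a_i \neq a_i'$; your leftover-mass bound $1-(\vx_i[a_i]+\vx_i[a_i'])$ also relies on this. The assumption is satisfied wherever the paper applies the lemma, since $a_1(k-1)\neq a_1(k)$.
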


\begin{proof}
Since $\vx_i$ is a probability vector,
\[
\sum_{\tilde a_i\in \cA_i\setminus\{a_i,a_i'\}} \vx_i[\tilde{a}_i]
= 1-(\vx_i[a_i]+\vx_i[a_i'])
\le \delta.
\]
Hence, for every $\tilde{a}_i\in \cA_i\setminus\{a_i,a_i'\}$,
\begin{equation}
\vx_i[\tilde{a}_i]
\le
\sum_{\hat{a}_i \in \cA_i\setminus\{a_i,a_i'\}}
\vx_i[\hat{a}_i] 
\le
\delta.
\label{eq:aux_prob_1}
\end{equation}
On the other hand,
\begin{equation}
\max\{\vx_i[a_i],\vx_i[a_i']\}
\ge \frac{\vx_i[a_i]+\vx_i[a_i']}{2}
\ge \frac{1-\delta}{2}.
\label{eq:aux_prob_2}
\end{equation}
Because $\delta = \frac{1}{4m}$, we have $\frac{1-\delta}{2}\ge \delta$, and therefore from \eqref{eq:aux_prob_1} and \eqref{eq:aux_prob_2}
\[
\max\{\vx_i[a_i],\vx_i[a_i']\} \ge \vx_i[\tilde{a}_i]
\quad \text{for all } \tilde{a}_i \in \cA_i\setminus\{a_i,a_i'\}.
\]
The conclusion is trivial for $\tilde{a}_i \in\{a_i, a_i'\}$, so
\[
\max\{\vx_i[a_i],\vx_i[a_i']\} \ge \vx_i[\tilde{a}_i]
\quad \text{for all } \tilde{a}_i \in \cA_i,
\]
which implies that either $\vx_i[a_i]$ or $\vx_i[a_i']$ is a maximal coordinate of $\vx_i$.
\end{proof}


\begin{lemma}
\label{lem:aux_max_min_increment}
For any player $i$ and any $t\ge 1$, it holds that
\[
\max_{a_i \in \cA_i}\sum_{\tau=1}^{t} \vu_i^{(\tau)}[a_i]
-
\max_{a_i \in \cA_i}\sum_{\tau=1}^{t-1} \vu_i^{(\tau)}[a_i]
\;\ge\;
\min\!\left\{
\vu_i^{(t)}[a_i^{(t-1)}],
\;\vu_i^{(t)}[a_i^{(t)}]
\right\},
\]
where
\[
a_i^{(t)} \in \arg\max_{a_i' \in \cA_i} \sum_{\tau=1}^{t} \vu_i^{(\tau)}[a_i'],
\quad
a_i^{(t-1)} \in \arg\max_{a_i' \in \cA_i} \sum_{\tau=1}^{t-1} \vu_i^{(\tau)}[a_i'].
\]
\end{lemma}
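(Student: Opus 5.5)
Write $M_s \defeq \max_{a_i \in \cA_i}\sum_{\tau=1}^{s} \vu_i^{(\tau)}[a_i]$ and $U_s[a] \defeq \sum_{\tau=1}^{s} \vu_i^{(\tau)}[a]$, so that $U_t[a] = U_{t-1}[a] + \vu_i^{(t)}[a]$ for every action $a$. The plan is to obtain two lower bounds on $M_t - M_{t-1}$, each by evaluating one of the two maxima at a specific action, and then to keep whichever of them is available. Only the definition of the maximum is used; there are no game-specific ingredients.

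\emph{First bound, via the previous maximizer.} Since $M_t$ is a maximum over all actions, it is at least its value at $a_i^{(t-1)}$:
\[
M_t \;\ge\; U_t[a_i^{(t-1)}] \;=\; U_{t-1}[a_i^{(t-1)}] + \vu_i^{(t)}[a_i^{(t-1)}] \;=\; M_{t-1} + \vu_i^{(t)}[a_i^{(t-1)}].
\]
The last equality holds because $a_i^{(t-1)}$ attains $M_{t-1}$. Hence $M_t - M_{t-1} \ge \vu_i^{(t)}[a_i^{(t-1)}]$.

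\emph{Second bound, via the current maximizer.} Here we write $M_t = U_{t-1}[a_i^{(t)}] + \vu_i^{(t)}[a_i^{(t)}]$ and use $M_{t-1} \ge U_{t-1}[a_i^{(t)}]$. This gives
\[
M_t - M_{t-1} \;\le\; \vu_i^{(t)}[a_i^{(t)}],
\]
which is an upper bound, not a lower bound, so this direction does not yield the desired inequality.

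\emph{Concluding step.} The first bound already suffices:
\[
M_t - M_{t-1} \;\ge\; \vu_i^{(t)}[a_i^{(t-1)}] \;\ge\; \min\bigl\{\vu_i^{(t)}[a_i^{(t-1)}],\, \vu_i^{(t)}[a_i^{(t)}]\bigr\}.
\]
In fact it proves the stronger statement with $\vu_i^{(t)}[a_i^{(t-1)}]$ on the right-hand side, and the $\min$ form stated in the lemma is all that is needed in \cref{lem:regret_gap_increase_actions_1}.

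The only point that requires care is choosing the correct maximizer to plug into $M_t$: it has to be $a_i^{(t-1)}$. Tie-breaking in the $\arg\max$ does not matter, because any maximizer of $U_{t-1}$ can be used. The case $t=1$ follows with $M_0 = 0$, since all empty sums vanish.
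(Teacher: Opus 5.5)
Your proof is correct, and it differs from the paper's in exactly the step that matters. The paper decomposes through the \emph{current} maximizer, writing $M_t - M_{t-1} = \bigl(U_{t-1}[a_i^{(t)}] - U_{t-1}[a_i^{(t-1)}]\bigr) + \vu_i^{(t)}[a_i^{(t)}]$. It notes that the bracket is nonpositive and then asserts $M_t - M_{t-1} \ge \vu_i^{(t)}[a_i^{(t)}]$.

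As your ``second bound'' observes, dropping a nonpositive term gives the reverse inequality $M_t - M_{t-1} \le \vu_i^{(t)}[a_i^{(t)}]$. The paper's intermediate claim is in fact false in general. Take two actions with $U_{t-1} = (10, 9)$ and $\vu_i^{(t)} = (0, 5)$. Then $M_t - M_{t-1} = 14 - 10 = 4 < 5 = \vu_i^{(t)}[a_i^{(t)}]$. So the paper's argument does not establish the lemma as written. Yours does, by evaluating $M_t$ at the \emph{previous} maximizer $a_i^{(t-1)}$.

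Putting your two computations together gives the sandwich $\vu_i^{(t)}[a_i^{(t-1)}] \le M_t - M_{t-1} \le \vu_i^{(t)}[a_i^{(t)}]$. Hence the $\min$ in the statement always equals $\vu_i^{(t)}[a_i^{(t-1)}]$, and the lemma as stated is true.

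Your version also needs less downstream. In \cref{lem:regret_gap_increase_actions_1} it suffices that the previous maximizer $a_i^{(t-1)}$ can be chosen in $\{a_1(k-1), a_1(k)\}$. That is exactly what order preservation certifies from $\vx_1^{(t)}$ alone. You do not need the additional claim about the maximizer at round $t$, which the paper invokes somewhat loosely.
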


\begin{proof}
Define
\[
M_t \defeq \max_{a_i' \in \cA_i}\sum_{\tau=1}^{t} \vu_i^{(\tau)}[a_i'],
\qquad
M_{t-1} \defeq \max_{a_i' \in \cA_i}\sum_{\tau=1}^{t-1} \vu_i^{(\tau)}[a_i'].
\]
By definition of $a_i^{(t)}$ and $a_i^{(t-1)}$, we have
\(
M_t=\sum_{\tau=1}^{t} \vu_i^{(\tau)}[a_i^{(t)}],
\;
M_{t-1}=\sum_{\tau=1}^{t-1} \vu_i^{(\tau)}[a_i^{(t-1)}].
\)
Hence,
\begin{align}
M_t - M_{t-1}
&=
\sum_{\tau=1}^{t} \vu_i^{(\tau)}[a_i^{(t)}]
-
\sum_{\tau=1}^{t-1} \vu_i^{(\tau)}[a_i^{(t-1)}]
\notag\\
&=
\Bigg(\sum_{\tau=1}^{t-1} \vu_i^{(\tau)}[a_i^{(t)}]
-
\sum_{\tau=1}^{t-1} \vu_i^{(\tau)}[a_i^{(t-1)}]\Bigg)
+
\vu_i^{(t)}[a_i^{(t)}].
\label{eq:aux_min_max_eq1}
\end{align}
By optimality of $a_i^{(t-1)}$, we have
\(
\sum_{\tau=1}^{t-1} \vu_i^{(\tau)}[a_i^{(t-1)}]
\ge
\sum_{\tau=1}^{t-1} \vu_i^{(\tau)}[a_i^{(t)}],
\)
so this term in \eqref{eq:aux_min_max_eq1} is nonpositive and therefore
\begin{equation}
M_t - M_{t-1} \;\ge\; \vu_i^{(t)}[a_i^{(t)}].
\label{eq:aux_min_max_eq2}
\end{equation}
Finally,
\begin{equation}
\vu_i^{(t)}[a_i^{(t)}]
\;\ge\;
\min\!\left\{\vu_i^{(t)}[a_i^{(t-1)}],\,\vu_i^{(t)}[a_i^{(t)}]\right\}.
\label{eq:aux_min_max_eq3}
\end{equation}
Combining \eqref{eq:aux_min_max_eq2} and \eqref{eq:aux_min_max_eq3} proves the claim.
\end{proof}


\begin{lemma}
\label{lemma:aux_decrease_by_constant}
Let $m \ge 2$ and $\delta = \frac{1}{4m}$. Then, for every integer $k$ with
$3 \le k \le 2m-1$, it holds that
\[
(1-\delta)(k-1) - \delta(2m-1) \;\ge\; k-2.
\]
\end{lemma}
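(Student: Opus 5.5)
The inequality is linear in $k$, so the plan is to move everything to one side and reduce it to an upper bound on $k$ that the range $3 \le k \le 2m-1$ already gives. Expanding the left-hand side, the claim $(1-\delta)(k-1) - \delta(2m-1) \ge k-2$ becomes
\[
(k-1) - \delta(k-1) - \delta(2m-1) \;\ge\; k-2 .
\]
Cancelling $k-1$ on both sides, this is equivalent to
\[
1 \;\ge\; \delta\bigl((k-1) + (2m-1)\bigr) = \delta\,(k + 2m - 2).
\]

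Next I use the upper end of the range, $k \le 2m-1$. It gives $k + 2m - 2 \le 4m - 3$. With $\delta = \frac{1}{4m}$, the right-hand side is then at most
\[
\frac{4m-3}{4m} = 1 - \frac{3}{4m} < 1.
\]
This proves the reduced inequality, and hence the claim, for every integer $k$ in $[3, 2m-1]$. In fact it holds with slack $\frac{3}{4m}$.

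No step here is a real obstacle; the argument is a direct algebraic check. The only care needed is to use the upper bound $k \le 2m-1$ rather than the lower bound, since the bracketed quantity increases with $k$. The assumption $m \ge 2$ only ensures that the range of $k$ is nonempty. The lower bound $k \ge 3$ is not needed.
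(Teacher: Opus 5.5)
Your proposal is correct and follows essentially the same argument as the paper. Both subtract $k-2$ to reduce the claim to $\delta(k+2m-2)\le 1$, then use $k\le 2m-1$ and $\delta=\frac{1}{4m}$ to bound this by $\frac{4m-3}{4m}<1$.
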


\begin{proof}
Starting from the left-hand side and subtracting $k-2$, we obtain
\[
(1-\delta)(k-1) - \delta(2m-1) - (k-2)
=
1-\delta\bigl((k-1)+(2m-1)\bigr)
=
1-\delta(k+2m-2).
\]
It therefore suffices to show that $1-\delta(k+2m-2)\ge 0$, equivalently $\delta(k+2m-2)\le 1$.
Using $\delta=\frac{1}{4m}$ and $k\le 2m-1$, we have
\[
\delta(k+2m-2)
=
\frac{k+2m-2}{4m}
\le
\frac{(2m-1)+2m -2}{4m}
=
\frac{4m-3}{4m}
<1.
\]
Hence $1-\delta(k+2m-2)>0$, and thus
\[
(1-\delta)(k-1) - \delta(2m-1) \ge k-2,
\quad
\text{for}
\;
3 \leq k \leq (2m-1),
\]
as claimed.
\end{proof}


\section{Proofs from Section~\ref{sec:multiplayer}}
\label{appendix:multiplayer}

We conclude with the proofs from~\Cref{sec:multiplayer}.

\recursionsnake*

\begin{proof}
    We let $\calT_k$ be the set of time indices corresponding to $T_k$. Let $i$ be the unique player who has a positive best-response gap during $\calT_k$; uniqueness follows from the snake property of $P$ and the construction of the game. We denote by $a_i(k) \in \{0, 1\}$ the action of player $i$ corresponding to payoff $k$, and similarly for $a_i(k+1)$. $\mathcal{T}_k$ lasts as long as it takes so that $\sum_{\tau=1}^t \vu_i^{(\tau)}[a_i(k+1)] > \sum_{\tau=1}^t \vu_i^{(\tau)}[a_i(k)]$. This is so because $\fictp$ picks with probability $1$ the action with the highest cumulative utility; without any loss, we assume for convenience that ties are broken adversarially, so as to maximize the number of steps. 
    
    Now, for all times $t$ during the iterations in $\calT_{k-1}$, we have $\vu_i^{(t)}[a_i(k+1)] = \vu_i^{(t)}[a_i(k)] - (k-1)$. This holds because during those iterations $i$ was obtaining $(k-1)$ by playing $a_i(k)$ while switching to $a_i(k+1)$ would result in $0$ utility since the corresponding action is off the path.
    
    Furthermore, we claim that during all times in $\calT_{k-2}$ and $\calT_{k-3}$, it holds that $\vu_i^{(t)}[a_i(k+1)] \leq \vu_i^{(t)}[a_i(k)] - 1$. Specifically, during $\calT_{k-2}$, there exists some player $i' \neq i$ with positive best-response gap, for otherwise the snake property of $P$ would be violated---there would exist an edge between the vertex corresponding to $k - 2$ and the vertex corresponding to $k$. This means that during those iterations switching to $a_i(k+1)$ would lead to a utility smaller by at least $1$. During $\calT_{k-3}$, it is possible that $i$ has a positive best-response gap, in which case player $i$ eventually transitions from $a_i(k+1) = a_i(k-3)$ (actions are binary and $a_i(k+1) \neq a_i(k)$) to $a_i(k)$, which means that again $a_i(k)$ is the preferred action and satisfies $\vu_i^{(t)}[a_i(k)] \geq \vu_i^{(t)}[a_i(k+1)] + 1$. If it is not player $i$ who transitions, playing action $a_i(k+1)$ leads off the path, so the same inequality applies.
    
    As a result, if $t$ is the last round in $\mathcal{T}_k$,
    \begin{align*}
        \sum_{\tau=1}^t \vu_i^{(\tau)}[a_i(k+1)] - \sum_{\tau=1}^t \vu_i^{(\tau)}[a_i(k)] \leq T_k  -(k-1) T_{k-1} - T_{k-2} - T_{k-3} + \sum_{\kappa = 1}^{k-4} \kappa T_\kappa.
    \end{align*}
    Here we used the fact that $\vu_i^{(t)}[a_i(k+1)] = \vu_i^{(t)}[a_i(k)] + 1$ for all iterations $t$ in $\calT_k$, and during $\calT_\kappa$ we have $\vu_i^{(t)}[a_i(k+1)] \leq \vu_i^{(t)}[a_i(k)] + \kappa$. Since $t$ is the last round in $T_k$, it follows that $\sum_{\tau=1}^t \vu_i^{(\tau)}[a_i(k+1)] - \sum_{\tau=1}^t \vu_i^{(\tau)}[a_i(k)] > 0$. This implies
\[
    T_{k} \geq (k-1) T_{k-1} + T_{k-2} + T_{k-3} - (k-4) T_{k-4} - \dots - 1 T_1,
\]
as claimed.
\end{proof}

\factorial*

\begin{proof}
We define, for $k \ge 2$,
\[
    S_k \defeq T_k - (k-1)T_{k-1}.
\]
It suffices to prove that $S_k \ge 0$ for all $k \ge 2$, as it would imply $T_k \ge (k-1)T_{k-1}$, which in turn yields $T_k \ge (k-1)!$ since $T_3 = 2!$ Using the recurrence relation for $T_k$,
\begin{align}
    S_k &= T_k - (k-1)T_{k-1} \notag \\
        &\ge T_{k-2} + T_{k-3} - \sum_{j=1}^{k-4} j T_j. \label{eq:Sk_bound}
\end{align}
Let $R_k$ denote the right-hand side of \eqref{eq:Sk_bound}. We proceed by strong induction. We assume $R_\kappa \ge 0$ for all $2 \le \kappa < k$, which in turn implies $T_\kappa \ge (\kappa-1)T_{\kappa-1}$ for $2 \leq \kappa < k$. We will show that $R_k \geq 0$. The basis of the induction $k \in \{2, 3, 4 \}$ follows trivially, so we take $k \geq 5$.

We consider the difference
\begin{align*}
    R_k - R_{k-1} &= \left( T_{k-2} + T_{k-3} - \sum_{j=1}^{k-4} j T_j \right) - \left( T_{k-3} + T_{k-4} - \sum_{j=1}^{k-5} j T_j \right) \\
    &= T_{k-2} - T_{k-4} - (k-4)T_{k-4} \\
    &= T_{k-2} - (k-3)T_{k-4}.
\end{align*}

By the inductive hypothesis,
\[
    T_{k-2} \ge (k-3)T_{k-3} \ge (k-3)(k-4)T_{k-4}.
\]
So, $R_{k} - R_{k-1} \geq (k - 3)(k - 5)T_{k-4} \geq 0$. Given that $R_{k-1} \geq 0$, it follows that $R_k \geq 0$, and the claim follows.
\end{proof}

\end{document}